\newtheorem{theorem}{Theorem}[section]
\newtheorem{proposition}[theorem]{Proposition}
\newtheorem{lemma}[theorem]{Lemma}
\newtheorem{corollary}[theorem]{Corollary}
\newtheorem{conjecture}[theorem]{Conjecture}
\newtheorem{question}[theorem]{Question}
\theoremstyle{definition}
\newtheorem{definition}[theorem]{Definition}
\newcommand{\N}{\mathbb{N}}
\newcommand{\sA}{\mathbb{A}}
\newcommand{\sB}{\mathbb{B}}
\newcommand{\sC}{\mathbb{C}}
\newcommand{\sD}{\mathbb{D}}
\newcommand{\sF}{\mathbb{F}}
\newcommand{\sG}{\mathbb{G}}
\newcommand{\sH}{\mathbb{H}}
\newcommand{\sS}{\mathbb{S}}
\newcommand{\sX}{\mathbb{X}}
\newcommand{\sY}{\mathbb{Y}}
\newcommand{\EXPTIME}{{\sc ExpTime}}
\DeclareMathOperator\Aut{Aut}
\DeclareMathOperator\Pol{Pol}
\DeclareMathOperator\CSS{CSS}
\DeclareMathOperator\csp{CSP}
\DeclareMathOperator\ar{ar}
\newcommand\PSPACE{\textsc{Pspace}}
\newcommand\coNP{\ensuremath{\mathrm{co}\textsc{NP}}}
\newcommand\NP{\textsc{NP}}
\newcommand\dwnu{dissected weak near-unanimity}
\theoremstyle{remark}
\newtheorem*{remark}{Remark}
\newcommand{\Hrushovski}[1]{\Hru{#1}}
\newcommand{\sT}{\mathbb{T}}
\newcommand{\proj}{{\mathcal{P}}}
\newcommand{\clo}[1]{\mathscr{#1}}
\newcommand{\EA}{\hrushovski{\sA}}
\newcommand{\ignore}[1]{}
\newcommand{\DEA}{\Decode{\hrushovski{\sA}}}
\DeclareMathOperator{\ari}{ar}
\newcommand{\arity}[1]{\ari(#1)}
\newcommand{\ES}{\hrushovski{\sS}}
\newcommand{\DX}{\Decode{{\sX}}}
\newcommand{\set}[1]{\{#1\}}
\newcommand{\setm}[2]{\set{#1\mid#2}}
\newcommand{\card}[1]{|{#1}|}
\newcommand{\wu}{u}
\newcommand{\ww}{w}
\newcommand\PolComplexity{\textsc{P}}
\newcommand{\floor}[1]{\lfloor{#1}\rfloor}
\newcommand{\bigO}{\mathcal{O}}
\DeclareMathOperator{\Dec}{\mathbf{D}}
\DeclareMathOperator{\Blowen}{\overrightarrow{\mathbf{B}}}
\DeclareMathOperator{\Reducten}{\overrightarrow{\mathbf{R}}}
\DeclareMathOperator{\Hru}{\mathbf{E}}
\DeclareMathOperator{\Ext}{\overleftarrow{\mathbf{B}}}
\DeclareMathOperator{\Res}{\overleftarrow{\mathbf{R}}}
\DeclareMathOperator{\Com}{\mathbf{C}}
\newcommand{\Expansion}[1]{\Ext{#1}}
\newcommand{\Restriction}[1]{\Res{#1}}
\newcommand{\Completion}[1]{\Com{#1}}
\newcommand{\Decode}[1]{\Dec{#1}}
\newcommand{\coding}[2]{\Blowen_{#2}#1}
\newcommand{\reducten}[1]{\Reducten#1}
\newcommand{\hrushovski}[1]{\Hru{#1}}
\newcommand{\newage}[1]{{{#1}^+}} 
\newcommand{\sU}{\mathbb{U}}
\newcommand{\EC}{\hrushovski{\sC}}
\newcommand\Dwnu{Dissected weak near-unanimity}
\newcommand{\Projs}{\proj}
\renewcommand\proj{\clo{P}}
\begin{document}

\title[When symmetries are not enough]{When symmetries are not enough: a hierarchy of hard Constraint Satisfaction Problems}
\begin{abstract}

We produce a class of $\omega$-categorical structures with finite signature by applying a model-theoretic construction -- a refinement of the Hrushosvki-encoding -- to $\omega$-categorical structures in a possibly infinite signature. We show that the encoded structures retain  desirable algebraic properties 
of the original structures, but that the constraint satisfaction problems (CSPs) associated with these structures can be badly behaved in terms of computational complexity. This method allows us to systematically generate $\omega$-categorical templates whose CSPs  are  complete for a variety of complexity classes of arbitrarily high complexity, and $\omega$-categorical templates that show that membership in any given complexity class containing AC$^0$ cannot be expressed by a set of identities on the polymorphisms. It moreover enables us to prove that recent results about the relevance of topology on polymorphism clones of $\omega$-categorical structures also apply for CSP templates, i.e., structures in a finite language. Finally, we obtain a concrete algebraic criterion which could constitute a description of the delineation between tractability and NP-hardness in the dichotomy conjecture for first-order reducts of finitely bounded homogeneous structures.
\end{abstract}

\author[P. Gillibert]{Pierre Gillibert}
\author[J. Jonu\v{s}as]{Julius Jonu\v{s}as} 
\author[M. Kompatscher]{Michael Kompatscher}
\author[A. Mottet]{Antoine Mottet} 
\author[M. Pinsker]{Michael Pinsker}
  \thanks{%
Pierre Gillibert and Michael Pinsker have received funding from the Austrian Science Fund (FWF) through projects No {P27600} and  {P32337}. Michael Pinsker has also received funding from the Czech Science Foundation (grant No 18-20123S). 
Julius Jonu\v{s}as received funding from the Austrian Science Fund (FWF) through Lise Meitner grant No {M 2555}.
    Antoine Mottet has received funding from the European Research Council (ERC) under the European Union's Horizon 2020 research and innovation programme (Grant Agreement No 771005, CoCoSym).
Michael Kompatscher was supported by the grants PRIMUS/SCI/12 and UNCE/SCI/022 of Charles University Research Centre programs, as well as grant No 18-20123S of the Czech Science Foundation.
    }
    \thanks{A conference version of this material has appeared in the Proceedings of the 47th International Colloquium on Automata, Languages and Programming (ICALP 2020)~\cite{GJKMP-conf}.}

\maketitle


\section{Introduction}
\subsection{Constraint Satisfaction Problems}

The \emph{Constraint Satisfaction Problem}, or CSP for short, over a relational
structure $\sA$ is the computational problem of deciding whether a given finite
relational structure $\sB$ in the signature of $\sA$ can be homomorphically mapped into
$\sA$. The structure $\sA$ is known as the \emph{template} or
\emph{constraint language} of the CSP, and the CSP of the particular structure $\sA$ is
denoted by $\csp(\sA)$. A host of interesting computational problems can be modelled using
CSPs by choosing an appropriate 
template. For example, if $\sA$ is the structure with domain $\{0, 1\}$ and all
binary relations on the set $\{0, 1\}$, then $\csp(\sA)$ is precisely the
\textsc{2-SAT} problem, and if $\sA$ is the complete loopless graph on three 
vertices, then $\csp(\sA)$ is the $3$-colouring problem of graphs. Note that the template $\sA$ which defines the problem can also be infinite -- only the input structure $\sB$ is required to be finite in order to obtain a computational problem. Many well-known computational problems can be modelled, and can in fact only be modelled, using an infinite template. One example is the CSP of the order of the rational numbers $(\mathbb Q;<)$, which is equivalent to the problem of deciding whether a given finite directed graph is acyclic. The size of the signature of the template $\sA$, or in other words the number of its relations, is however generally required to be finite:  otherwise,  the encoding of its relational symbols might influence the computational complexity of $\csp(\sA)$, so that this complexity is not well-defined as per the structure $\sA$ itself. To emphasize the importance of this requirement, we shall henceforth call relational structures in a finite signature \emph{finite language structures} or, in statements about CSPs, \emph{CSP templates}.

\subsubsection{Finite-domain CSPs} The general aim in the study of CSPs is to understand the structural reasons for the hardness or the tractability of such problems. Structural reasons for tractability often take the form of some kind of symmetry; the goal then becomes to identify appropriate ways to measure the degree of symmetry of a problem, which should determine its complexity. This has been successfully achieved for CSPs of structures over a finite domain. As it turns out, every finite template either has, in a certain precise sense, as little symmetry as the $3$-colouring problem above, in which case its CSP is NP-complete; or it has more symmetry and its CSP is polynomial-time solvable, just like the $2$-SAT problem. 
This dichotomy result was conjectured by Feder and Vardi~\cite{FederVardiSTOC, FederVardi}, and proved, almost 25~years later, independently by Bulatov~\cite{BulatovFVConjecture} and
Zhuk~\cite{ZhukFVConjecture}.
\begin{theorem}[Bulatov~\cite{BulatovFVConjecture}, Zhuk~\cite{ZhukFVConjecture}]
  \label{theorem:BulZhu}
  Let $\sA$ be a finite CSP template. Then one of the following holds.
  \begin{itemize}
	\item $\sA$ is preserved by a 6-ary function $s$ on its domain satisfying the equation
	    \[
      s(x, y, x, z, y, z) = s(y, x, z, x, z, y)
    \] 
for all possible values $x,y,z$, and $\csp(\sA)$ is in P;
    \item $\sA$ is not preserved by such a function, and $\csp(\sA)$ is NP-complete.
  \end{itemize}
\end{theorem}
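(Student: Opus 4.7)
The plan is to invoke the algebraic approach to CSPs. The starting point is the well-known correspondence, going back to Jeavons and developed extensively since, that for a finite relational structure $\sA$ the complexity of $\csp(\sA)$ is determined by the polymorphism clone $\Pol(\sA)$: two finite structures whose polymorphism clones satisfy the same equational theory (in an appropriate minor-preserving sense) have polynomially equivalent CSPs. The 6-ary identity in the theorem is Siggers' identity, which by a theorem of Siggers is equivalent, on a finite idempotent algebra, to the existence of a non-trivial \emph{Taylor term}; so the statement factors into two implications, one purely algebraic and one algorithmic.

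For the hardness half, I would first reduce to the case where $\sA$ is a core, and then expand $\sA$ by all singleton unary relations so that $\Pol(\sA)$ becomes idempotent, a step that preserves the complexity of $\csp(\sA)$ up to polynomial time. If no polymorphism satisfies the Siggers identity, then by Siggers' theorem combined with the Bulatov--Jeavons--Krokhin characterisation of Taylor algebras, the idempotent expansion has a two-element quotient on which every term operation is a projection. From such a quotient one primitive-positively interprets the structure $(\{0,1,2\};\neq)$, and a pp-interpretation with parameters reduces graph $3$-colouring to $\csp(\sA)$, yielding \NP-hardness.

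The tractability half is the deep direction and would constitute essentially the entire proof. I would follow one of the two independent routes in the literature: Bulatov's approach, which assigns to each algebra a ``coloured graph'' recording where local behaviour is affine, semilattice, or majority, and then recursively decomposes instances along the edges of these graphs; or Zhuk's approach, based on absorbing and strongly subdirectly irreducible factors together with a delicate treatment of the ``linear'' (affine modulo a prime) case, combining constraint propagation with a Gaussian-elimination-style solver and a careful analysis ensuring that the algorithm's local decisions remain globally consistent. Both routes rest on heavy universal-algebraic machinery — tame congruence theory, absorption theory, cyclic and weak near-unanimity terms, and a thorough structure theory of minimal Taylor algebras.

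The main obstacle is unambiguously the tractability direction: there is no known short proof, and both known arguments occupy hundreds of pages. The crux is the interaction between the affine and the non-affine parts of a Taylor algebra; any algorithm must recognise when a subproblem is essentially a system of linear equations over some finite field and solve it algebraically, while simultaneously ensuring that this algebraic reduction does not destroy information needed by subsequent propagation steps. Making this orchestration work uniformly across the entire class of Taylor algebras is precisely where the technical depth of Bulatov's and Zhuk's proofs lies, and any proof plan must be judged primarily by how it addresses this step.
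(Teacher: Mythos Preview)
The paper does not contain a proof of this theorem: it is stated in the introduction as a known result, attributed to Bulatov and Zhuk, and is used purely as background to motivate Conjecture~\ref{conjecture:infinitecsp} and the paper's own results on infinite-domain CSPs. There is therefore no ``paper's own proof'' to compare your proposal against.

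Your outline is a reasonable high-level summary of how the actual proofs in the cited works proceed, and you correctly identify that the hardness direction is classical (core reduction, idempotent expansion, Siggers/Taylor equivalence, pp-interpretation of a hard structure) while the tractability direction is the deep part requiring the full machinery of Bulatov's coloured-graph decomposition or Zhuk's absorption-and-affine analysis. But none of this is needed for the present paper, which simply quotes the dichotomy as established fact; if this were submitted as part of the paper it would be out of place, since the authors make no claim to reprove the finite-domain dichotomy.
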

In this formulation of the dichotomy theorem, symmetry of $\sA$ is thus measured by the presence or absence of a $6$-ary function satisfying the above equation among the functions which preserve $\sA$. A finitary function on the domain of $\sA$ \emph{preserves $\sA$} if it is a homomorphism from the appropriate power of $\sA$ into $\sA$. Such functions are called \emph{polymorphisms} of $\sA$, and the set of all polymorphisms of $\sA$ is denoted by $\Pol(\sA)$. Polymorphisms are commonly perceived as `higher-order symmetries' of a relational structure akin to automorphisms; universally quantified equations which are satisfied by some polymorphisms of a structure are called \emph{identities} of the structure.

The fact that the identities of a structure $\sA$ are, for finite $\sA$, an
appropriate notion of measuring the degree of symmetry of $\sA$ in the context
of CSPs was already known long before the proof of the dichotomy
theorem~\cite{JBK}, and even before the equation of Theorem~\ref{theorem:BulZhu} was
discovered in~\cite{Siggers}; this fact is commonly referred to as the \emph{algebraic
approach to CSPs}. In fact, an equivalent formulation of
Theorem~\ref{theorem:BulZhu} can be given without mentioning this particular
equation: for a finite CSP template $\sA$, we have that $\csp(\sA)$ is in P if $\sA$ satisfies  some
non-trivial set of \emph{height~1 identities} (short: \emph{h1~identities}), and NP-complete otherwise.
An h1~identity is an identity of the form
$$
f(x_1,\ldots,x_m)=g(y_1,\ldots,y_n)\; ,
$$
where $f,g$ are function symbols and $x_1,\ldots,x_m,y_1,\ldots,y_n$ are variables; a set of such identities is non-trivial if it is not satisfied by all structures. The prefix `height~1' refers to the fact that $f,g$ are function symbols, rather than possibly nested terms of such symbols, as would be allowed in arbitrary identities. The insight that the complexity of the CSP of a finite structure only depends, up to polynomial-time reductions, on its h1~identities was obtained rather recently in~\cite{wonderland}.

\subsubsection{Infinite-domain CSPs}  
One advantage of modelling computational problems as CSPs is that certain
subclasses of CSPs are susceptible to a uniform mathematical approach, and the
two dichotomy proofs for templates over finite domains bear witness to its
power. The algebraic approach behind these proofs, however, does not require the
template to be finite; certain ``smallness" assumptions, to be discussed later,
are sufficient to allow for a natural adaptation. And although every
computational decision problem is polynomial time equivalent to the CSP of some
infinite template~\cite{BodirskyGrohe}, for a large and natural class of
infinite-domain CSPs, which considerably expands the class of finite-domain CSPs,
a similar dichotomy conjecture as for finite-domain CSPs has been formulated: namely, for 
the class of all \emph{first-order reducts of finitely bounded homogeneous
structures}. The following formulation of the conjecture is a slight
reformulation of the one  proposed in~\cite{wonderland}, and has been proved to
be equivalent to earlier and substantially different formulations
in~\cite{Barto:2017aa}. 

\begin{conjecture}
  \label{conjecture:infinitecsp}
  Let $\sA$ be a CSP template which is a first-order reduct of a finitely bounded homogeneous structure. Then
  one of the following holds.
  \begin{itemize}
    \item $\sA$ satisfies some non-trivial set of h1 identities locally, i.e., on every finite subset of its domain, and\/ $\csp(\sA)$ is in P; 
    \item there exists a finite subset of its domain on which $\sA$ satisfies no non-trivial set of h1 identities, and\/ $\csp(\sA)$ is NP-complete.
  \end{itemize}
\end{conjecture}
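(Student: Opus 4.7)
Since Conjecture~\ref{conjecture:infinitecsp} is a well-known open problem, what follows is a strategic outline for how one might attack it, not a claim of resolution. The statement is a dichotomy, so I would split the argument into the hardness and tractability directions, which demand quite different tools.

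For the hardness direction, the plan is to lift the argument of~\cite{wonderland} to the infinite setting. Assume that on some finite $F\subseteq A$ the structure $\sA$ satisfies no non-trivial h1 identities. By a standard compactness argument, the restriction of $\Pol(\sA)$ to $F$ then admits an h1-clone homomorphism to the clone of projections, and combined with the pp-construction formalism of~\cite{Barto:2017aa} this should yield a pp-construction of some NP-hard finite template in the expansion of $\sA$ by constants for the elements of $F$. Since $\sA$ is a reduct of a homogeneous structure, orbits of tuples over $\emptyset$ are pp-definable and the finitely many named constants can be eliminated by a log-space reduction, giving NP-hardness of $\csp(\sA)$ itself.

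For the tractability direction, one would need a uniform polynomial-time algorithm from the assumption of local non-trivial h1 identities. I would proceed in three stages: first, use finite boundedness and $\omega$-categoricity to preprocess the instance, so that orbit types of bounded-size tuples are read off in finite time and the input is cast in terms of finitely many relation symbols over a homogeneous backbone; second, from the local identities extract, via a canonisation argument over $\sA$, canonical polymorphisms satisfying Siggers- or weak-near-unanimity-style identities at the level of orbits of tuples; third, use these canonical operations to drive a Bulatov- or Zhuk-type algorithm, with absorption, strong subuniverses, and bounded width reinterpreted to operate on orbits rather than on individual elements of~$A$. Each stage is highly non-trivial, and in particular the orbit-level algorithm must cope with the fact that unions of orbits are not themselves pp-definable in general.

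The main obstacle, in my view, is isolating the precise feature of \emph{finitely bounded} (as opposed to merely $\omega$-categorical) structures that makes the algebraic approach work. The results of the present paper --- in particular the refined Hrushovski encoding --- show that over general $\omega$-categorical templates h1 identities cannot characterise membership in any complexity class containing AC$^0$, so the tractability direction must genuinely exploit finite boundedness, most plausibly to control how local polymorphism data propagates to a global algorithmic strategy. Pinning down a concrete algebraic criterion that makes this propagation effective, such as the \dwnu{} condition announced in the abstract, and verifying that it suffices for tractability, is what I would expect to be the decisive step.
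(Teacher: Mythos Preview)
The statement is a conjecture, and the paper does not prove it; indeed, the paper explicitly remarks after Conjecture~\ref{conjecture:infinitecsp} that the NP-hardness direction is already known from~\cite{wonderland}, while the tractability direction is \emph{the missing part}. So there is no proof in the paper to compare your proposal against.

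Taken as a research outline rather than a proof, your proposal is reasonable but contains one inaccuracy and, of course, a large gap. In the hardness direction, your claim that orbits are pp-definable because $\sA$ is a reduct of a homogeneous structure is not correct in general: homogeneity gives first-order definability of orbits, not pp-definability. The standard route is to pass first to the model-complete core of $\sA$ (which is again a first-order reduct of a finitely bounded homogeneous structure), where orbits \emph{are} pp-definable; the second item of the conjecture then transfers to the core, and one applies the pp-construction machinery of~\cite{wonderland,Barto:2017aa} there. With that correction your sketch matches the known argument.

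For the tractability direction you correctly identify the genuine open problem. Your three-stage plan (preprocessing, canonisation to extract polymorphisms acting on orbits, orbit-level Bulatov/Zhuk algorithm) is the consensus strategy and has succeeded in all the classified subclasses listed after the conjecture. But none of the three stages is currently a theorem at the generality required: canonisation can fail to produce polymorphisms with useful identities even when non-trivial local h1~identities exist, and no orbit-level analogue of absorption or Zhuk's strong subalgebras is available in general. Your final paragraph is apt: the results of the present paper show precisely that without finite boundedness the local-to-global passage fails, so any proof must exploit that hypothesis in a way not yet understood. In short, your proposal is an honest map of the territory, not a proof, and you should present it as such.
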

The conjectured P/NP-complete dichotomy has been demonstrated for numerous subclasses: for example for all CSPs in the class
MMSNP~\cite{MMSNP}, as well as for the CSPs of the first-order reducts of $(\mathbb{Q};
<)$~\cite{tcsps-journal}, of
any countable homogeneous graph~\cite{BMPP16} (including the random graph~\cite{BodPin-Schaefer-both}), and of the random poset~\cite{CSP-poset}. 

It is thus the \emph{local h1~identities}, i.e., the h1~equations which are
true for the polymorphisms of $\sA$ on finite subsets of its domain, which are
believed to be the right measure of symmetry of $\sA$ -- according to the
conjecture, they determine tractability or hardness of its CSP. We should
mention that similarly to finite templates, for all templates in the range of the
conjecture the corresponding CSP is in NP; moreover, NP-completeness of such CSPs follows
from the existence of a finite subset of the domain on which $\sA$ satisfies no
non-trivial set of h1~identities (the condition of the second item), by more
general results from~\cite{wonderland}. The missing part is hence a proof that
local symmetries in the form of non-trivial h1~identities on all finite sets
imply tractability of the CSP. This calls for the quest for the structural
consequences of this situation, and in particular, whether such local
symmetries imply global symmetries, i.e., identities which hold globally.

\subsection{What is symmetry?} 
\subsubsection{Local vs.~global symmetries, and topology} 


One of the differences between Theorem~\ref{theorem:BulZhu} and
Conjecture~\ref{conjecture:infinitecsp} is the consideration of local identities in the latter. An appropriate topology on the polymorphisms of a structure allows to reformulate the difference between local and global identities, as follows.

By definition, a set of h1~identities is non-trivial if it is not satisfied by all structures; this is equivalent to not being satisfied by the projections on a 2-element domain. Denoting the set of these projections by $\proj$, the set of all identities of $\sA$ is thus trivial if and only if there is a mapping $\xi\colon \Pol(\sA)\to\proj $ which preserves h1~identities; such mappings are called \emph{minion homomorphisms}. Similarly, $\sA$ satisfies only trivial h1~identities on some finite subset $F$ of its domain if and only if there is a mapping $\xi\colon \Pol(\sA)\to\proj$ which preserves all h1~identities which are true on $F$ (i.e., for values of the variables ranging within $F$ only); such mappings are called \emph{uniformly continuous minion homomorphisms}, and are indeed uniformly continuous with respect to the natural uniformity which induces the 
\emph{pointwise convergence topology} on finitary functions. Hence, the question whether non-trivial local h1 identities imply  non-trivial global h1 identities in a relational structure $\sA$ raises the following questions:
\begin{enumerate}[label = \rm{ (\arabic*)}]
  \item Is every minion homomorphism 
  from $\Pol(\sA)$ to $\proj$ uniformly continuous?
  \item Does the existence of a minion
  homomorphism from $\Pol(\sA)$ to $\proj$ imply the existence of a 
   uniformly continuous minion homomorphism  from $\Pol(\sA)$ to $\proj$? 
\end{enumerate} 

Clearly, when $\sA$ is finite, the distinction between local and global is void, and hence the answer to both (1) and (2) is positive. For general infinite $\sA$, the questions have been answered negatively 
in~\cite{BPP-projective-homomorphisms} and~\cite{BP19}, respectively. One
of the main problems of the mathematical theory of infinite-domain CSPs is to investigate which
assumptions on an infinite structure are sufficient to force the answer to the
questions to be positive -- in particular, whether the assumptions of 
Conjecture~\ref{conjecture:infinitecsp} imply such positive answer, in which
case we could omit the consideration of local rather than global identities in
its formulation. Question~(2) is the one truly relevant for CSPs; the first question 
is relevant in that a positive answer to~(1) provides a particularly strong
proof of  a positive answer to~(2).

\subsubsection{A uniform notion of symmetry?} The second difference between the dichotomy theorem for finite templates 
and Conjecture~\ref{conjecture:infinitecsp} is that in the former, tractability
is characterised by a concrete h1~identity of $\sA$. This difference is
essential and, in fact, tightly linked to the two questions above. The
importance of a fixed set of h1~identities lies in the fact that it provides
one uniform reason for tractability, which is not only pleasing aesthetically, 
but also paves the way to a uniform algorithm witnessing said tractability. The
connection with questions~(1) and~(2) above is that if the same fixed set of
h1~identities is true locally in a structure, then it is true globally, under a
mild assumption on the structure which largely comprises the range of
Conjecture~\ref{conjecture:infinitecsp} (that of $\omega$-categoricity -- see
Figure~\ref{fig:hierarchy}).

\begin{enumerate}[resume]
   \item Is there a fixed set $\Sigma$ of h1~equations such that every structure $\sA$ satisfying some 
         non-trivial h1 identities locally must satisfy $\Sigma$ globally? Failing that, is there a
         fixed ``nice" family $(\Sigma_n)_{n\geq 1}$ of sets of h1~equations such that every such structure must, on every finite subset of its domain, satisfy one of the sets of the family?
\end{enumerate} 
The answer of the first and stronger formulation of~(3) is positive in the finite case~\cite{Siggers}; in the general infinite case it is negative (folklore). 

\subsection{A hierarchy of smallness assumptions}
\begin{figure}
                \includegraphics[height=60mm]{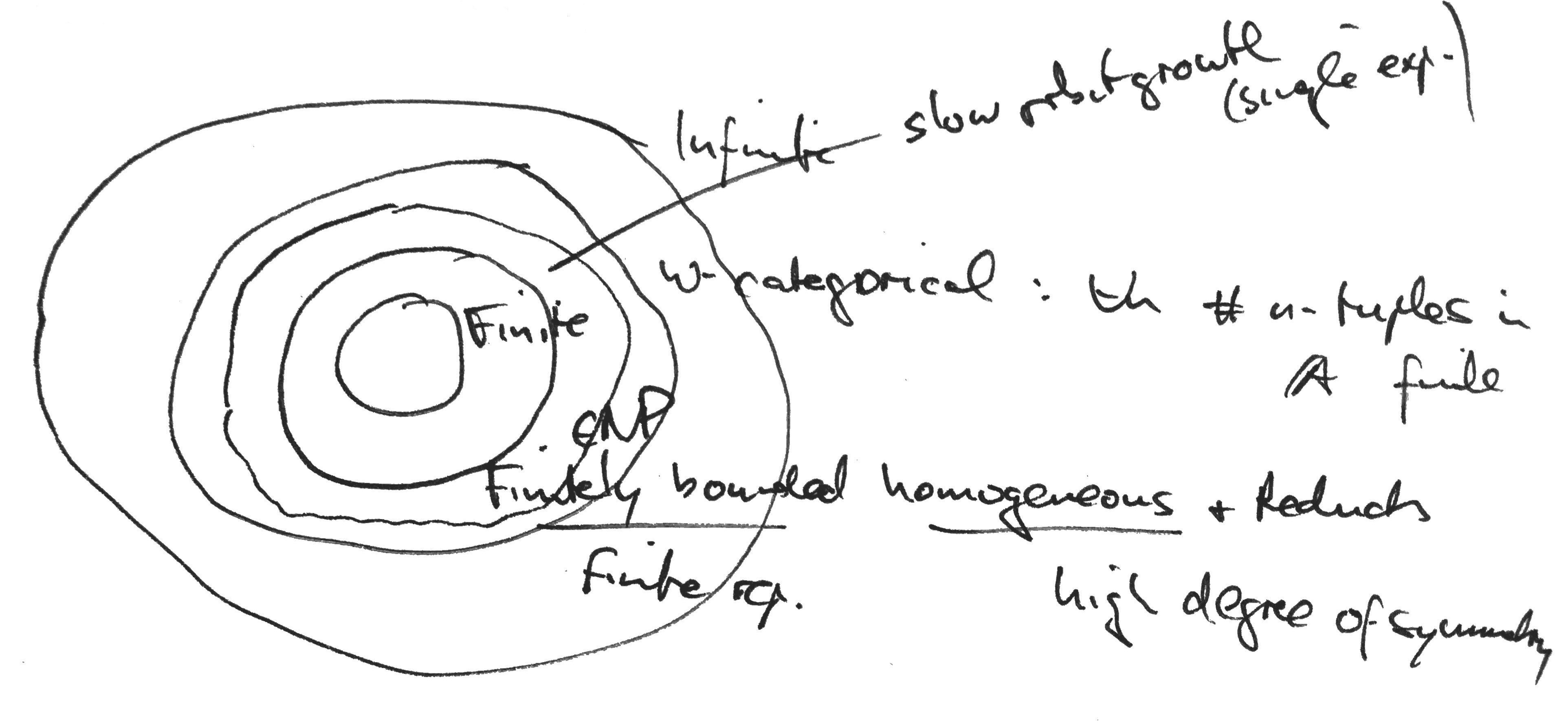}\caption{A vague representation of the hierarchy}
                \label{fig:hierarchy}
\end{figure}
A variety of restrictions of the class of all CSPs have been considered in the past in the search for a class for which a full complexity classification, and an understanding of the kind of symmetry which implies tractability, is feasible. Often, these restrictions take the form of ``smallness assumptions" on the relational structure defining the CSP. Such assumptions include restrictions on the size of the domain (Boolean, three elements, finite), or the range of Conjecture~\ref{conjecture:infinitecsp}. 

\subsubsection{The number of orbits of $n$-tuples.}\label{subsub:orbits}
The algebraic-topological approach outlined above, however, works in theory for a much larger class, namely the class of \emph{$\omega$-categorical} structures. A countable structure $\sA$ is $\omega$-categorical if its automorphism group $\Aut(\sA)$ acts with finitely many orbits on $n$-tuples, for all $n\geq 1$.
It is known that for $\omega$-categorical structures, the complexity of the CSP only depends on the polymorphisms of the template, viewed as a  topological clone~\cite{Topo-Birk}. Moreover, $\omega$-categoricity is sufficient to imply NP-hardness of the CSP if a structure satisfies no non-trivial h1~identities locally, i.e., on some finite set~\cite{wonderland}.

Although the class of $\omega$-categorical structures is far too vast to allow for a full complexity classification, two purely mathematical results nourished hope that this assumption, or strengthening thereof which are much milder than the assumptions of Conjecture~\ref{conjecture:infinitecsp},  could force the answers to Questions~(2) and~(3) to be positive.
The first result stated that under $\omega$-categoricity, local satisfaction of non-trivial h1~identities implies the global satisfaction of the (non-h1) \emph{pseudo-Siggers identity}~\cite{BartoPinskerDichotomy, BP19} -- a result entitled \emph{Topology is irrelevant}, in allusion to the local to global implication. The second result showed that if in the requirement for $\omega$-categoricity, the number of orbits of $\Aut(\sA)$ on $n$-tuples grows less than double exponentially in $n$ (a condition satisfied by all structures within the range of Conjecture~\ref{conjecture:infinitecsp}, and referred to as \emph{slow orbit growth} in this context), and $\sA$ is a \emph{model-complete core}, then the converse holds as well~\cite{Eq-oligo-CSP, Barto:2017aa}. The assumption of $\sA$ being a \emph{model-complete core} is not restrictive in the sense that every CSP of an $\omega$-categorical structure is equal to the CSP of an $\omega$-categorical structure which is a model-complete core.

\subsubsection{Finite language and finite relatedness.}
In~\cite{Bodirsky:2019aa, BMOOPW}, however, a counterexample to  Question~(2) was provided which was an $\omega$-categorical model-complete core with slow orbit growth -- a result referred to as \emph{Topology is relevant}, for obvious reasons. The counterexample lies clearly outside the range of Conjecture~\ref{conjecture:infinitecsp}, though -- and most importantly, it is not a CSP template since it  does not have a finite language! This drawback is in a sense inherent in the construction, since the structure provided is obtained as a ``generic superposition" of an infinite number of unrelated structures. 

The condition on structures of having a finite language is tacitly present in the context of CSPs by definition, and none of the above-mentioned smallness conditions, which have existed in model theory independently of the study of CSPs for many years, requires it. It is, indeed, a smallness condition itself, which however has not yet been utilized in the abstract mathematical theory of infinite-domain CSPs. On the other hand, in its role as a smallness condition, it has a long history in classical universal algebra: there, a finite algebra is called \emph{finitely related} if its term functions are the polymorphisms of a finite language structure.  For example, one of the most recent and spectacular results about finitely related finite algebras states that any such algebra in a \emph{congruence modular
variety} has \emph{few subpowers}~\cite{barto-cm}; consequently, any such algebra in a \emph{congruence distributive
variety} satisfies a \emph{near unanimity
identity}~\cite{BartoFinitelyRelated}. The polymorphisms of a CSP template always form, by definition, a finitely related algebra; in the light of the numerous deep results on such algebras, it is thus very well conceivable that the additional condition of a finite language on $\omega$-categorical structures could allow for stronger conclusions regarding their identities.

\subsection{Results} We refine a model-theoretic trick due to Hrushovski~\cite{HodgesLong} to encode $\omega$-categorical structures with an infinite signature into $\omega$-categorical finite language structures while preserving essential properties of the original, obtaining the following results.

\subsubsection{CSPs with local, but no global h1 identities} We provide a negative answer to question~(2) for finite language  structures by encoding the original counterexample from~\cite{Bodirsky:2019aa, BMOOPW}.

\begin{theorem}\label{thm:main:localnoglobalh1}
  There is an $\omega$-categorical finite-language structure $\sU$ with slow
  orbit growth such that there exists a minion homomorphism from $\Pol(\sU)$ to
  $\proj$, but no uniformly continuous one.
\end{theorem}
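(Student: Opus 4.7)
The plan is to realize $\sU$ as $\hrushovski{\sA}$, the image under the refined Hrushovski-encoding developed in the earlier sections of the paper, applied to the $\omega$-categorical structure $\sA$ from~\cite{Bodirsky:2019aa, BMOOPW} that already settles question~(2) negatively in the infinite-signature setting. That structure is a model-complete core with slow orbit growth which admits a minion homomorphism $\xi \colon \Pol(\sA) \to \proj$ but no uniformly continuous one; its only defect, from the point of view of this theorem, is that its language is infinite. The encoding is designed precisely to repair this defect.

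First I would unpack what has already been proved about $\hrushovski{\cdot}$ in the preceding sections and check that the input $\sA$ satisfies the hypotheses needed for the construction. Granting this, the encoded structure $\sU = \hrushovski{\sA}$ is $\omega$-categorical in a finite signature by construction, and the quantitative analysis of the orbit growth of $\hrushovski{\sA}$ carried out earlier should give slow orbit growth of $\sU$ from slow orbit growth of $\sA$. Next, I would use the transfer of the clone-theoretic data: the construction comes equipped with a natural continuous (indeed, uniformly continuous) minion homomorphism $\Pol(\sU) \to \Pol(\sA)$ (or a suitable variant, such as going via $\Pol(\sA)$ composed with a decoding step). Composing $\xi$ with this map yields a minion homomorphism $\Pol(\sU) \to \proj$, which establishes the existence part of the statement.

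The main obstacle, and the place where the specific refinement of Hrushovski's trick is essential, is the non-existence of a uniformly continuous minion homomorphism on $\Pol(\sU)$. The strategy is the usual descent argument: suppose for contradiction that $\eta \colon \Pol(\sU) \to \proj$ is uniformly continuous; then by the properties of the encoding, $\eta$ should induce a uniformly continuous minion homomorphism $\Pol(\sA) \to \proj$, contradicting the choice of $\sA$. Concretely, one has to argue that a finite witness set $F \subseteq \sU$ for the uniform continuity of $\eta$ can be pushed down, through the decoding, to a finite witness set $F' \subseteq \sA$ for a corresponding homomorphism on $\Pol(\sA)$. This is delicate because the encoding replaces each element of $\sA$ by many coding elements at different levels and because polymorphisms of $\sU$ need not come from polymorphisms of $\sA$; the key is the construction of a section (on finite subsets) from $\Pol(\sA)$ to $\Pol(\sU)$ that respects h1~identities up to the scale determined by $F$, so that any h1~identity of $\Pol(\sA)$ failing on $F'$ would be detected by $\eta$ on the coded copy of $F'$ inside $F$. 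Once this descent is available, the theorem follows immediately from the result of~\cite{Bodirsky:2019aa, BMOOPW}.
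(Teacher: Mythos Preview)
Your approach agrees with the paper on the easy half: $\sU=\hrushovski{\sA}$ is a finite-language $\omega$-categorical structure with slow orbit growth by the transfer results of Section~\ref{section:finitelang}, and composing the restriction map $\Pol(\sU)\to\Pol(\sA)$ of Proposition~\ref{prop:ucclonehomo} with the known minion homomorphism $\Pol(\sA)\to\proj$ gives the existence part. This is exactly what the paper does.

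The gap is in the non-existence argument. Your proposed descent---from a hypothetical uniformly continuous $\eta\colon\Pol(\sU)\to\proj$ down to one on $\Pol(\sA)$---requires a section $\Pol(\sA)\to\Pol(\sU)$ that preserves h1~identities locally, and the paper states explicitly at the opening of Section~\ref{section:local} that it is \emph{not known} whether local satisfaction of non-trivial h1~identities transfers from $\sA$ to $\hrushovski{\sA}$ in general. The only extension result available, Lemma~\ref{lemma:encodingextensions}, lifts injective polymorphisms of $\sA$ to polymorphisms of $\sU$ only after composing with a self-embedding, so what you get is control over pseudo-h1 identities (Proposition~\ref{prop:pseudo-h1-ids}), not h1~identities. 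Your ``section on finite subsets respecting h1~identities up to scale'' is precisely the missing ingredient, and the paper notes that the indirect strategy from~\cite{Bodirsky:2019aa} ``turned out infeasible for $\ES$''.

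The paper's route is genuinely different: rather than descending, it works directly on $\hrushovski{\sS_{\delta,\alpha}}$ and exhibits concrete non-trivial h1~identities---the \dwnu\ identities---that hold on every finite subset. This is done via Theorem~\ref{thm:embtoPham}, whose hypotheses (an embedding of the square, and a bound on arities of relations holding on a given finite set) are verified for the blow-up $\coding{\sS_{\delta,\alpha}}{\sH_{\delta,\alpha}}$ using Lemma~\ref{lemma:encodingextensions}(4) and the specific combinatorics of $\sH_{\delta,\alpha}$ from Lemma~\ref{lemma:CSSconstruction}. In short, the paper replaces your abstract descent by a hands-on construction of the local identities, and this is where Theorem~\ref{thm:embtoPham} and the \dwnu\ machinery earn their keep.
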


We also encode a counterexample to question~(1) from~\cite{BPP-projective-homomorphisms}  for \emph{clone homomorphisms}, which are mappings preserving arbitrary (not only h1) identities, into a finite language. Clone homomorphisms  appear in the original (and equivalent~\cite{Barto:2017aa, Eq-oligo-CSP}) formulation of Conjecture~\ref{conjecture:infinitecsp}, as given in~\cite{BPP-projective-homomorphisms,BartoPinskerDichotomy, BP19}. 

\begin{theorem}\label{thm:main:localnoglobal}
There exists an $\omega$-categorical finite-language structure $\sU$ 
  with a clone homomorphism from $\Pol(\sU)$ to $\proj$ that is not uniformly continuous.
\end{theorem}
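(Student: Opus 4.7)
The plan is to apply the refined Hrushovski encoding $\hrushovski{\cdot}$ developed in this paper to the counterexample constructed in~\cite{BPP-projective-homomorphisms}. That reference provides an $\omega$-categorical structure $\sA$ (necessarily in an infinite signature) whose polymorphism clone $\Pol(\sA)$ admits a clone homomorphism $\xi\colon \Pol(\sA) \to \proj$ that is not uniformly continuous. I would set $\sU := \hrushovski{\sA}$; by the preservation properties of the encoding established earlier in the paper, $\sU$ is $\omega$-categorical and, crucially, has a finite signature, so it is an admissible candidate for the theorem.

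Next, I would transport $\xi$ to a clone homomorphism on $\Pol(\sU)$ by composing with a continuous clone homomorphism $\pi\colon \Pol(\sU) \to \Pol(\sA)$ supplied by the decoding $\Dec$. The idea is that every polymorphism of $\sU$, when restricted to (or read off from) the points that encode elements of $\sA$, yields a polymorphism of $\sA$; this assignment preserves arities, compositions, and projections, and is automatically continuous because its value at a given arity depends on the input polymorphism only through its action on finitely many encoded points. The composition $\xi \circ \pi\colon \Pol(\sU) \to \proj$ is then a clone homomorphism, so the existence half of the statement is handled.

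The main obstacle, and the heart of the argument, is to show that $\xi \circ \pi$ fails to be uniformly continuous. For this I would argue by contradiction, assuming uniform continuity and deducing that $\xi$ itself is uniformly continuous on $\Pol(\sA)$, contradicting the choice of $\sA$. What is needed is a \emph{lifting} property: for every finite subset $F \subseteq A$ and every partial operation of $\sA$ defined on $F$ and extendable to a polymorphism, there is a correspondingly small finite subset $\tilde F \subseteq U$ and a polymorphism $\tilde f$ of $\sU$ whose image under $\pi$ agrees with $f$ on $F$. Combining this lift with the hypothetical uniform continuity of $\xi \circ \pi$ on $\tilde F$ then transfers back to uniform continuity of $\xi$ on $F$. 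Verifying that the refined encoding produces such lifts compatibly with the pointwise topology, in a way that is uniform over all finite $F$, is the technically delicate point; it is precisely the reason the generic Hrushovski construction needs to be refined in this paper.

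Finally, I would note that a minor variant of the same scheme, with clone homomorphisms replaced by minion homomorphisms throughout, is the expected route to the h1-analogue stated above, but for the present theorem only the clone-level preservation of $\xi$ is needed, and the existence of $\tilde f$ as above is the single nontrivial input beyond the general properties of $\hrushovski{\cdot}$.
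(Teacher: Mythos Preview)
Your setup is exactly the paper's: take the structure $\sC$ from~\cite{BPP-projective-homomorphisms}, set $\sU:=\hrushovski{\sC}$, and let $\xi':=\xi\circ\gamma$ where $\gamma\colon\Pol(\sU)\to\Pol(\sC)$ is restriction to $P^{\sU}$. The paper's Proposition~\ref{prop:ucclonehomo} gives you $\gamma$, and Proposition~\ref{prop:omegacat} gives $\omega$-categoricity of $\sU$.

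The gap is in your argument that $\xi'$ is not uniformly continuous. Your contradiction scheme requires the following: given a uniform-continuity witness $B\subseteq U$ for $\xi'$, and given $f_1,f_2\in\Pol(\sC)$ agreeing on $(B\cap C)^n$, produce lifts $\tilde f_1,\tilde f_2\in\Pol(\sU)$ that agree on \emph{all} of $B^n$ and satisfy $\xi(\gamma(\tilde f_i))=\xi(f_i)$. The paper's lifting lemma (Lemma~\ref{lemma:encodingextensions}) and its density corollary (Corollary~\ref{cor:density}) only tell you that for injective $f$ there is some $\tilde f$ with $\gamma(\tilde f)$ agreeing with $u\circ f$ on a prescribed finite subset of $C$; they give no control over the behaviour of $\tilde f$ on $B\setminus C$, and none over the uniform dependence of $\tilde f$ on $f$. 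Without that, you cannot conclude $\tilde f_1|_{B^n}=\tilde f_2|_{B^n}$, so the hypothetical uniform continuity of $\xi'$ never gets invoked. (Your description also has the quantifier order reversed: you want to start from the witness $B\subseteq U$ and find $F\subseteq C$, not the other way round.) More fundamentally, density of $\operatorname{im}\gamma$ in the injective polymorphisms of $\sC$ does not let you transfer uniform continuity of $\xi'$ back to $\xi$, because $\xi$ is precisely \emph{not} assumed continuous.

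The paper bypasses this by working forward rather than by contradiction, exploiting the explicit structure of $\xi$. It uses that the witnesses $f_J$ for the discontinuity of $\xi$ (Proposition~\ref{prop:discontrecap}(a)) are \emph{injective}, so Lemma~\ref{lemma:encodingextensions}(3) lifts each $f_J$ to $g_J\in\Pol(\sU)$ with $D_1(\gamma(g_J))=J$. Taking an increasing exhaustion $J_1\subset J_2\subset\cdots$ by finite sets, the paper then uses compactness of $\Pol(\sU)^{(2)}/{\sim_{\Aut(\sU)}}$ (from $\omega$-categoricity, \cite[Proposition~6]{Topo-Birk}) to obtain automorphisms $u_i$ and an accumulation point $g$ of $(u_i\circ g_{J_i})$. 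A direct computation then shows $\xi'(u_i\circ g_{J_i})=\pi_2^2$ for all $i$ but $\xi'(g)=\pi_1^2$, proving discontinuity. The key inputs your sketch is missing are thus the specific injective $f_J$'s and the compactness step inside $\Pol(\sU)$.

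One further remark: your closing sentence suggests that the minion analogue (Theorem~\ref{thm:main:localnoglobalh1}) would go through by the same scheme. In the paper it does not; that result is proved by an entirely different route, via the \dwnu\ identities and Theorem~\ref{thm:embtoPham}, since there one needs to show that no uniformly continuous minion homomorphism exists at all, not merely that a particular one fails to be.
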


\subsubsection{Dissected weak near-unanimity identities}
The negative answer to Question~(2) in~\cite{Bodirsky:2019aa, BMOOPW} provided an $\omega$-categorical structure with slow orbit growth which satisfies non-trivial h1 identities locally, but not globally. The local satisfaction of  non-trivial h1 identities was, however, shown indirectly, by means of the equivalence with a pseudo-Siggers identity mentioned at the end of Section~\ref{subsub:orbits}; no concrete set of local h1~identities was given. 
Here, we find concrete local h1~identities which prevent the structure from~\cite{Bodirsky:2019aa, BMOOPW}, as well as the encoded finite language structure in Theorem~\ref{thm:main:localnoglobalh1}, from having a uniformly continuous minion homomorphism into $\proj$. We call these identities \emph{\dwnu}. In fact, we obtain relatively general conditions on the symmetry of a structure which force \dwnu\ identities to be satisfied locally.

\begin{theorem}\label{thm:embtoPham}
  Let $\sU$ be a homogeneous structure. Let $F$ be a finite subset of $U$, and let $k> 1$. Assume the following two conditions hold. 
  \begin{enumerate}[label = \textrm{ (\roman*)}]
    \item Only relations of arity smaller than $k$ hold for tuples of elements in $F$;

    \item There is an embedding from $\sU^2$ into $\sU$.
  \end{enumerate}
  Then $\sU$ satisfies $(n,k)$ \dwnu\ identities on $F$ for all $n > k$.
\end{theorem}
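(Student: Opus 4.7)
Proof proposal: The strategy is to construct, for each $n>k$, a polymorphism $f_n$ of $\sU$ of arity $n$ whose restriction to $F^n$ satisfies the $(n,k)$-\dwnu\ identities. Both hypotheses play a distinct role: (ii) yields a high-arity polymorphism via iteration, while (i) combined with homogeneity provides the local symmetry needed to force the required equalities on $F$.

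The first step is to iterate the embedding. Given the embedding $e \colon \sU^2 \to \sU$ from (ii), define $e_m \colon \sU^m \to \sU$ by $e_2 := e$ and $e_{m+1}(x_1, \ldots, x_{m+1}) := e(e_m(x_1, \ldots, x_m), x_{m+1})$. Each $e_m$ is an injective $m$-ary polymorphism of $\sU$ that reflects all relations. Fixing $n>k$, the polymorphism $e_n$ will serve as scaffolding, and in particular its restriction to $F^n$ has image contained in a finite subset of $U$.

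The second step is an orbit analysis using (i). Consider two tuples $\bar a, \bar b \in F^n$ whose identification is demanded by an $(n,k)$-\dwnu\ identity. By (i), every relation of $\sU$ of arity $\geq k$ is vacuous on any tuple of elements of $F$, so the quantifier-free type of a tuple $\bar c \in F^n$ is fully determined by the atomic types of its projections to tuples of length strictly less than~$k$. The combinatorial form of the \dwnu\ identities ensures, by design, that such low-arity projections of $\bar a$ and $\bar b$ coincide up to position relabelings that preserve the atomic relations they satisfy. Consequently $\bar a$ and $\bar b$ have the same quantifier-free type in $\sU$, and by homogeneity they lie in a common orbit of $\Aut(\sU)$ acting diagonally on $U^n$; the same then holds for $e_n(\bar a)$ and $e_n(\bar b)$ in $U$, since $e_n$ is an embedding.

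The final step is to assemble these orbit coincidences into a single polymorphism. Choosing one representative per dwnu-equivalence class in $F^n$, we define a partial map $h$ on the finite set $\{e_n(\bar c) : \bar c \in F^n\}$ sending $e_n(\bar c)$ to the image of its chosen representative. The orbit coincidence from Step~2 guarantees that, on every individual collapsed pair, $h$ agrees with some automorphism of $\sU$. We then extend $h$, by a back-and-forth argument exploiting homogeneity of $\sU$, to a total endomorphism of $\sU$, and set $f_n := h \circ e_n$. By construction, $f_n(\bar a) = f_n(\bar b)$ whenever the identities require it.

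I expect the principal difficulty to lie in this final step: verifying that the partial map $h$ really is a partial endomorphism of the finite substructure of $\sU$ on its domain. A priori, the collapse might create relational incompatibilities involving several collapsed pairs at once that are not detected by the singleton orbit analysis. Controlling this requires showing that among tuples of length $\geq k$ built from $e_n(F^n)$, the only obstructions come from relations of arity $< k$ on tuples of $F$, which is exactly the content of (i); the role of the precise form of the \dwnu\ identities is to ensure that these low-arity obstructions are symmetric under the required identifications.
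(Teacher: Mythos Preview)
Your plan has a genuine gap, and it is precisely the one you flag in the last paragraph, though not for the reason you suspect.

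First, note what the $(n,k)$ \dwnu\ identities actually ask for: one binary function $g_i$ for each $1\leq i\leq n$, and one $k$-ary function $f_\psi$ for every injection $\psi\colon\{1,\dots,k\}\to\{1,\dots,n\}$. There is no single $n$-ary function symbol in the system. If you try to realise every $f_\psi$ and every $g_i$ as a minor of one injective $n$-ary $e_n$, the identities force equalities such as $e_n(x,x,y,x)=e_n(x,x,x,y)$ on $F$ (already for $k=3$, $n=4$), i.e., they force $e_n$ to be non-injective on $F^n$. So your Step~3 must genuinely collapse distinct points $e_n(\bar a)\neq e_n(\bar b)$.

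This is where the argument breaks. Homogeneity lets you extend a finite partial \emph{isomorphism} to an automorphism; it says nothing about extending a finite non-injective map to an endomorphism. And non-injective endomorphisms need not exist at all: if, for instance, the binary relation $\neq$ is in the signature of $\sU$ (which is compatible with hypothesis~(i) for any $k\geq 3$), then every polymorphism of $\sU$ is injective and your collapsing map $h$ cannot be completed to anything in $\Pol(\sU)$. The ``back-and-forth'' you invoke simply does not produce endomorphisms.

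The paper avoids this by never collapsing. It takes \emph{two} embeddings, $f\colon\sU^k\to\sU$ and $g\colon\sU^n\to\sU$, and sets $g_i(x,y):=g(x,\dots,y,\dots,x)$ with $y$ in position $i$. For each $\psi$ separately, one checks (this is where hypothesis~(i) enters) that the map
\[
u_\psi\colon f(x,\dots,\underset{j}{y},\dots,x)\ \longmapsto\ g(x,\dots,\underset{\psi(j)}{y},\dots,x)
\]
between the two finite substructures of $\sU$ on near-constant tuples is a partial \emph{isomorphism}; homogeneity then extends $u_\psi$ to an automorphism $v_\psi$, and one sets $f_\psi:=v_\psi\circ f$. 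The crucial point is that each alignment is a bijection between two finite sets, so homogeneity applies cleanly, and different $\psi$'s are handled by different automorphisms rather than by a single global collapse.
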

This suggests a potential approach to the second (and weaker) statement of~(3) above, the first statement having been proven false,  even within the range of Conjecture~\ref{conjecture:infinitecsp}, in~\cite{Bodirsky:2019aa, BMOOPW}. 
\begin{question}\label{quest:dwnu}
Let $\sU$ be an $\omega$-categorical structure with slow orbit growth which satisfies non-trivial h1~identities locally. Does $\sU$ satisfy \dwnu\ identities locally?
\end{question}
We remark that \dwnu\ identities can be viewed as a generalization of \emph{weak near
unanimity} identities. Moreover, it follows
from~\cite{wnuf} and~\cite{wonderland} that if $\sU$ is a finite relational
structure satisfying non-trivial h1 identities, then $\sU$ satisfies weak near unanimity identities, giving a positive
answer to Question~\ref{quest:dwnu} in the finite case. We also note that the satisfaction of \dwnu\ identities has been proven for a large number of structures within the range of Conjecture~\ref{conjecture:infinitecsp} in~\cite{Eq-oligo-CSP, Barto:2017aa}.

\ignore{
\subsubsection{Computational complexity}

Our counterexample to Question~(2) in Theorem~\ref{thm:main:localnoglobalh1} is actually a whole family $\mathcal U=\{\sU_{\alpha} \mid \alpha\colon\N\to\N \text{ monotone}\}$ of finite-language structures,
whose CSPs are then well-defined problems.
Our main complexity result is that despite the pleasing algebraic properties of the templates in $\mathcal U$, the complexity of the associated CSPs can vary dramatically.

\begin{theorem}\label{thm:complexity-main}
	There exists a uniform class $\mathcal U$ of finite-language $\omega$-categorical structures $\sU$ with slow orbit growth,
	such that $\Pol(\sU)$ has no uniformly continuous minion homomorphism to $\Projs$
	and such that:
	\begin{itemize}
		\item For every coinfinite language\footnote{The requirement of being coinfinite is mild, since every cofinite language has low complexity and is decidable by a regular automaton.} $L\subseteq\{0,1\}^*$, there exists an $\sU\in\mathcal U$ such that $L$ has an exponential-time many-one reduction
	to $\csp(\sU)$, and $\csp(\sU)$ is in $\coNP^L$.
		\item There exists $\sU\in\mathcal U$ such that $\csp(\sU)$ is \coNP-complete.
	\end{itemize}
\end{theorem}

\begin{remark}
Theorem~\ref{thm:complexity-main}(1) resembles a result by Bodirsky and Grohe (Theorem 2 in~\cite{BodirskyGrohe}) concerning the complexity of arbitrary $\omega$-categorical CSPs.
However, there is a mistake in the proof (more precisely, one can easily see that their class $\mathcal K$ of structures in Lemma~1 is not closed under substructures and is therefore not an amalgamation class).
Our result is nonetheless incomparable, as the class of CSPs that we consider here is much more restricted due to the algebraic properties of our templates (slow orbit growth and local satisfaction of non-trivial h1 identities).
\end{remark}

In particular, we obtain from the above theorem that $\mathcal U$ is very rich, from a complexity standpoint.
Indeed, the first item of the theorem implies that the CSPs of templates in $\mathcal U$ can belong to classes such as $(k+2)$-\EXPTIME$ \setminus k$-\EXPTIME\ for all $k\geq 0$ (where $0$-\EXPTIME\ is taken to be P),
and can also belong to every Turing-degree.
Moreover, it also implies that the CSP of some of these templates are complete for the fast-growing complexity classes $\mathbf{F_\alpha}$ where $\alpha\geq 2$ is an ordinal, such as the classes \textsc{Tower}, \textsc{Ackermann}, and \textsc{Hyperackermann}  (see~\cite{DBLP:journals/toct/Schmitz16}).
}

\subsubsection{$\omega$-categorical CSP monsters}

The complexity of $\csp(\sA)$ is, for every $\omega$-categorical CSP template $\sA$, determined by $\Pol(\sA)$ viewed as a topological clone: if there exists a topological clone isomorphism $\Pol(\sA)\to\Pol(\sB)$ and $\sA$ and $\sB$ are $\omega$-categorical, then $\csp(\sA)$ and $\csp(\sB)$ are equivalent under log-space reductions~\cite{Topo-Birk}.
Conjecture~\ref{conjecture:infinitecsp} even postulates that for every template $\sA$ within its scope,
 membership of $\csp(\sA)$ in \PolComplexity{} only depends on the local h1 identities of $\sA$. The latter is equivalent to the statement that polynomial-time tractability is characterised by the global satisfaction of the single identity $\alpha s(x,y,x,z,y,z)=\beta s(y,x,z,x,z,y)$~\cite{Eq-oligo-CSP,BP19}.

Using our encoding, we prove that global identities do not characterise membership in \PolComplexity{} -- or, in fact, in any other non-trivial class of languages containing FO -- for the class of homogeneous CSP templates.
\begin{theorem}\label{thm:pseudo}
Let $\mathcal C$ be any class of languages that contains \textsc{AC}$^0$ and that does not intersect every Turing degree.
Then there is no countable set $\Sigma$ of identities such that for all homogeneous CSP templates membership in $\mathcal C$ is equivalent to the satisfaction of $\Sigma$.
\end{theorem}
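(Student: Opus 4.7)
The plan is to argue by contradiction. Assume there is a countable set $\Sigma$ of identities such that, for every homogeneous CSP template $\sA$, one has $\csp(\sA) \in \mathcal{C}$ if and only if $\Pol(\sA) \models \Sigma$. Since $\mathcal{C}$ avoids some Turing degree $\mathbf{d}$ (by hypothesis) and contains $\textsc{AC}^0$, it suffices to exhibit two homogeneous CSP templates $\sU_0$ and $\sU_1$ such that (i) $\Pol(\sU_0)$ and $\Pol(\sU_1)$ satisfy exactly the same identities, so in particular both satisfy $\Sigma$ or neither does, and (ii) $\csp(\sU_0)$ lies in $\textsc{AC}^0 \subseteq \mathcal{C}$ while $\csp(\sU_1)$ has Turing degree $\mathbf{d}$ and hence lies outside $\mathcal{C}$.

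To produce such a pair, I would apply the Hrushovski-style encoding $\sA \mapsto \hrushovski{\sA}$ developed in this paper to two $\omega$-categorical structures $\sA_0, \sA_1$ in possibly infinite signatures whose polymorphism clones are equationally indistinguishable --- for instance, generic structures whose polymorphism clones satisfy only trivial identities --- but which carry different computational content. On the one hand, the encoding transfers the equational theory of the polymorphism clone from input to output, which is the algebraic preservation property underlying the paper's approach. On the other hand, as illustrated by the constructions behind Theorems~\ref{thm:main:localnoglobalh1} and~\ref{thm:main:localnoglobal}, the encoding is flexible enough to make $\csp(\hrushovski{\sA})$ realize an essentially arbitrary Turing degree, by suitably tuning the input.

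With such encoded templates $\sU_0 = \hrushovski{\sA_0}$ and $\sU_1 = \hrushovski{\sA_1}$ in hand, the contradiction is immediate: by (i) they agree on whether they satisfy $\Sigma$; by the supposed characterisation, $\csp(\sU_0) \in \mathcal{C} \Leftrightarrow \csp(\sU_1) \in \mathcal{C}$; but by (ii) they disagree on membership in $\mathcal{C}$. The countability assumption on $\Sigma$ enters the picture only through the fact that $\Pol(\sU) \models \Sigma$ is then a well-defined single property of the polymorphism clone, invariant under identity-preserving substitutions of the template.

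The main obstacle is to secure (i) and (ii) simultaneously. Identity preservation is an intrinsic feature of the encoding and will already be in place from the earlier sections. The subtler half is finding two inputs whose polymorphism clones are equationally indistinguishable yet whose encoded CSPs realize Turing degrees on opposite sides of $\mathcal{C}$. This amounts to decoupling the equational theory of $\Pol(\sA)$ from the combinatorial data in $\sA$ that pushes $\csp(\hrushovski{\sA})$ upward in complexity, a kind of fine control similar to that underlying the construction for Theorem~\ref{thm:main:localnoglobalh1}; the remaining work is to sharpen this control so as to land precisely at a degree outside $\mathcal{C}$ on one side and at a trivially decidable language on the other.
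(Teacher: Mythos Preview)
Your high-level strategy --- exhibit two homogeneous templates with the same equational theory but CSPs on opposite sides of $\mathcal C$ --- is natural, but there is a real gap at the point you flag as ``the algebraic preservation property underlying the paper's approach''. The paper does \emph{not} establish that the encoding $\sA\mapsto\hrushovski{\sA}$ preserves the full equational theory of the polymorphism clone. What is shown is one direction (Proposition~\ref{prop:ucclonehomo}: a clone homomorphism $\Pol(\hrushovski{\sA})\to\Pol(\sA)$, so identities of $\hrushovski{\sA}$ hold in $\sA$) and a partial converse only for pseudo-h1 conditions satisfied by injections (Proposition~\ref{prop:pseudo-h1-ids}). In particular, even though all trivial structures $\sT_W$ have the same polymorphism clone, there is no reason given why $\Pol(\hrushovski{\sT_{W_0}})$ and $\Pol(\hrushovski{\sT_{W_1}})$ should satisfy the same arbitrary identities when $W_0\neq W_1$. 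Your contradiction hinges on exactly this, and you have not indicated how to secure it.

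The paper sidesteps this difficulty entirely. Rather than comparing two fixed templates, it approximates the hard structure $\hrushovski{\sT_W}$ by a sequence $\hrushovski{\sT_{W\cap\Delta^{\leq n}}}$ of easy ones (each with CSP in $\textsc{AC}^0$, hence satisfying $\Sigma$), takes an ultraproduct, and argues via \L{}o\'{s}'s theorem that the ultraproduct still satisfies $\Sigma$; a L\"owenheim--Skolem step and an age computation then identify a countable elementary substructure with $\hrushovski{\sT_W}$ itself. This is precisely where countability of $\Sigma$ does real work: it is needed so that ``$\Pol(\cdot)\models\Sigma$'' can be expressed by a sentence in a logic for which (the upward half of) \L{}o\'{s}'s theorem holds. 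Your remark that countability ``enters only through the fact that $\Pol(\sU)\models\Sigma$ is a well-defined single property'' understates its role and misses the mechanism that drives the actual proof.
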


The proof of Theorem~\ref{thm:pseudo} relies on encoding arbitrary languages as CSPs of homogeneous templates. These templates are obtained by applying the Hrushovski-encoding to structures which have only empty relations, but a complicated infinite signature.  On the way, we obtain a new proof of a result by Bodirsky and Grohe~\cite{BodirskyGrohe}.

\begin{theorem}\label{thm:complexity-main}
Let $\mathcal C$ be a complexity class such that there exist $\coNP^\mathcal C$-complete problems. Then there exists a homogeneous CSP template that satisfies non-trivial h1~identities and whose CSP is $\coNP^\mathcal C$-complete. Moreover, if $\PolComplexity\neq\coNP$, then there exists a CSP template with these algebraic properties whose CSP has $\coNP$-intermediate complexity.
\end{theorem}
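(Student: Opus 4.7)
The plan is to use the refined Hrushovski-encoding developed earlier in the paper to promote $\omega$-categorical structures with infinite signature and empty relations into finite-language homogeneous CSP templates whose complexity is controlled by the signature of the original. Fix a $\coNP^\mathcal{C}$-complete language $L_0$ together with an oracle language $L' \in \mathcal{C}$ that witnesses this membership. I would build an $\omega$-categorical structure $\sA$ in a countable relational signature $\tau$ in which every relation is interpreted as empty, but where $\tau$ itself encodes $L'$---for instance, by including, for each $w \in L'$, a relation symbol $R_w$ of some canonical arity depending only on $|w|$. Because every relation of $\sA$ is empty, every function on the domain is a polymorphism, and in particular $\sA$ satisfies every non-trivial system of h1 identities; by the encoding theorems developed earlier in the paper, the finite-language structure $\Hru{\sA}$ is then homogeneous, $\omega$-categorical, and inherits these identities.

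The heart of the argument is to show that $\csp(\Hru{\sA})$ sits at precisely the level $\coNP^\mathcal{C}$. For the upper bound, I would decode an input $\sB$ to $\csp(\Hru{\sA})$ back into an instance of $\csp(\sA)$; since $\sA$ has only empty relations, satisfiability of the decoded instance reduces to checking that every relation symbol appearing in it genuinely belongs to $\tau$, which amounts to polynomially many queries of the form $w \in L'$, each answerable in $\mathcal{C}$. The residual $\coNP$-style verification over the gadgets produced by the encoding then places $\csp(\Hru{\sA})$ in $\coNP^{L'} \subseteq \coNP^\mathcal{C}$. For hardness, given an input $x$ to a $\coNP^\mathcal{C}$-machine that decides $L_0$, I would simulate its universal branches by homomorphism constraints of $\csp(\Hru{\sA})$ and its oracle queries to $L'$ by gadgets that exercise the relation symbol $R_w$ in $\tau$, yielding a polynomial-time many-one reduction from $L_0$ to $\csp(\Hru{\sA})$.

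For the second statement, assuming $\PolComplexity \neq \coNP$, a Ladner-style diagonalisation yields a language $L \in \coNP$ that is neither in $\PolComplexity$ nor $\coNP$-complete. Applying the previous construction with $L$ in place of $L_0$ (and with $\mathcal{C}$ trivial, so that no oracle is needed) produces a finite-language homogeneous template whose CSP is polynomial-time equivalent to $L$ and hence $\coNP$-intermediate, while still satisfying non-trivial h1 identities by the same inheritance argument.

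The main obstacle is the calibration of complexity: the encoding must be performed so that (i) the decoding of any input to $\csp(\Hru{\sA})$ back into a $\tau$-query is polynomially bounded, which is needed for the $\coNP^\mathcal{C}$ upper bound, and (ii) each symbol $R_w$ of $\tau$ is individually detectable by a polynomial-size gadget input to $\csp(\Hru{\sA})$, which is needed for hardness. These requirements have to be reconciled with the preservation properties of the encoding that guarantee the transfer of non-trivial h1 identities, and the most delicate point will be to verify that the particular construction of the Hrushovski-encoding developed in the paper accommodates all three simultaneously.
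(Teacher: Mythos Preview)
Your high-level strategy---encode a structure with all relations empty via the Hrushovski construction---is exactly what the paper does, and the observation that the encoded structure is homogeneous and satisfies non-trivial h1~identities is correct. However, both halves of your argument have a genuine gap.

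For the completeness statement, you encode the \emph{oracle} $L'\in\mathcal C$ into the signature and then propose to obtain $\coNP^{\mathcal C}$-hardness by simulating a $\coNP^{L'}$-machine via homomorphism gadgets. This simulation is not supported by the properties of the encoding: $\csp(\Hru{\sT_{L'}})$ decides, for a separated $\theta$-structure $\sX$, whether no word $w\in L'$ admits a valid $w$-code in $\sX$. This gives only one-sided access to the oracle (planting a $w$-code forces $w\notin L'$ for yes-instances); there is no gadget forcing $w\in L'$, and no mechanism for encoding a universal computation whose subsequent behaviour depends on the oracle's answer. The paper avoids this entirely by encoding instead the \emph{complement} $W$ of a $\coNP^{\mathcal C}$-complete language $L$. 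Hardness is then the trivial log-space map $w\mapsto\Com{\sF_w}$, since $w\in L$ iff $\Com{\sF_w}\to\Hru{\sT_W}$; the upper bound is a $\coNP$-many-one reduction back to $L$ (guess a short word $w$ and a homomorphism $\Com{\sF_w}\to\sX$), and $\coNP^{\mathcal C}$ is closed under such reductions.

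For the intermediate statement, you assert that the construction gives a CSP \emph{polynomial-time equivalent} to an off-the-shelf $\coNP$-intermediate language $L$. But the encoding only yields a log-space reduction from $L$ to the CSP and a $\coNP$-many-one reduction from the CSP back to $L$; the latter does not let you infer that $\coNP$-completeness of the CSP would transfer to $L$. The paper therefore does not reduce to a pre-existing Ladner language but performs the Ladner diagonalisation \emph{directly on the CSP}: it builds a unary set $I$ via a slowly growing computable function, alternately killing each polynomial-time decider for $\csp(\Hru{\sT_I})$ and each polynomial-time reduction from a fixed $L\in\coNP\setminus\PolComplexity$ to $\csp(\Hru{\sT_I})$, using at every stage the explicit description of $\csp(\Hru{\sT_I})$ as ``separated and no $\Com{\sF_i}\to\sX$ for $i\in I$''.
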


In particular, Theorem~\ref{thm:complexity-main} gives complete problems for classes such as $\Pi_n^{\PolComplexity}$ for every $n\geq 1$, \PSPACE, \EXPTIME, or even every fast-growing time complexity class $\mathbf{F_\alpha}$ where $\alpha\geq 2$ is an ordinal (such as the classes \textsc{Tower} or \textsc{Ackermann}, see~\cite{DBLP:journals/toct/Schmitz16}). 

\subsection{Outline} The paper is organised in the following way -- definitions and general
notation are provided in Section~\ref{section:prem}. Our variant of Hrushovski's encoding and
its properties are described in Section~\ref{section:finitelang}. The encoding
is used on the structure from~\cite{Bodirsky:2019aa, BMOOPW}  in Section~\ref{section:local} in order to show Theorem~\ref{thm:main:localnoglobalh1} using Theorem~\ref{thm:embtoPham}, which is also proven there.
In Section~\ref{section:complexity}, we study the complexity of CSPs of templates produced with the encoding, proving in particular Theorems~\ref{thm:pseudo} and~\ref{thm:complexity-main}.
Finally, in Section~\ref{section:discon} we apply the encoding to the structure from~\cite{BPP-projective-homomorphisms} to prove Theorem~\ref{thm:main:localnoglobal}.


\section{Preliminaries}\label{section:prem}

\subsection{Relational structures and CSPs}
A \emph{relational signature}, or \emph{language}, is a family $\sigma=(R_i)_{i\in I}$ of symbols, each of which has a finite positive number $\arity{R_i}$, its \emph{arity}, associated with it. We write $R\in\sigma$ to express that the symbol $R$ appears in the signature $\sigma$. 
A \emph{relational structure} with signature $\sigma$, or a \emph{$\sigma$-structure}, is a pair $\sA=(A;(R_i^\sA)_{i\in I})$, where $A$ is a set called the \emph{domain} of the structure, and $(R_i^\sA)_{i\in I}$ is a family of relations on this  domain of the arities associated with the signature, i.e., each $R_i^\sA$ is a subset of $A^{\arity{R_i}}$.
Throughout this article we denote relational structures by blackboard bold letters, such as $\sA$, and their domain by the same letter in the plain font, such as $A$. We will tacitly assume that all relational structures, as well as their signatures, are at most countably infinite.

If $\sA,\sB$ are relational structures in the same signature $\sigma$, then a \emph{homomorphism} from $\sB$ to $\sA$ is a function $f\colon B\to A$ with the property that for all $R\in\sigma$  and every $(x_1, \ldots, x_{\arity{R}}) \in R^{\sB}$ we have that $(f(x_1), \ldots, f(x_{\arity{R}})) \in R^{\sA}$. The map $f$ is an \emph{embedding} if
it is injective and $(x_1, \ldots, x_{\arity{R}}) \in R^{\sB}$ if and only if $(f(x_1),
\ldots, f(x_{\arity{R}})) \in R^{\sA}$ for all  $R\in \sigma$ and all $x_1,\ldots,x_{\arity{R}}\in B$. An \emph{isomorphism} is a surjective embedding.

If $\sA$ is a relational structure in a finite signature, called a \emph{finite language structure} or a \emph{CSP template}, then $\csp(\sA)$ is the set of all finite structures $\sB$ in the same signature with the property that there exists a homomorphism from $\sB$ into $\sA$. This set can be viewed as a computational problem where we are given a finite structure $\sB$ in that signature, and we have to decide whether $\sB\in\csp(\sA)$. We are interested in the complexity of this decision problem relative to the size of the structure $\sB$ as measured by the cardinality of its domain.

\subsection{The range of the infinite CSP conjecture, and smallness conditions}  
A relational structure $\sC$ is \emph{homogeneous} if every isomorphism between finite induced substructures extends to an automorphism of the entire structure $\sC$. In that case, $\sC$ is uniquely determined, up to isomorphism, by its \emph{age}, i.e., the class of its finite induced substructures up to isomorphism. $\sC$ is \emph{finitely bounded} if its signature is finite and its age is given by a finite set ${\mathcal F}$ of forbidden finite substructures, i.e., the age consists precisely of those finite structures in its signature which do not embed any member of ${\mathcal F}$. A \emph{first-order reduct} of a relational structure $\sC$ is a relational structure $\sA$ on the same domain all of whose relations are first-order definable without parameters in $\sC$. Every reduct $\sA$ of a  finitely bounded homogeneous structure is \emph{$\omega$-categorical}, i.e., it is up to isomorphism the unique countable model of its first-order theory. Equivalently, its automorphism group $\Aut(\sA)$ is \emph{oligomorphic}: it has finitely many orbits in its componentwise action on $A^n$, for all finite $n\geq 1$. In fact,  if $\sA$ is a first-order reduct of a finitely bounded homogeneous structure, then the number of orbits in the action of $\Aut(\sA)$ on $A^n$ grows exponentially in $n$; in general, we say that structures where this number grows less than double exponentially in $n$ have \emph{slow orbit growth}. The CSP of any first-order reduct of a finitely bounded homogeneous structure is contained in the complexity class NP.


\subsection{Function clones and polymorphisms}
Let $C$ be a set. Then the map $\pi^n_i \colon C^n \to C$ given by
$\pi^n_i(x_1,\ldots, x_n) = x_i$, where $n\geq 1$ and $i \in \{1, \ldots,
n\}$, is called the \emph{$i$-th $n$-ary projection on $C$}. If $n,m\geq 1$, and $f \colon C^n
\to C$ and $g_1, \ldots, g_n \colon C^m \to C$ are functions, then we define the composition $f \circ (g_1,
\ldots, g_n) \colon C^{m} \to C$ by
\[
  (x_1, \ldots, x_m) \mapsto f(g_1(x_1,\ldots, x_m), \ldots, g_n(x_1, \ldots,
  x_m)).
\] 
A \emph{function clone} $\mathscr{C}$ on a set $C$ is a set of functions of finite arities on $C$ 
 which contains all projections and which 
is closed under composition.
The set $C$ is called the \emph{domain} of $\mathscr{C}$. The set of all projections on $C$ forms a function clone; for $|C| = 2$ we refer to this clone as the \emph{clone of projections} and denote it by $\Projs$.

A \emph{polymorphism} of a relational structure $\sA$ is a homomorphism from some finite power $\sA^n$ of the structure into $\sA$. The set of all polymorphisms of $\sA$ forms a function clone on $A$, and is called the \emph{polymorphism clone} of $\sA$ and denoted by $\Pol(\sA)$.

\subsection{Identities}
An \emph{identity} is a formal expression 
\begin{equation*}\label{equation:identity}
  s(x_1, \ldots, x_n) = t(y_1, \ldots, y_m)
\end{equation*} 
where $s$ and $t$ are abstract terms of function symbols, and $x_1,
\ldots, x_n, y_1, \ldots, y_m$ are the variables that appear in these terms. The identity is of \emph{height 1} if the terms $s$ and $t$ contain precisely one function symbol; in other words no
nesting of function symbols is allowed, and no term may be just a variable. A \emph{pseudo-h1 identity} is one obtained from an h1~identity by composing the terms $s$ and $t$ with distinct unary function symbols from the outside. The pseudo-Siggers identity mentioned in the introduction is an example. A \emph{pseudo-h1 condition} is a set of identities obtained from a set of h1~identities by composing all terms in it with distinct unary function symbols from the outside (if the same term appears twice, then each appearance gets a different unary function symbol).

We say that a set of identities $\Sigma$ is \emph{satisfied} in a function clone $\clo{C}$ if the function symbols which appear in $\Sigma$ can be mapped to functions of appropriate arity in $\clo{C}$ in such a way that all identities of $\Sigma$ become true for all possible values of their variables in the domain $C$ of $\clo{C}$.
If $F \subseteq C$ is finite, then we say that $\Sigma$ is satisfied \emph{locally on F} if the above situation holds where only values within $F$ are considered for the variables. The identities of a relational structure are defined as the identities of its polymorphism clone, and similarly we shall speak of identities of a relational structure on a finite subset of its domain, with the obvious meaning.

A set of identities is called \emph{trivial} if it is satisfied in any function clone; this is the case if and only if it is satisfied in the projection clone $\proj$. Otherwise, the set is called \emph{non-trivial}. We say that a function clone satisfies non-trivial identities \emph{locally} if it satisfies a non-trivial set of identities on every finite subset of its domain.
We shall use similar terminology for relational structures, and for h1~identities.

\subsection{Clone homomorphisms}
Let $\mathscr{C}$ and $\mathscr{D}$ be two function clones. Then a map $\xi
\colon \mathscr{C} \to \mathscr{D}$ is called a \emph{clone homomorphism} if it
preserves arities, projections, and composition. Preservation of projections means that it sends the $i$-th $n$-ary projection in $\clo{C}$ to the $i$-th $n$-ary projection in $\clo{D}$ for all $1\leq i\leq n$; preservation of composition means that for all $n,m\geq 1$, all $n$-ary $f \in \mathscr{C}$, and all $m$-ary
$g_1, \ldots, g_n \in \mathscr{C}$
\[
  \xi(f \circ (g_1, \ldots, g_n)) = \xi(f) \circ (\xi(g_1), \ldots, \xi(g_n)). 
\]
This is the case if and only if the map $\xi$ \emph{preserves identities}, i.e., whenever some functions  in $\mathscr{C}$ witness the satisfaction of some identity in $\mathscr{C}$, their images under $\xi$ witness the satisfaction of the same identity in $\mathscr{D}$.

A map $\xi \colon \mathscr{C} \to \mathscr{D}$ is called a 
\emph{minion homomorphism} (sometimes also called height~1 or h1~clone homomorphism) 
if it preserves arities and composition with projections; the latter meaning that for all for all $n,m\geq 1$, all $n$-ary $f \in \mathscr{C}$, and any projections
$\pi^m_{i_1}, \ldots, \pi^m_{i_n}\in\mathscr{C}$
\[
  \xi(f \circ (\pi^m_{i_1}, \ldots, \pi^m_{i_n})) = \xi(f) \circ (\pi^m_{i_1},
  \ldots, \pi^m_{i_n}). 
\]
This is the case if and only if the map $\xi$ preserves h1~identities.

The existence of clone and minion homomorphisms between function clones characterize their relative degree of global symmetry. Namely, for function clones $\mathscr{C}$ and $\mathscr{D}$, there exists a clone homomorphism from $\mathscr{C}$ into $\mathscr{D}$ if and    only if every set of identities which holds in $\mathscr{C}$ also holds in
    $\mathscr{D}$; and there exists a minion homomorphism from $\mathscr{C}$ into $\mathscr{D}$ if and
    only if every set of height 1 identities which holds in $\mathscr{C}$ also
    holds in $\mathscr{D}$. In particular,  there exists a clone homomorphism from $\mathscr{C}$ to $\proj$ if and only if every set of identities satisfied in $\mathscr{C}$ is trivial;  and there exists a minion homomorphism from $\mathscr{C}$ to $\proj$ if and    only if every set of h1~identities satisfied in $\mathscr{C}$ is trivial.

\subsection{Topology} The set of all finitary operations on a fixed set $C$ is naturally equipped with the \emph{topology of pointwise convergence}, under which forming the composition of operations is a continuous operation. A basis of open sets of this topology is given by the sets of the form
\[
  \{f\colon C^n \to C \mid f(a^i_1, \ldots, a^i_n) =
  b^i \text{ for all } 1 \leq i \leq m\} 
\]
where $n, m \geq 1$ and $a^i_1, \ldots, a^i_n, b^i \in C$ for all $1
\leq i \leq m$. The resulting topological space is a uniform space, in the case of $C$ being countable even a Polish space. Bearing the subspace topology, function clones then form natural topological objects. 
If $\mathscr{C}, \mathscr{D}$ are function clones, an arity preserving map $\xi \colon \mathscr{C} \to \mathscr{D}$ is then \emph{uniformly continuous} if and only if for every $n \geq 1$ and every finite $A
\subseteq D^n$ there exists a finite $B \subseteq C^n$ such that $f\mid_B =
g\mid_B$ implies that $\xi(f)\mid_A = \xi(g)\mid_A$. If the domain of $\mathscr{D}$ is finite, then this is the case if and only if for every $n \geq 1$  there exists a finite $B \subseteq C^n$ such that $f\mid_B =
g\mid_B$ implies $\xi(f) = \xi(g)$. Finally, a minion homomorphism $\xi\colon\mathscr{C}\to\proj$ is uniformly continuous if there exists a finite $B \subseteq C$ such that  $f\mid_{B^n} =
g\mid_{B^n}$ implies $\xi(f) = \xi(g)$, for all $n\geq 1$ and all $n$-ary $f,g\in\clo{C}$.

The \emph{local satisfaction} of identities and h1~identities can be characterised via \emph{uniformly continuous} clone and minion homomorphisms, respectively~\cite{Topo-Birk, uniformbirkhoff, wonderland}: there exists a uniformly continuous clone homomorphism from $\mathscr{C}$ to $\proj$ if and only if there exists a finite set $F\subseteq C$ such that any set of identities satisfied in $\mathscr{C}$ on $F$ is trivial; and there exists a uniformly continuous minion homomorphism from $\mathscr{C}$ to $\proj$ if and only if there exists a finite set $F\subseteq C$ such that any set of h1~identities satisfied in $\mathscr{C}$ on $F$ is trivial.

\subsection{pp-formulas and interpretations} Our encoded finite language structure will \emph{pp-interpret} the original structure, in the following sense.

A formula is \emph{primitive positive}, in short \emph{pp}, if it contains only existential quantifiers, conjunctions, equalities, and relational symbols.  If $\sA$ is a relational structure, then a
relation is \emph{pp-definable} in $\sA$ if it can be defined by a pp-formula in
$\sA$. It is well-known and easy to see that a relation that is pp-definable in $\sA$ is preserved by
every operation in $\Pol(\sA)$. 
A \emph{pp-interpretation} is a first-order interpretation in the sense of model theory where all the involved formulas are primitive positive: 
a structure $\sA$ \emph{pp-interprets} $\sB$ if a structure isomorphic to $\sB$ can be constructed from $\sA$ by pp-defining a subset $S$ of some finite power $A^n$, then pp--defining an equivalence relation $\sim$ on $S$, and then pp-defining relations on the equivalence classes of $\sim$. The number $n$ is referred to as the \emph{dimension} of the interpretation.

\subsection{Homogeneity and amalgamation, reducts, and homomorphic boundedness} Let $\mathcal{C}$ be a class of structures in some fixed relational signature which is closed under isomorphisms. We define the following properties the class $\mathcal{C}$ might have.
\begin{description}
  \item[Hereditary property (HP)] if $\sA \in \mathcal{C}$ and if $\sB$ is a
  substructure of $\sA$, then $\sB \in \mathcal{C}$. 

  \item[Amalgamation property (AP)] if $\sA, \sB, \sC \in \mathcal{C}$ and if
   $f_1 \colon \sA \to \sB$ and $f_2 \colon \sA \to \sC$ are embeddings,
  then there exist $\sD \in \mathcal{C}$ and embeddings $g_1 \colon \sB \to \sD$
  and $g_2 \colon \sC \to \sD$ such that $g_1 \circ f_1 =  g_2 \circ f_2$. 

  \item[Strong amalgamation property (SAP)] $\mathcal{C}$ satisfies AP and
  in addition $g_1$ and $g_2$ can be chosen to have disjoint ranges, except for the common values enforced by above equation.	
\end{description}

Homogeneous structures can be constructed from their age as follows.

\begin{theorem}[Fra\"{i}ss\'{e}'s Theorem, see~\cite{HodgesLong}] \label{thm:fraisse}
  Let $\sigma$ be a relational signature and let $\mathcal{C}$ be a 
  class of finite $\sigma$-structures which is closed under isomorphisms and satisfies HP and AP. Then there exists a
  $\sigma$-structure $\sA$ such that $\sA$ is countable, homogeneous, and the
  age of $\sA$ equals $\mathcal{C}$. Furthermore $\sA$ is unique up to isomorphism.
\end{theorem}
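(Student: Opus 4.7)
The plan is to give the classical Fra\"iss\'e construction and uniqueness argument. First I would note that since $\sigma$ is (tacitly) countable and $\mathcal{C}$ consists of finite structures closed under isomorphism, there are only countably many isomorphism types in $\mathcal{C}$. Pick a transversal $\{\sC_0,\sC_1,\ldots\}$ of representatives. The goal is to build, by recursion on $n\in\N$, a chain
\[
  \sB_0 \hookrightarrow \sB_1 \hookrightarrow \sB_2 \hookrightarrow \cdots
\]
of members of $\mathcal{C}$ whose directed union $\sA$ will be the desired structure.

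At the $n$-th stage I would dovetail two tasks. \emph{Task (a):} guarantee that each $\sC_n$ embeds into some $\sB_m$; this is ensured by amalgamating $\sB_n$ with $\sC_n$ over the empty substructure (or, if one prefers, by taking a disjoint union, which lies in $\mathcal{C}$ by amalgamating over $\emptyset$ and using HP). \emph{Task (b):} ensure homogeneity of the limit. To this end I would fix an enumeration of all triples $(\sE,\sF,e)$ with $\sE \in \mathcal{C}$ a substructure of some $\sB_i$, $\sF \in \mathcal{C}$, and $e\colon \sE\hookrightarrow\sF$ an embedding, and at each stage address one such triple: apply AP to the inclusion $\sE\hookrightarrow\sB_i$ and the embedding $e\colon \sE\hookrightarrow\sF$ to obtain some $\sD\in\mathcal{C}$ completing the square, then set $\sB_{n+1}:=\sD$. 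A standard dovetailing ensures every such triple is eventually handled. Taking $\sA$ to be the union of the chain, HP gives that $\operatorname{age}(\sA)\subseteq\mathcal{C}$, while Task (a) gives the reverse inclusion. Countability of $\sA$ is immediate since each $\sB_n$ is finite.

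To verify homogeneity of $\sA$, I would use a back-and-forth argument: given an isomorphism $f\colon\sE\to\sE'$ between finite substructures of $\sA$, both $\sE$ and $\sE'$ lie inside some $\sB_i$; by the construction at Task (b) applied to the triple $(\sE,\sE',f)$ (viewed through the inclusion $\sE\hookrightarrow\sB_i$), there is an extension of $f$ to an embedding $\sB_i\hookrightarrow\sB_j$ for some $j>i$. Alternating the roles of domain and range and iterating, standard back-and-forth extends $f$ to an automorphism of $\sA$. For uniqueness, if $\sA'$ is another countable homogeneous $\sigma$-structure with age $\mathcal{C}$, a back-and-forth between $\sA$ and $\sA'$, at each step using that partial isomorphisms between finite substructures can be extended by one point (this is precisely what homogeneity plus shared age guarantees, via AP applied inside $\mathcal{C}$), yields an isomorphism $\sA\cong\sA'$.

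The main obstacle is purely bookkeeping: one must arrange the enumeration in Task (b) so that triples arising from substructures discovered at later stages are also eventually treated, which is why the dovetailing over all pairs $(n,k)$ with $k\leq n$ is needed rather than a single pass. Once this is in place, AP is exactly what makes each individual extension step succeed, and HP together with Task (a) pin down the age; no further ingredients are required.
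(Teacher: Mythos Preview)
The paper does not prove this theorem at all: it is quoted as a classical result with a reference to Hodges, and used as a black box throughout. Your proposal is the standard Fra\"iss\'e construction and back-and-forth uniqueness argument, and it is correct (modulo the routine bookkeeping you flag), so there is nothing to compare it against in the paper itself.

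One small remark: your Task~(a) amalgamates over the empty structure, which tacitly uses that the empty structure lies in $\mathcal{C}$ (equivalently, that JEP follows from HP\,+\,AP). Under the paper's conventions this is fine, since HP is stated for arbitrary substructures and hence includes the empty one; but it is worth being explicit, as some formulations of Fra\"iss\'e's theorem list JEP separately.
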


The structure $\sA$ in the theorem above is referred to as the 
\emph{Fra\"{i}ss\'{e} limit} of $\mathcal{C}$, and the class $\mathcal{C}$ as a 
\emph{Fra\"{i}ss\'{e} class}.

For a relational structure $\sA$ in signature $\sigma=(R_i)_{i\in I}$, and $J\subseteq I$, we call the structure $(A;(R^\sA_i)_{i\in J})$ in signature $\rho:=(R_i)_{i\in J}$ 
the \emph{$\rho$-reduct} of $\sA$; conversely $\sA$ is called an \emph{expansion} of any of its reducts, and a \emph{first-order expansion} of a reduct if all of its relations have a first-order definition in the reduct. We say that a structure is \emph{homogenizable} if it has a homogeneous first-order expansion. All $\omega$-categorical structures are homogenizable. A homogenizable structure $\sA$ has \emph{no algebraicity} if the age of any, or equivalently some, homogeneous first-order expansion of $\sA$ has SAP. 


%



Let ${\mathcal F}$ be a set of $\sigma$-structures, where $\sigma$ is a signature. A $\sigma$-structure $\sA$ is \emph{homomorphically bounded} by ${\mathcal F}$  if its age is defined by forbidding the structures in ${\mathcal F}$ homomorphically, i.e., the age of $\sA$ consists precisely of those finite structures in its signature which do not contain a homomorphic image of any member of ${\mathcal F}$ as an induced substructure.


\newcommand\Words{\Sigma^{\geq 2}}

\section{The Hrushovski-encoding}
\label{section:finitelang}

We present the encoding of an arbitrary homogenizable structure with no algebraicity into a CSP template, which will be the basis of our results. The construction is originally due to Hrushovski~\cite[Section 7.4]{HodgesLong}; it was designed to capture properties of the first-order theory and consequently the automorphism group of the original structure. We  refine his construction in order to also compare the polymorphism clones of the original structure and its encoded counterpart, and to control  the complexity of the CSPs of the produced templates. Our encoding will have the following main properties:
\begin{itemize}
\item The original structure can be uniquely decoded from its encoding: in fact, it will have a pp-interpretation (of dimension~$1$, using a trivial equivalence relation) in its encoding. This implies that the CSP of a finite language structure is not harder than the CSP of its encoding.
\item The encoding preserves several algebraic and model-theoretic properties of importance. For example, the original structure is $\omega$-categorical if and only if its encoding is; it has slow orbit growth if and only if its encoding does; the encoding has, like the original structure, no algebraicity; pseudo-h1~identities of the original structure transfer, to a certain extent, to the encoding; 
if the original structure is homomorphically bounded, then so is its encoding; and the finite structures  which homomorphically map into the encoding (i.e., its $\csp$) are related to  the structures  which homomorphically map into the original structure.
\end{itemize}

\subsection{The encoding} 
Let $\Sigma$ be a finite alphabet, and let $\Words$ denote the set of all
finite words over $\Sigma$ of length at least two. We are going to encode structures with a signature of the form $\rho=(R_w)_{w\in W}$, where $W\subseteq\Words$ and where the arity of each symbol $R_w$ equals the length $|w|$ of the word $w$. For the rest of this section we fix $\Sigma$ and $\rho$. Our goal is to encode any homogenizable $\rho$-structure $\sA$ with no
algebraicity into a structure $\hrushovski{\sA}$ (where $\hrushovski{}$ stands
for \emph{E. Hrushovski}) in a finite signature $\theta$ which is disjoint
from $\rho$ and only depends on $\Sigma$.

Note that by renaming its signature, and possibly artificially inflating the arity of its relations (by adding dummy variables),  any arbitrary structure with countably many relations can be given a signature of the above form without changing, for example, its polymorphism clone. However, the  encoding will depend on these modifications, and their effect on the algebraic and combinatorial properties of the encoding is beyond the scope of this article. The original encoding~\cite[Section 7.4]{HodgesLong} roughly corresponds to the case where $|\Sigma|=1$, and our generalization allows us to avoid such modifications for the structures we wish to encode, making in particular our complexity-theoretic results possible.

\begin{definition}\label{definition:theta}
  Let $\theta$ denote the signature $\{P, \iota, \tau , S\}\cup \{H_s \mid s
  \in \Sigma\}$, where $P$, $\iota$, $\tau$ are unary relation symbols, $H_s$ is
  a binary relation symbol for each $s \in \Sigma$, and $S$ is a $4$-ary
  relation symbol. For every signature $\sigma$ disjoint from $\theta$, define
  $\sigma^+$ to be the union $\sigma \cup \theta$.
\end{definition}

The encoding of a $\rho$-structure $\sA$ will roughly be obtained as follows: first, one takes a homogeneous first-order expansion $\sB$ in some signature $\sigma$; from its age  $K$, one defines a class $\newage{K}$ of finite structures in signature $\sigma^+$; and the encoding is the $\theta$-reduct of the Fra\"{i}ss\'{e} limit of $\newage{K}$. In order to define the class $\newage{K}$, we need the following definitions.

\begin{definition} \label{definition:validencode}
  Let $\sigma$ be a signature disjoint from $\theta$, let $\sA$ be a
  $\sigma^+$-structure, and let $w  \in \Words$. A tuple
  $(a_1, \ldots, a_{|w|}, c_1, \ldots, c_{|w|})$ of elements of $\sA$ is a \emph{valid
  $w$-code} in $\sA$ if the following hold:
  \begin{enumerate}[label = \textrm{(\alph*)}]
    \item $a_1, \ldots, a_{|w|} \in P^{\sA}$.
    
    \item $H_{w_i}^{\sA}(c_i, c_j)$ for all $1\leq i, j\leq |w|$ such that $j \equiv i + 1 \pmod {|w|}$.

    \item $\iota^{\sA}(c_1)$ and $\tau^{\sA}(c_{|w|})$.
    
    \item $S^{\sA}(a_i, a_j, c_i, c_j)$ for all $1\leq i,j\leq |w|$ with $i \neq j$.
  \end{enumerate}
\end{definition}

\begin{definition} \label{definition:separated}
  Let $\sigma$ be a signature disjoint from $\theta$, and let $\sA$ be a
  $\sigma^+$-structure.  Then $\sA$ is called \emph{separated} if
  \begin{enumerate}[label = \textrm{(\roman*)}]
    \item $H_s^{\sA}$ only relates pairs within $A\setminus P^{\sA}$ for all $s \in \Sigma$;

    \item $\iota^{\sA}, \tau^{\sA}$ are contained in $A \setminus P^{\sA}$;

    \item If $(a,b,c,d)\in S^{\sA}$, then $c, d \in A\setminus P^{\sA}$
    and $c \neq d$.
  \end{enumerate}
\end{definition}

It follows from (iii) above that in a separated structure a valid $w$-code can
only exist if $|w| \geq 2$; this is the reason for the exclusion of
unary relation symbols from $\rho$.

%
%

\begin{definition}\label{definition:newage}
  Let $\sA$ be a $\rho$-structure and
  let $\sB$ be a homogeneous first-order expansion of $\sA$ with signature
  $\sigma$ and 
   age $K$. Define $\newage{K}$ to be the class of all finite
  $\sigma^+$-structures $\sC$ with the following properties:
  \begin{enumerate}[label = \textrm{(\arabic*)}]
    \item The $\sigma$-reduct of the restriction of $\sC$ to $P^{\sC}$ is an
    element of $K$.

    \item $\sC$ is separated and for every $R\in\sigma$ the relation
    $R^{\sC}$ only relates tuples which lie entirely within $P^{\sC}$.

    \item If $R_w \in \rho$ and $(a_1, \ldots, a_{|w|}, c_1,
    \ldots, c_{|w|})$ is a valid $w$-code in $\sC$, then $(a_1,\ldots,
    a_{|w|}) \in R_w^{\sC}$.
  \end{enumerate}
\end{definition}

It turns out that $\newage{K}$ is indeed a Fra\"{i}ss\'{e} class in the case
where $K$ has both the HP and the SAP, or in other
words when $K$ is the age of a homogeneous structure with no algebraicity. We remark
here that the (non-strong) AP (as described in~\cite{HodgesLong}) is not
sufficient. To see this, suppose that $\newage{K}$ has the AP; we prove that $K$ has the SAP.
  Let $\sA,
\sB, \sC \in K$ be such that there are embeddings from $\sA$ into  $\sB$ and $
\sC$; without loss of generality, $\sA$ is an induced substructure of both $\sB$ and $\sC$, and the embeddings are the identity function on $A$. We define $\sA', \sB', \sC'\in\newage{K}$ with domains $A\cup\{c_1,c_2\}, B\cup\{c_1,c_2\}, C\cup\{c_1,c_2\}$, respectively, where $c_1,c_2$ are two new fixed distinct elements. In each of the three structures,
set $P$ to be interpreted as the original sets $A, B, C$ respectively. Fix $a \in A$, and let $(a, b, c_1, c_2)\in S^{\sB'}$  for every $b \in B \setminus A$; moreover, let all
the remaining relations from $\theta$ be empty. Then $\sA', \sB', \sC'\in\newage{K}$ and $\sA'$ embeds into both $\sB'$ and $\sC'$.   Hence, by the
assumption, there is an amalgam $\sD' \in \newage{K}$. Let $f_1 \colon \sB' \to
\sD'$ and $f_2 \colon \sC' \to \sD'$ be the embeddings witnessing the
amalgamation. Then for all $b\in B\setminus A$ we have $S^{\sD'}(f_1(a), f_1(b), f_1(c_1), f_1(c_2))$; however
$S^{\sD'}(f_2(a), f_2(c), f_2(c_1), f_2(c_2))$ does not hold, for any $c\in C\setminus A$. Finally, $f_1$
and $f_2$ agree on $\{c_1, c_2, a\}$, implying $f_1(b) \neq f_2(c)$. Therefore
the $\sigma$-reduct of $\sD'$ restricted to $P^{\sD'}$ is a strong amalgam of $\sA,
\sB, \sC$, proving that $K$ has the SAP.

\begin{lemma} \label{lemma:SAPencoding}
  Let $\sA$ be a $\rho$-structure and let $\sB$ be a homogeneous first-order expansion of $\sA$
  with age $K$.  If $K$ has the HP and the SAP, then $\newage{K}$ has the HP and the SAP as
  well.
 \end{lemma}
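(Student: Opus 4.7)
The plan is to verify the hereditary property directly and to obtain SAP for $\newage{K}$ by combining a strong amalgam in $K$ on the $P$-parts with a free amalgam of all remaining relations (those from $\theta$ and $\rho$).

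For HP, take $\sC\in\newage{K}$ and an induced substructure $\sC'$. Conditions~(1) and~(2) of Definition~\ref{definition:newage} pass to $\sC'$ from $\sC$ directly (using the HP of $K$ for~(1), and the fact that separatedness and the location of $\sigma$-relations are inherited for~(2)). For~(3), any valid $w$-code in $\sC'$ is a fortiori a valid $w$-code in $\sC$, so the corresponding tuple lies in $R_w^{\sC}$, and hence, because $\sC'$ is induced and all $a_i$ lie in $C'$, it lies in $R_w^{\sC'}$.

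For SAP, suppose embeddings $f_1\colon \sA'\to\sB'$ and $f_2\colon \sA'\to\sC'$ between structures in $\newage{K}$ are given. Identify the images of $\sA'$ and rename elements so that $B'\cap C' = A'$. Apply SAP of $K$ to the $\sigma$-reducts of the restrictions of $\sB'$ and $\sC'$ to their $P$-parts over the corresponding restriction of $\sA'$ to obtain $\sD_0\in K$ with domain $P^{\sB'}\cup P^{\sC'}$ and $P^{\sB'}\cap P^{\sC'} = P^{\sA'}$. Define $\sD'$ on domain $B'\cup C'$ by setting $P^{\sD'} := P^{\sB'}\cup P^{\sC'}$, interpreting each $R\in\sigma$ via $\sD_0$, and interpreting each relation symbol $R\in\theta$ as $R^{\sB'}\cup R^{\sC'}$. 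The natural inclusions $\sB'\hookrightarrow\sD'$ and $\sC'\hookrightarrow\sD'$ are then embeddings whose ranges meet only in $A'$, as required by SAP.

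The main obstacle is to check that $\sD'\in\newage{K}$, in particular condition~(3) of Definition~\ref{definition:newage} --- this is precisely where the non-strong attempt sketched just before the statement breaks down. The key observation is that because $S^{\sD'}$ is defined as a free union, any $4$-tuple in $S^{\sD'}$ has all its entries inside $B'$ or all inside $C'$. Given a valid $w$-code $(a_1,\ldots,a_{|w|},c_1,\ldots,c_{|w|})$ in $\sD'$, applying this to $S^{\sD'}(a_1,a_2,c_1,c_2)$ puts $\{a_1,a_2,c_1,c_2\}$ inside one side, say $B'$; then $S^{\sD'}(a_1,a_k,c_1,c_k)$ for each $k\geq 3$ forces the remaining $a_k,c_k$ into $B'$ too. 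Hence the tuple is a valid $w$-code of $\sB'\in\newage{K}$, and condition~(3) applied to $\sB'$ gives $(a_1,\ldots,a_{|w|})\in R_w^{\sB'}\subseteq R_w^{\sD_0} = R_w^{\sD'}$. Conditions~(1) and~(2) for $\sD'$ are immediate from the construction.
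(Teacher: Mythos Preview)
Your approach is the same as the paper's: amalgamate the $P$-parts in $K$ via SAP, take the free union of the $\theta$-relations, and then use the fact that every $S$-tuple in the amalgam lies entirely on one side to trap any valid $w$-code inside $\sB'$ or $\sC'$.

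There is one small gap in your chain argument for condition~(3). From $S^{\sD'}(a_1,a_2,c_1,c_2)$ you correctly conclude that $\{a_1,a_2,c_1,c_2\}\subseteq B'$ (say). But the next step, ``$S^{\sD'}(a_1,a_k,c_1,c_k)$ forces $a_k,c_k$ into $B'$'', does not follow if it happens that $a_1,c_1\in A'=B'\cap C'$: in that case the tuple $(a_1,a_k,c_1,c_k)$ could just as well lie in $S^{\sC'}$, placing $a_k,c_k$ in $C'$ rather than $B'$. The conclusion is still true, but your chain through the fixed index $1$ does not establish it.

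The clean fix, which is exactly what the paper does, is to argue by contradiction: if the code is not entirely contained in one side, pick $x$ in the code with $x\in B'\setminus C'$ and $y$ in the code with $y\in C'\setminus B'$. Any two distinct elements of a valid $w$-code occur together in some $S$-tuple (for $a_i$ and $c_i$ use $S(a_i,a_j,c_i,c_j)$ with any $j\neq i$, available since $|w|\geq 2$), and an $S$-tuple containing both $x$ and $y$ cannot belong to $S^{\sB'}\cup S^{\sC'}$. With this adjustment your proof is complete and matches the paper's.
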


\begin{proof}
  It is routine to show that the HP for $K$ implies the HP for $\newage{K}$.

  In order to verify the SAP for $\newage{K}$, let $\sA, \sB, \sC\in \newage{K}$, and
  let $e_1\colon \sA \to \sB$, $e_2\colon \sA \to \sC$ be embeddings. Without loss of generality, $\sA$ is an induced substructure of $\sB$ and $\sC$, and the embeddings are both the identity function on $A$.
  Let us
  denote by $\sA',\sB', \sC'$ the $\sigma$-reducts of $\sA, \sB, \sC$ restricted to the subsets defined by $P$ in each of the structures. By
  definition $\sA',\sB'$ and $\sC'$ are elements of $K$. Thus there exist $\sD'
  \in K$ and embeddings $f_1': \sB' \to \sD' ,f_2': \sC'  \to \sD'$ that
  witness the SAP over $\sA'$; by the SAP, without loss of generality the
  domain $D'$ of $\sD'$ is just the union of $B'$ and $C'$, and $f_1,f_2$ the
  identity functions. Let $D:=B\cup C$. We define a structure $\sD$ on $D$ by
  setting $R^{\sD} :=
  R^{\sD'}$ for all  $R\in \sigma$, and $T^{\sD} = T^{\sB} \cup
 T^{\sC}$ for all $T \in \theta$. It is then 
 straightforward to check that the identity function is a $\sigma^+$-embedding
 of  both $\sB$ and $\sC$ into $\sD$.

  It remains to prove that $\sD \in \newage{K}$. By construction the conditions
  (1) and (2) of Definition~\ref{definition:newage} are satisfied in $\sD$. In
  order to see that also~(3) holds, suppose that $R_w\in\rho$ for some $w \in
  \Words$ and that $(a_1, \ldots, a_{|w|}, c_1, \ldots, c_{|w|})$ is a valid
  $w$-code in $\sD$. We claim that the elements of this code either lie
  completely in $B$, or in $C$.  Suppose there are $x,y \in \{a_1, \ldots,
  a_{|w|}, c_1, \ldots, c_{|w|}\}$ such that $x \in B \setminus C$
  and $y \in C \setminus B$.  Then $x \neq y$, and so there are $x_1, x_2, x_3,
  x_4\in D$ such that $x, y \in \{x_1, x_2, x_3, x_4\}$ and $S^\sD(x_1, x_2,
  x_3, x_4)$. However, this contradicts the definition of $S^\sD$ as the
  union of $S^\sB$ and $S^\sC$. Hence, without loss of generality, $(a_1, \ldots,
  a_{|w|}, c_1, \ldots, c_{|w|})$ is contained in $B$, such that
  $S^{\sB}(a_i,a_j,c_i,c_j)$ holds for all $i \neq j$. By definition $(a_1,
  \ldots, a_{|w|}, c_1, \ldots, c_{|w|})$ is a valid $w$-code in $\sB$.
  This implies $(a_1, \ldots, a_{|w|}) \in R^\sB$ and so $(a_1, \ldots,
  a_{|w|}) \in R^\sD$. Hence,~(3) holds for $\sD$.
\end{proof}

By Lemma~\ref{lemma:SAPencoding}, if $\sA$ has no algebraicity, and  $\sB$ is
a homogeneous first-order expansion of $\sA$ with age $K$, then $K^+$ has a Fra\"{i}ss\'{e}
limit, allowing us to define our encoding as follows.


\begin{definition} \label{definition:reducts}
  Let $\sA$ be a $\rho$-structure with no algebraicity and let $\sB$ be a
  homogeneous first-order expansion of $\sA$ with age $K$.  We define $\coding{\sA}{\sB}$,
  the \emph{encoding blow up} of $\sA$, to be the Fra\"{i}ss\'{e} limit of
  $\newage{K}$. Moreover, we define $\reducten{\sC}$ to be the  $\theta$-reduct
  of any structure $\sC$ with signature containing $\theta$. The
  $\emph{Hrushovski-encoding}$ $\hrushovski{\sA}$ is defined by
  $\hrushovski{\sA} := \reducten{ \coding{\sA}{\sB}}$. 
\end{definition}

It might be of help to the reader if we note that the operators used in the
encoding of a structure, i.e., $\coding{}{\sB}$ and $\reducten{\sC}$, bear
arrows from left to right; the operators used in the decoding of a structure, to be
defined later, bear arrows in the opposite direction.  Even though the
structure $\coding{\sA}{\sB}$ depends on the
particular homogeneous expansion $\sB$, we will show in
Proposition~\ref{prop:encoding_independence} that the Hrushovski-encoding
$\hrushovski{\sA}$ does not.
More precisely, if $\sB_1$ and $\sB_2$ are two homogeneous expansions of $\sA$, then
$\reducten{\coding{\sA}{\sB_1}}$ and $\reducten{\coding{\sA}{\sB_2}}$ are
isomorphic, justifying the notation $\hrushovski{\sA}$ for either of the two.
An illustration of relations holding in $\hrushovski{\sA}$ can be seen in Figure~\ref{F:ImageCode}.

By definition, the structure $\hrushovski{\sA}$ has the finite signature $\theta$.
In Section~\ref{sect:relationship}, we will investigate further properties of $\hrushovski{\sA}$; before that, we give the definitions which will allow us to decode a structure. 

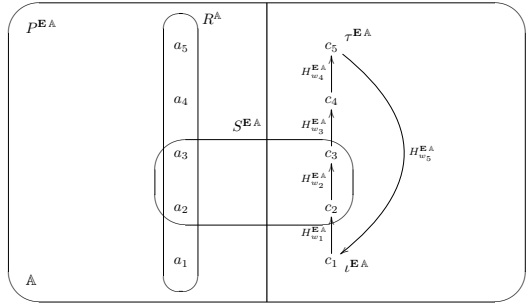
\begin{figure}[h]
\resizebox{7cm}{!}{
\setlength{\unitlength}{1cm}
\begin{picture}(12,7.2)
\put(6,3.5){\oval(12,7)}
\put(5.7,2.8){\oval(4.6,2.0)}
\put(4,3.5){\oval(0.8,6.5)}
\put(4.5,6.5){$R^{\sA}$}
\put(0.4,6.3){$P^{\hrushovski{\sA}}$}
\put(0.4,0.4){$\sA$}

\put(5.2,4){$S^{\hrushovski{\sA}}$}

\put(7.8,0.7){$\iota^{\hrushovski{\sA}}$}
\put(7.8,6.1){$\tau^{\hrushovski{\sA}}$}

\thicklines
\put(6,0){\line(0,1){7}}
\put(3.5,6){
$$
\xymatrix{
a_5 & & & c_5 \ar@/^4pc/[dddd]^{H_{w_5}^{\hrushovski{\sA}}} \\
a_4 & & & c_4 \ar[u]^{H_{w_4}^{\hrushovski{\sA}}} \\
a_3 & & & c_3 \ar[u]^{H_{w_3}^{\hrushovski{\sA}}}\\
a_2 & & & c_2 \ar[u]^{H_{w_2}^{\hrushovski{\sA}}}\\
a_1 & & & c_1 \ar[u]^{H_{w_1}^{\hrushovski{\sA}}}
}
$$
}
\end{picture}
}
\caption{The Hrushovski encoding $\EA$ of a structure $\sA$}
\label{F:ImageCode}
\end{figure}


\subsection{The decoding of an encoded structure}\label{sect:decode} Like the encoding of a structure, the decoding of a structure is a composition of two steps; first a \emph{decoding blow up}, and then a \emph{relativised reduct}.

\begin{definition} \label{definition:decoding}
  Let $\sC$ be a $\theta$-structure. Then the \emph{decoding blow up}
  $\Expansion{\sC}$ of $\sC$ is the expansion of $\sC$ in signature $\rho^+$,
  where for any symbol $R_w \in\rho$ the relation $R_w^{\Expansion{\sC}}$ is
  defined to consist of those tuples $(a_1,\dots,a_{|w|})$ for which
  there exist $c_1,\dots,c_{|w|} \in C$ such that
  $(a_1,\dots,a_{|w|},c_1,\dots,c_{|w|})$ is a valid $w$-code in $\sC$.

  For a structure $\sD$ in a signature containing $\rho^+$, the
  \emph{relativised reduct} $\Restriction{\sD}$ of $\sD$ is defined to be the
  $\rho$-reduct of $\sD$ restricted to $P^\sD$.   
  
  Finally, we set $\Decode{\sC} := \Restriction{\Expansion{\sC}}$, the
  \emph{decoding of $\sC$}, for any $\theta$-structure $\sC$.
\end{definition}

Table~\ref{table:operators} contains an informal summary of all operators, and
Figure~\ref{F:ResumeOperateurs} describes on which classes of structures they
operate. The operators bearing arrows are only auxiliary and will be useful in
the proofs; the operators we are truly interested in are $\hrushovski{}$ and
$\Decode{}$.  The last operator $\Completion{}$, assigning a finite
$\theta$-structure to a finite $\rho$-structure, will be used to compare the
finite structures which homomorphically map into $\sA$ with the CSP of its
encoding $\hrushovski{\sA}$.  It will be defined in
Section~\ref{section:homomorphismsandops}.

\begin{table}[h]
  \begin{tabularx}{\textwidth}{l|l|X}
    Operator & Name & Description   \\ \cline{1-3}

    $\coding{}{\sB}$ & encoding blow up & The first step in a Hrushovski-encoding, extends
    the domain and defines relations for the signature $\theta$ via a homogeneous expansion $\sB$ of the input. 
    \\ \cline{1-3}

    $\reducten{}$ & $\theta$-reduct & Returns the $\theta$-reduct of a structure. 
    \\ \cline{1-3}

    $\hrushovski{}$      & encoding &  Combines $\coding{}{\sB}$ and $\reducten{}$
    to obtain a $\theta$-structure from a $\rho$-structure. 
    \\ \cline{1-3}

    $\Expansion{}$   & decoding blow up & The first step in decoding a $\theta$-structure, it converts valid codes into corresponding 
    relations in $\rho$.
    \\ \cline{1-3}

    $\Restriction{}$ & relativised reduct & Restricts a structure to the set labelled by 
    $P$ and forgets the relations not in $\rho$.
    \\ \cline{1-3}

    $\Decode{}$      & decoding           &  Combines $\Restriction{}$ and
    $\Expansion{}$ to obtain the $\rho$-structure $\sA$ from the encoded $\theta$-structure
    $\hrushovski{\sA}$. 
    \\ \cline{1-3}

    $\Completion{}$  & canonical code    &     Defines in a canonical way a finite 
    $\theta$-structure from a finite $\rho$-structure in which every relation which holds in the input is witnessed by a valid
    code.
    \\ 
  \end{tabularx}
  \caption{The meaning of the operators}
  \label{table:operators}
  \end{table}

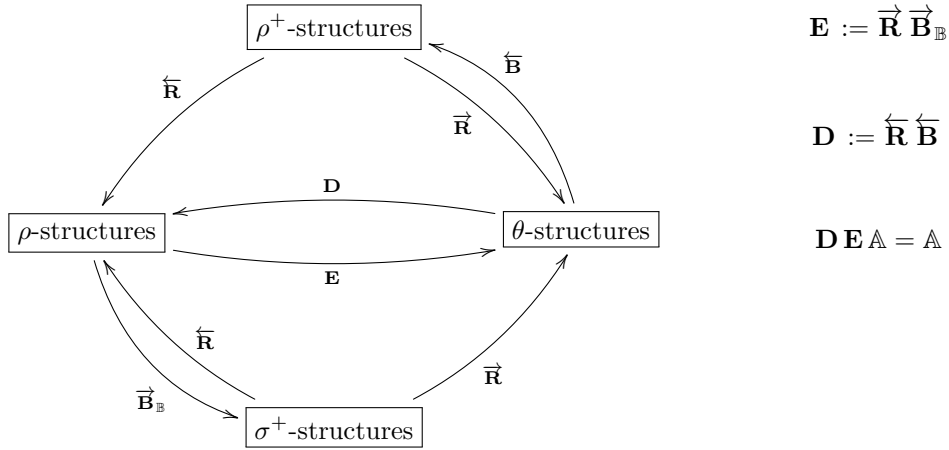
\begin{figure}[h]
$$
\xymatrix{
& \boxed{\rho^+\text{-structures}} \ar@/_1pc/[ddl]_{\Restriction{}}  \ar@/^1pc/[ddr]_{\reducten{}} & & \\ 
&  &  & \\
\boxed{\rho\text{-structures}}  \ar@/^1pc/@{<-}[rr]^{\Decode{}} \ar@/_2pc/[ddr]_{\coding{}{\sB}}    \ar@/_1pc/@{<-}[ddr]^{\Restriction{}} & & \ar@/^1pc/@{<-}[ll]^{\hrushovski{}}\boxed{\theta\text{-structures}}  
\ar@/_2pc/[uul]_{\Expansion{}} 
\\  &  &  & \\
& \boxed{\sigma^+\text{-structures}} \ar@/_1pc/[uur]_{\reducten{}} 
}\quad\quad
\xymatrix{
\hrushovski{} := \reducten{\coding{}{\sB}}\\
\Decode{} := \Restriction{\Expansion{}} \\
\Decode{\hrushovski{\sA}} = \sA
}
$$
\caption{Sources and destinations of operators.}
\label{F:ResumeOperateurs}
\end{figure}

\subsection{The relationship between $\sA$ and $\hrushovski{\sA}$}\label{sect:relationship}

We now investigate properties of the 
Hrushovski-encoding $\EA$ of a $\rho$-structure $\sA$, obtaining the following main results:
\begin{itemize}
  \item $\EA$ is independent of the first-order expansion of $\sA$ used on the
  way (Proposition~\ref{prop:encoding_independence});

  \item $\sA$ can be recovered from $\EA$ using the decoding:
  $\Decode{\EA}=\sA$ (Proposition~\ref{prop:expansionencoding}), and  in fact,
  the decoding is a pp-interpretation (Proposition~\ref{prop:ppint}); 
  
  \item $\EA$ is $\omega$-categorical if and only if $\sA$ is, and has slow
  orbit growth if and only if $\sA$ does (Proposition~\ref{prop:omegacat});

  \item There exists a uniformly continuous clone homomorphism $\xi$ from
  $\Pol(\EA)$ into $\Pol(\sA)$ (Proposition~\ref{prop:ucclonehomo}); moreover, if $\sA$ is $\omega$-categorical, then
  the injective functions in the image of $\xi$ are dense in the injective
  functions of $\Pol(\sA)$ (Corollary~\ref{cor:density});

  \item If $\sA$ is $\omega$-categorical, then the injective functions of $\Pol(\sA)$ essentially extend to functions in   $\Pol(\EA)$ (Lemma~\ref{lemma:encodingextensions}); consequently,  
   $\EA$ satisfies every 
  pseudo-h1 condition  which is satisfied in $\sA$ by injections
  (Proposition~\ref{prop:pseudo-h1-ids}).
\end{itemize}

In order to prove that $\hrushovski{\sA}$ is independent of the homogeneous first-order expansion used, we need the fact that
$\sA$ can be recovered from $\coding{\sA}{\sB}$ using $\Restriction{}$.

\begin{lemma}\label{lemma:A+hasA}
  Let $\sA$ be a $\rho$-structure with no algebraicity and let $\sB$ be a
  homogeneous first-order expansion of $\sA$ in signature $\sigma$.  Then the $\sigma$-reduct of the restriction of $\coding{\sA}{\sB}$
  to $P^{\coding{\sA}{\sB}}$ is isomorphic to $\sB$. Consequently, $\sA$ is
  isomorphic to $\Restriction{\coding{\sA}{\sB}}$.
\end{lemma}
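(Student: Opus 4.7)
Write $\sE := \coding{\sA}{\sB}$ and let $\sE_P$ denote the $\sigma$-reduct of the restriction of $\sE$ to $P^{\sE}$. Since $\sB$ is, by Theorem~\ref{thm:fraisse}, determined up to isomorphism by the fact that it is countable, homogeneous, and has age $K$, it is enough to check that $\sE_P$ enjoys these three properties. Countability is inherited from $\sE$. The consequence then follows at once: as $\sA$ is by assumption the $\rho$-reduct of $\sB$ and $\Restriction{\sE}$ is the $\rho$-reduct of $\sE_P$, any isomorphism $\sE_P \cong \sB$ restricts to an isomorphism $\Restriction{\sE} \cong \sA$.

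\textbf{Age equals $K$.} The inclusion of the age of $\sE_P$ in $K$ is immediate from clause~(1) of Definition~\ref{definition:newage} applied to the finite substructures of $\sE$ induced on finite subsets of $P^\sE$. For the converse, given any $\sA' \in K$, define a finite $\sigma^+$-structure $\sC$ on the domain $A'$ by interpreting each $\sigma$-relation as in $\sA'$, setting $P^\sC := A'$, and leaving all other symbols of $\theta$ empty. Then $\sC$ is separated (clauses (i)--(iii) of Definition~\ref{definition:separated} hold vacuously), clause~(1) of Definition~\ref{definition:newage} holds by construction, clause~(2) follows from the fact that all $\sigma$-relations of $\sC$ live in $P^\sC$, and clause~(3) holds vacuously because $\sC$ contains no valid $w$-code (its $S$-relation being empty). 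Hence $\sC \in \newage{K}$, so $\sC$ embeds into the Fra\"iss\'e limit $\sE$; such an embedding maps $A' = P^\sC$ into $P^\sE$ and therefore yields an embedding of $\sA'$ into $\sE_P$.

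\textbf{Homogeneity of $\sE_P$.} The key observation is that any finite subset $F$ of $P^\sE$, viewed as an induced $\sigma^+$-substructure of $\sE$, carries only its $\sigma$-relations together with $P$: indeed, by the separatedness of $\sE$ (which holds as $\sE \in \newage{K}$ is approximated by its finite substructures, which all belong to $\newage{K}$), $H_s^\sE$, $\iota^\sE$, $\tau^\sE$ are disjoint from $P^\sE$, and the last two coordinates of any tuple in $S^\sE$ lie outside $P^\sE$. Consequently, an isomorphism $f$ between two finite induced substructures of $\sE_P$ is automatically an isomorphism between the corresponding finite induced $\sigma^+$-substructures of $\sE$. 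By the homogeneity of the Fra\"iss\'e limit $\sE$, $f$ extends to an automorphism $g$ of $\sE$; since $g$ preserves $P^\sE$, its restriction to $P^\sE$ is an automorphism of $\sE_P$ extending $f$.

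\textbf{Main obstacle.} There is no serious obstacle: the argument is a standard unpacking of Fra\"iss\'e theory once the clauses of Definitions~\ref{definition:validencode}--\ref{definition:newage} are read carefully. The only mildly subtle point is verifying that the trivial $\sigma^+$-structure built from $\sA' \in K$ really belongs to $\newage{K}$, which relies on the fact that no valid $w$-code can appear in a structure whose $S$-relation is empty, together with the observation above that separatedness makes the $\theta$-relations invisible on $P^\sE$.
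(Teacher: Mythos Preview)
Your proof is correct and follows essentially the same approach as the paper's: both verify that the $\sigma$-reduct restricted to $P$ has age $K$ (via the trivial $\sigma^+$-expansion of any $\sA'\in K$) and is homogeneous (using that separatedness kills the $\theta$-relations on $P$), then invoke Fra\"iss\'e's theorem. Your write-up simply spells out in more detail what the paper leaves implicit; the only cosmetic slip is writing ``$\sE \in \newage{K}$'' when you mean that the finite substructures of $\sE$ lie in $\newage{K}$.
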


\begin{proof}
  Let $\sigma$ be the signature of $\sB$. It follows from the definitions that
  the age of the $\sigma$-reduct of $\coding{\sA}{\sB}$ restricted to
  $P^{\coding{\sA}{\sB}}$ is contained in the age of $\sB$.  On the other hand,
  for every $\sC$ in the age of $\sB$, there is a structure $\sC'$, obtained by
  setting $P^{\sC'} = C$ and leaving the other relations empty, such that the
  $\sigma$-reduct of $\sC'$ restricted to $P^{\sC'}$ is $\sC$.  Hence the two
  ages are the same.  Since $\coding{\sA}{\sB}$ is homogeneous and since the
  only relations defined on the restriction of $\coding{\sA}{\sB}$ to $P$ are
  from $\sigma$, it follows that the $\sigma$-reduct of $\coding{\sA}{\sB}$
  restricted to $P^{\coding{\sA}{\sB}}$ is homogeneous.  Finally, by
  Theorem~\ref{thm:fraisse} it is isomorphic to $\sB$, and
  taking the $\rho$-reduct of the two structures yields the desired result.
\end{proof}

It follows from Lemma~\ref{lemma:A+hasA} that we may identify the structure
$\sA$ with $\Restriction{\coding{\sA}{\sB}}$.
\begin{center}
        \fbox{\bf From this point onward, we make this identification for the sake
        of simplicity. }
\end{center}
This means that we see $\coding{\sA}{\sB}$ as an expansion of $\sB$ by elements outside its domain (those not in the set named by $P$), and by relations in the signature $\theta$.

\begin{proposition}\label{prop:encoding_independence}
  Let $\sA$ be a $\rho$-structure with no algebraicity, and let $\sB_1$
  and $\sB_2$ be two homogeneous first-order expansions of $\sA$. Then 
  $\reducten{\coding{\sA}{\sB_1}}$ and $\reducten{\coding{\sA}{\sB_2}}$ are isomorphic. Consequently, $\EA$ is independent of the homogeneous first-order expansion used in its construction.
\end{proposition}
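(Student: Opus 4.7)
The plan is to construct a common homogeneous first-order expansion $\sB$ of $\sA$ that refines both $\sB_1$ and $\sB_2$, and then identify $\coding{\sA}{\sB_i}$ with the $\sigma_i^+$-reduct of $\coding{\sA}{\sB}$; taking $\theta$-reducts on both sides will then give the claim.

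For the common expansion, I would take $\sB$ to have domain $A$ and signature $\sigma := \sigma_1 \cup \sigma_2$ (disjointified outside $\rho$), interpreting each $\sigma_i$-symbol as in $\sB_i$. The crucial identity $\Aut(\sB_1) = \Aut(\sB_2) = \Aut(\sA)$ holds in full generality: the $\sigma_i$-relations are first-order $\emptyset$-definable in $\sA$ and hence invariant under $\Aut(\sA)$, giving $\Aut(\sA) \leq \Aut(\sB_i)$, while $\sB_i$ being an expansion of $\sA$ yields the reverse inclusion. I would then verify that $\sB$ is homogeneous: a finite partial isomorphism in $\sB$ is in particular one in $\sB_1$, so by homogeneity of $\sB_1$ it extends to some $\alpha \in \Aut(\sB_1) = \Aut(\sA)$, and $\Aut(\sA)$-invariance of the $\sigma_2$-relations ensures $\alpha \in \Aut(\sB_2)$ as well, hence $\alpha \in \Aut(\sB)$. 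That $K_{\sB}$ has SAP follows from the no-algebraicity assumption on $\sA$, which is independent of the chosen homogeneous expansion.

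Next I would identify $\coding{\sA}{\sB_i}$ with the $\sigma_i^+$-reduct $X_i$ of $\coding{\sA}{\sB}$. A direct unwinding of Definition~\ref{definition:newage} shows that the $\sigma_i^+$-reducts of structures in $\newage{K_{\sB}}$ are exactly the members of $\newage{K_i}$: clauses (1)--(3) are preserved under forgetting the $\sigma \setminus \sigma_i$-relations, and conversely any structure in $\newage{K_i}$ admits an expansion in $\newage{K_{\sB}}$ obtained by embedding its $\sigma_i$-reduct on $P$ into $\sB_i$ and reading off the missing relations from $\sB$. Thus $X_i$ has age $\newage{K_i}$. I would then argue that $X_i$ is $\sigma_i^+$-homogeneous: given a finite $\sigma_i^+$-partial isomorphism $\phi$ with domain $F$, the restriction $\phi|_{F \cap P}$ is a $\sigma_i$-partial isomorphism in $\sB_i$ and, by homogeneity of $\sB_i$, extends to some $\alpha \in \Aut(\sB_i) = \Aut(\sA)$. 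The $\Aut(\sA)$-invariance of the $\sigma \setminus \sigma_i$-relations then forces $\phi|_{F \cap P}$ itself to preserve them, so $\phi$ is in fact a full $\sigma^+$-partial isomorphism and extends, by homogeneity of $\coding{\sA}{\sB}$, to an automorphism of $\coding{\sA}{\sB}$, which restricts to a $\sigma_i^+$-automorphism of $X_i$. Fra\"iss\'e's theorem then gives $X_i \cong \coding{\sA}{\sB_i}$.

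Taking $\theta$-reducts on both sides yields $\reducten{\coding{\sA}{\sB_1}} \cong \reducten{\coding{\sA}{\sB}} \cong \reducten{\coding{\sA}{\sB_2}}$, concluding the proof. The main technical point is the $\sigma_i^+$-homogeneity of $X_i$: promoting a $\sigma_i^+$-partial isomorphism to a $\sigma^+$-partial isomorphism hinges on the identity $\Aut(\sB_i) = \Aut(\sA)$, which transfers $\Aut(\sA)$-invariance of the $\sigma \setminus \sigma_i$-relations to the partial-isomorphism level via homogeneity of $\sB_i$, circumventing the need for any $\omega$-categoricity assumption.
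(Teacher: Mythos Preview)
Your proof is correct and follows essentially the same strategy as the paper: introduce a common homogeneous expansion, then use Fra\"iss\'e's theorem to identify the relevant blow-ups. The paper organises this slightly differently---it reduces to the case $\sigma_1\subseteq\sigma_2$ and then \emph{expands} $\coding{\sA}{\sB_1}$ (via formulas relativised to $P$) to a structure with the same age as $\coding{\sA}{\sB_2}$, whereas you go the other direction and show that the $\sigma_i^+$-\emph{reducts} of $\coding{\sA}{\sB}$ coincide with $\coding{\sA}{\sB_i}$. The paper's route makes the homogeneity step a one-liner (``first-order expansion of a homogeneous structure is homogeneous''), while your reduct direction requires the promotion argument via $\Aut(\sB_i)=\Aut(\sA)$; conversely, your version is more symmetric and avoids the preliminary reduction. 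The age comparisons and the final passage to $\theta$-reducts are the same in both.
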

\begin{proof}
  First observe that if $\sB_1$ and $\sB_2$ are two homogeneous expansions of
  $\sA$ in signatures $\sigma_1$ and $\sigma_2$, respectively, then so is the
  structure in signature $\sigma_1\cup \sigma_2$ which has all the relations of
  both $\sB_1$ and $\sB_2$; hence, to prove the lemma it is sufficient to
  consider the case where $\sigma_1\subseteq \sigma_2$ and $\sB_1$ is the $\sigma_1$-reduct of $\sB_2$.

Since $\sB_1$ is an expansion of $\sA$, and since $\sB_2$ is first-order
definable in $\sA$, we have that $\sB_2$ is first-order definable in $\sB_1$.
By Lemma~\ref{lemma:A+hasA} we have that the $\sigma_1$-reduct of the
restriction of $\coding{\sA}{\sB_1}$ to the set named by $P$ is isomorphic to
$\sB_1$, and a similar statement holds for $\sB_2$. Let $\phi$ be a formula
over the language $\sigma_1$ which defines some relation of $\sB_2$ over
$\sB_1$, and denote by $\phi'$ the formula obtained from $\phi$ by restricting
all variables to $P$. We expand $\coding{\sA}{\sB_1}$ by all relations defined
via formulas of this form to obtain a structure $\sC$ in signature
$\sigma_2\cup\theta$. Being a first-order expansion of a homogeneous structure,
$\sC$ is homogeneous. By the above, the $\sigma_2$-reduct of the restriction of
$\sC$ to the set named by $P$ is isomorphic to $\sB_2$.

We claim that $\sC$ and $\coding{\sA}{\sB_2}$ have the same age. It is clear
that the age of $\sC$ is contained in the age of $\coding{\sA}{\sB_2}$: no
relations from $\rho$ have been added to $\coding{\sA}{\sB_1}$ in the
expansion, and hence the definition for being a member of the age of
$\coding{\sA}{\sB_2}$ is still satisfied by all finite substructures of $\sC$.
Conversely, let $\sF$ be a member of the age of $\coding{\sA}{\sB_2}$. Denote
by $\sF_2$ the restriction of $\sF$ to the set named by $P$. Then $\sF_2$
embeds into $\sC$; without loss of generality it is an induced  substructure
thereof. Denote by $\sF_1$ the $(\sigma_1\cup\theta)$-reduct of $\sF_2$. The
structure $\coding{\sA}{\sB_1}$ has a finite substructure $\sD$ whose
restriction to the set named by $P$ equals  $\sF_1$, and whose $\theta$-reduct
is isomorphic to the $\theta$-reduct of $\sF$ via an isomorphism which fixes
all elements of $\sF_1$. The structure induced in $\sC$ by the domain of $\sD$
then is isomorphic to $\sF$, proving the required inclusion.

Since $\sC$ and $\coding{\sA}{\sB_2}$ are homogeneous, they are isomorphic by
Theorem~\ref{thm:fraisse}. Hence, their $\theta$-reducts, which equal
$\reducten{\coding{\sA}{\sB_1}}$ and $\reducten{\coding{\sA}{\sB_2}}$
respectively, are also isomorphic. 
\end{proof}

Next, we  prove that $\Decode{}$ indeed decodes $\hrushovski{\sA}$. 

\begin{proposition}\label{prop:expansionencoding}
  Let $\sA$ be a homogenizable $\rho$-structure with no algebraicity.  Then $R^{\sA} =
  R^{\DEA}$ for all $R\in\rho$, and thus $\sA$ and
  $\Decode{\hrushovski{\sA}}$ are isomorphic.
\end{proposition}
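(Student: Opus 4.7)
The plan is to prove the equality $R_w^{\sA} = R_w^{\DEA}$ for each $R_w \in \rho$ directly on the common domain, noting that $A = P^{\coding{\sA}{\sB}} = P^{\hrushovski{\sA}} = P^{\DEA}$: the second equality holds because passing to the $\theta$-reduct does not touch the set named by $P$, and the third because $\Restriction{}$ restricts precisely to $P$. Once the relations are shown to agree on this shared domain, the isomorphism in the conclusion follows with the identity as witness.

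The inclusion $R_w^{\DEA} \subseteq R_w^{\sA}$ is the easy direction. Suppose $(a_1,\dots,a_{|w|}) \in R_w^{\DEA}$. By definition of $\Expansion{}$ and $\Restriction{}$, there exist $c_1,\dots,c_{|w|}$ in $\hrushovski{\sA}$ such that $(a_1,\dots,a_{|w|},c_1,\dots,c_{|w|})$ is a valid $w$-code. Since validity of a code only involves symbols from $\theta$, the same tuple is a valid $w$-code in $\coding{\sA}{\sB}$. The finite substructure induced on these elements lies in the age $\newage{K}$ of $\coding{\sA}{\sB}$, so condition~(3) of Definition~\ref{definition:newage} forces $(a_1,\dots,a_{|w|}) \in R_w^{\coding{\sA}{\sB}}$, which equals $R_w^{\sA}$ under our standing identification from Lemma~\ref{lemma:A+hasA}.

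For the reverse inclusion, the idea is to manufacture a valid $w$-code for any given tuple in $R_w^{\sA}$. Starting from $(a_1,\dots,a_{|w|}) \in R_w^{\sA}$, I will build a finite $\sigma^+$-structure $\sF$ with domain $\{a_1,\dots,a_{|w|}\} \cup \{c_1,\dots,c_{|w|}\}$ (for fresh distinct $c_i$'s) by setting $P^{\sF} = \{a_1,\dots,a_{|w|}\}$, equipping the $P$-part with the $\sigma$-structure it inherits from $\sB$, defining $\iota^{\sF} = \{c_1\}$, $\tau^{\sF} = \{c_{|w|}\}$, $H_{w_i}^{\sF} \supseteq \{(c_i, c_{i+1 \bmod |w|})\}$ for each $i$, and $S^{\sF} = \{(a_i,a_j,c_i,c_j) : 1 \leq i \neq j \leq |w|\}$, with all other $\theta$-relations left empty. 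The extension property of the Fra\"{\i}ss\'e limit $\coding{\sA}{\sB}$ then extends the inclusion of its substructure induced on $\{a_1,\dots,a_{|w|}\}$ to an embedding of $\sF$ into $\coding{\sA}{\sB}$; the images of the $c_i$'s furnish the required valid $w$-code witnessing $(a_1,\dots,a_{|w|}) \in R_w^{\DEA}$.

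The main obstacle is verifying that $\sF \in \newage{K}$. Conditions (1) and (2) of Definition~\ref{definition:newage} are built into the construction. For condition (3), one must show that the only valid $v$-code in $\sF$ (for any $v \in \Words$) is the one we created. Since the $c'_j$ must be distinct elements outside $P^{\sF}$, they lie in $\{c_1,\dots,c_{|w|}\}$; the constraints $\iota(c'_1)$ and $\tau(c'_{|v|})$ pin $c'_1 = c_1$ and $c'_{|v|} = c_{|w|}$; a short induction, using the fact that the $H$-relations among $\{c_1,\dots,c_{|w|}\}$ form the unique directed cycle with label sequence $w_1 w_2 \cdots w_{|w|}$ and that the $c'_j$ are distinct, forces $|v| = |w|$, $v = w$, and $c'_j = c_j$ for all $j$; finally the $S$-relations pin $a'_j = a_j$. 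Thus condition~(3) reduces to the single requirement $(a_1,\dots,a_{|w|}) \in R_w^{\sF}$, which holds because this tuple is inherited from $R_w^{\sB} = R_w^{\sA}$ via the $\sigma$-structure placed on $P^{\sF}$.
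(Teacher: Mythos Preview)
Your proof is correct and follows essentially the same approach as the paper: both directions are argued identically, and the construction of the auxiliary $\sigma^+$-structure adjoining fresh code elements is the same (the paper calls it $\sG$). The only difference is that the paper dismisses the verification that this structure lies in $\newage{K}$ as ``routine'', whereas you actually carry out the check of condition~(3) by tracing the cycle structure of the $H$-edges to show uniqueness of the valid code --- a welcome level of detail, and your argument there is sound.
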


\begin{proof}
  Let $\sB$ be a  
  homogeneous first-order expansion of $\sA$ in signature $\sigma$. Let $R_{w}$
  be any symbol of $\rho$.  First,
  note that $R_w^{\DEA} \subseteq
  R_w^{\sA}$ by Definition~\ref{definition:newage}~(3). 
  In order
  to prove the converse, let $(a_1,\ldots,a_{|w|})\in R_w^{\sA}$ be
  arbitrary, and let $\sF$ be the $\sigma$-structure induced by
  $\{a_1,\ldots,a_{|w|}\}$ in $\coding{\sA}{\sB}$.  We construct a $\sigma^+$
  structure $\sG$ by extending $\sF$ by distinct elements $c_1,\ldots,c_{|w|}$ and
  introducing relations from $\theta$ in such a way that
  $(a_1,\ldots,a_{|w|},c_1,\ldots,c_{|w|})$ is a valid $w$-code (but no other newly
  introduced tuples are related). It is routine to verify that $\sG \in K^+$.
  Since $\coding{\sA}{\sB}$ is homogeneous and $\sG$ is in the age of
  $\coding{\sA}{\sB}$, there exist $d_1,\ldots,d_{|w|}$ in $\coding{\sA}{\sB}$ such
  that the structure induced by $\{a_1,\ldots,a_{|w|},d_1,\ldots,d_{|w|}\}$ in
  $\coding{\sA}{\sB}$ is isomorphic to $\sG$. It follows that
  $(a_1,\ldots,a_{|w|})\in R_w^{\DEA}$. 
\end{proof}

\begin{proposition}\label{prop:ppint}
Let $\sC$ be a $\theta$-structure. Then $\Decode{\sC}$ has a pp-interpretation in $\sC$.
\end{proposition}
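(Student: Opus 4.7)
The plan is to verify the claim directly by exhibiting explicit primitive positive formulas witnessing the interpretation. Unrolling the definitions, the domain of $\Decode{\sC}$ is $P^{\sC}$, and for each $R_w\in\rho$ the relation $R_w^{\Decode{\sC}}$ consists precisely of those tuples $(a_1,\ldots,a_{|w|})\in (P^{\sC})^{|w|}$ for which there exist witnesses $c_1,\ldots,c_{|w|}\in C$ making $(a_1,\ldots,a_{|w|},c_1,\ldots,c_{|w|})$ a valid $w$-code in $\sC$. Since a valid $w$-code is defined in Definition~\ref{definition:validencode} by a conjunction of atomic formulas of $\theta$, this relation already has the shape of a pp-formula.

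Concretely, I would take the interpretation to be one-dimensional, with coordinate domain
\[
  \delta(x) := P(x),
\]
and with equivalence relation $x_1\sim x_2 := (x_1=x_2)$, which is trivially pp. For each $R_w\in\rho$ with $|w|=k$, I would set
\[
  \psi_{R_w}(x_1,\ldots,x_k) := \exists y_1\cdots\exists y_k\; \varphi_w(x_1,\ldots,x_k,y_1,\ldots,y_k),
\]
where
\[
  \varphi_w := \bigwedge_{i=1}^{k} P(x_i)\;\wedge\; \iota(y_1)\;\wedge\;\tau(y_k)\;\wedge\;\bigwedge_{i=1}^{k} H_{w_i}(y_i, y_{(i\bmod k)+1})\;\wedge\;\bigwedge_{1\le i\ne j\le k} S(x_i,x_j,y_i,y_j).
\]
Each $\psi_{R_w}$ is a pp-formula over $\theta$, and by Definition~\ref{definition:decoding} it defines exactly $R_w^{\Expansion{\sC}}$; restricting its free variables to $P^{\sC}$ via the coordinate formula $\delta$ yields precisely $R_w^{\Decode{\sC}}$.

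With these choices, the identity map on $P^{\sC}$ provides the required isomorphism between $\Decode{\sC}$ and the structure produced by the interpretation, so the three conditions of a pp-interpretation are met. There is no real obstacle here, as the definition of $\Expansion{\sC}$ was built to be pp over $\theta$; the only mild care needed is to check that the case $|w|=2$ (where the cyclic index condition for $H_{w_i}$ degenerates) still matches Definition~\ref{definition:validencode} verbatim, which it does since that definition imposes $j\equiv i+1\pmod{|w|}$ without additional distinctness conditions beyond those recorded explicitly in $\varphi_w$.
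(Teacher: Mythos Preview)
Your proof is correct and takes essentially the same approach as the paper: a one-dimensional interpretation with domain $P^{\sC}$, trivial equivalence relation, and relations defined by the pp-formulas built into the definition of a valid $w$-code. The paper's own proof is a terse two-sentence sketch of exactly this; your version simply spells out the formulas explicitly.
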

\begin{proof}
The dimension of the interpretation is~1, the pp-definable subset of $C$ is $P^\sC$, and the equivalence relation on $C$ can be chosen to be trivial. The definitions of the relations of $\Decode{\sC}$ are primitive positive.
\end{proof}

Next, we 
investigate the relationship of the orbits of $\Aut(\sA)$ with those of $\Aut(\EA)$, showing that $\omega$-categoricity and slow orbit growth are preserved by the encoding.

\begin{proposition}\label{prop:omegacat}
  Let $\sA$ be a homogenizable $\rho$-structure with no algebraicity.
  \begin{enumerate}
    \item $\sA$ is $\omega$-categorical if and only if $\EA$ is.

    \item Denote, for all $n\geq 1$, by
    $f(n)$ and $g(n)$ the (possibly infinite) number of orbits of $n$-tuples under the action of
    $\Aut(\sA)$ and $\Aut(\EA)$, respectively. Then $f(n)\leq g(n)$ for all
    $n\geq 1$, and $g(n) \leq 2^{6 |\Sigma| n^4} f(n)$. In particular, $\sA$ has slow
    orbit growth if and only if $\EA$ does.
\end{enumerate}
\end{proposition}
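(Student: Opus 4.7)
The plan is to first reduce (1) to (2) by the Ryll–Nardzewski theorem: a countable structure is $\omega$-categorical iff its automorphism group has finitely many orbits on $n$-tuples for every $n\geq 1$. Hence, once (2) holds, finiteness of $f(n)$ for all $n$ is equivalent to finiteness of $g(n)$ for all $n$, and slow orbit growth transfers in either direction since multiplication by $2^{O(n^4)}$ preserves sub-double-exponential growth.

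For the lower bound $f(n)\leq g(n)$, I will exhibit an injection from orbits of $\Aut(\sA)$ on $A^n$ into orbits of $\Aut(\EA)$ on $(EA)^n$. Identifying $A$ with $P^{\EA}$, an $n$-tuple in $A^n$ is also an $n$-tuple of $\EA$. Suppose $\alpha\in\Aut(\EA)$ sends some $\bar{a}\in A^n$ to $\bar{b}\in A^n$. Because $P\in\theta$, $\alpha$ preserves and reflects $P$, so its restriction to $P^{\EA}=A$ is a bijection of $A$. By Proposition~\ref{prop:expansionencoding}, each $R_w^{\sA}$ coincides with $R_w^{\Decode{\EA}}$, which is defined by the existence of a valid $w$-code—a configuration expressed purely in $\theta$—and is therefore preserved and reflected by $\alpha$. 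Thus $\alpha|_A\in\Aut(\sA)$ maps $\bar{a}$ to $\bar{b}$, yielding the claimed injection on orbits.

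For the upper bound, the case $f(n)=\infty$ is vacuous, so I assume $f(n)<\infty$. Let $\sB$ be the homogeneous first-order expansion used to form $\coding{\sA}{\sB}$. Since $\sB$ is a first-order expansion of $\sA$, its additional relations are $\Aut(\sA)$-invariant, hence $\Aut(\sA)\subseteq\Aut(\sB)$; the reverse inclusion is immediate, so $\Aut(\sA)=\Aut(\sB)$ and the orbit counts of $\sA$ and $\sB$ on tuples of any fixed length agree. Since $\EA=\reducten{\coding{\sA}{\sB}}$, we have $\Aut(\coding{\sA}{\sB})\subseteq\Aut(\EA)$, so the orbits of $\Aut(\EA)$ on $n$-tuples are coarser than those of $\Aut(\coding{\sA}{\sB})$; it therefore suffices to bound the latter. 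By homogeneity of $\coding{\sA}{\sB}$, these orbits are classified by quantifier-free $\sigma^+$-types of $n$-tuples. Such a type decomposes into (i) its quantifier-free $\theta$-type on the full $n$-tuple, which in particular records which positions lie in $P$, and (ii) the quantifier-free $\sigma$-type on the subtuple in $P^{\coding{\sA}{\sB}}$, which is an $\Aut(\sA)$-orbit on a tuple of length $\leq n$, so contributes at most $f(n)$ choices (orbits on shorter tuples being projections of orbits on $n$-tuples). The number of $\theta$-types is bounded by counting, for each of the three unary symbols $P,\iota,\tau$, the $2^n$ possible membership patterns ($2^{3n}$ total); for each of the $|\Sigma|$ binary symbols $H_s$, the $2^{n^2}$ patterns ($2^{|\Sigma|n^2}$ total); and for the $4$-ary symbol $S$, the $2^{n^4}$ patterns. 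Altogether this gives at most $2^{3n+|\Sigma|n^2+n^4}$, and an easy estimate shows $3n+|\Sigma|n^2+n^4\leq 6|\Sigma|n^4$ for all $n\geq 1$ and $|\Sigma|\geq 1$. Combining (i) and (ii) yields $g(n)\leq 2^{6|\Sigma|n^4}f(n)$.

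The main obstacle is the decomposition of the $\sigma^+$-qftp and the passage from $\Aut(\sB)$-orbits to $\Aut(\sA)$-orbits on the $P$-part of a tuple, which crucially uses the fact that the added relations of a first-order expansion are already $\Aut(\sA)$-invariant. Once this identification is in place, the counting of $\theta$-types is routine and the exponent $6|\Sigma|n^4$ is a comfortable upper bound.
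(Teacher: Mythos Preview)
Your argument follows essentially the same route as the paper: reduce (1) to (2), pass to the homogeneous structures $\sB$ and $\coding{\sA}{\sB}$ (where $\Aut(\sA)=\Aut(\sB)$), identify orbits with quantifier-free types, and bound the latter by a product count. The paper actually observes $\Aut(\coding{\sA}{\sB})=\Aut(\EA)$ (first-order interdefinability) and uses this for both inequalities, whereas you only use the trivial inclusion $\Aut(\coding{\sA}{\sB})\subseteq\Aut(\EA)$ for the upper bound and give a direct restriction argument via pp-definability (Proposition~\ref{prop:expansionencoding}) for the lower bound; both are fine.

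There is one small omission in your upper bound. Your decomposition of the quantifier-free $\sigma^+$-type into (i) the $\theta$-relations on the full tuple and (ii) the $\sigma$-type on the $P$-part does not account for equality among the coordinates outside $P$: if $c_1\neq c_2$ lie outside $P$ and carry no $\theta$-relations, then $(c_1,c_1)$ and $(c_1,c_2)$ have the same invariants (i) and (ii) but lie in different orbits. Equivalently, if ``quantifier-free $\theta$-type'' is meant to include equality, then your count $2^{3n+|\Sigma|n^2+n^4}$ is short by a factor $2^{n^2}$. The paper handles this explicitly via a $2^{r^2}$ factor for identifications among the $r$ entries outside $P$. Inserting this factor costs nothing: $3n+(|\Sigma|+1)n^2+n^4\leq(|\Sigma|+5)n^4\leq 6|\Sigma|n^4$ for all $n\geq 1$ and $|\Sigma|\geq 1$, so the conclusion $g(n)\leq 2^{6|\Sigma|n^4}f(n)$ is unaffected.
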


\begin{proof}
  Let us recall that a structure is $\omega$-categorical if for every $n \geq 1$ the number of $n$-ary orbits of its automorphism group is finite. Thus
  (2) implies (1).

  To prove (2), let $\sB$ be a
  homogeneous first-order expansion of $\sA$ in signature $\sigma$. Note that then $\Aut(\sA) = \Aut(\sB)$. Similarly, since $\coding{\sA}{\sB}$ and $\EA$ are first-order interdefinable, their automorphism groups are equal. Hence, it suffices to prove the statement for the homogeneous structures $\sB$ and $\coding{\sA}{\sB}$ instead of $\sA$ and $\EA$.

  Since $\coding{\sA}{\sB}$ is homogeneous, two tuples $(b_1,\ldots,b_n)$ and
  $(b_1',\ldots,b_n')$ lie in the same orbit of $\Aut(\coding{\sA}{\sB})$ if
  and only if the map that sends every $b_i$ to $b_i'$ is an isomorphism
  between the substructures of $\coding{\sA}{\sB}$ induced by
  $\{b_1,\ldots,b_n\}$ and $\{b_1',\ldots,b_n'\}$. In other words, the orbit of
  $(b_1,\ldots,b_n)$ under $\Aut(\coding{\sA}{\sB})$ is completely
  determined by its isomorphism type, i.e. the relations and equalities that hold for the entries of $(b_1,\ldots,b_n)$. The same
  statement is true for $\sB$. By Lemma~\ref{lemma:A+hasA} $\sB$ is equal to the restriction of $\coding{\sA}{\sB}$ to $P^{\coding{\sA}{\sB}}$, and every isomorphism type of $\sB$ corresponds to an isomorphism type of a tuple in $\coding{\sA}{\sB}$ that lies entirely in $P^{\coding{\sA}{\sB}}$. Therefore $g(n)\geq f(n)$.
  
  For the second inequality, we estimate the number of isomorphism types of $n$-ary tuples $(b_1,\ldots,b_n)$
  in $\coding{\sA}{\sB}$. First note, that there are $2^n$  partitions of the coordinates into elements that satisfy $P$ and elements
  that do not. Let us first count the number of isomorphism types for a fixed such
  partition with $m \geq 1$ many entries in $P$. Without loss of
  generality let it be the first $m$ entries and let $r := n-m$. By assumption,
  there are $f(m)$ many ways of introducing relations from $\sigma \cup \{=\}$
  on $(b_1,\ldots,b_m)$ so that it embeds into $\sB$, or equivalently, into $P^{\coding{\sA}{\sB}}$. There are less than
  $2^{r^2}$ ways of identifying the remaining $r$ entries. Counting
  further the different ways of introducing relations from $\theta$ on
  $(b_1,\ldots,b_n)$ such that the structure induced on $\{b_1,\ldots,b_n\}$ is
  separated, gives us an upper bound of $2^{r^2} \cdot 2^{2r + |\Sigma| r^2 +
  r^2m^2} f(m) \leq 2^{5 |\Sigma| n^4} f(m)$. By the monotonicity of $f$, this
  is smaller than $2^{5|\Sigma| n^4} f(n)$.  In the special case
 when $m = 0$ entries satisfy $P$, we analogously get an upper bound of $2^{5
  |\Sigma| n^4}$ orbits. Summing up over all  partitions, this gives us an upper bound $g(n) \leq 2^{n+5 |\Sigma| n^4}
  f(n) \leq 2^{6 |\Sigma| n^4} f(n)$, which concludes the proof.
\end{proof}

We now turn to the polymorphism clones of $\sA$ and $\EA$. An immediate consequence of Proposition~\ref{prop:ppint} is that if 
$\Pol(\hrushovski{\sA})$ satisfies non-trivial identities locally, then so does $\Pol(\sA)$. 

\begin{proposition} \label{prop:ucclonehomo}
  Let $\sA$ be a homogenizable $\rho$-structure with no algebraicity.  Then the map $\xi$
  that sends every $f \in \Pol(\hrushovski{\sA})$ to its restriction to $P^\sA$ 
  is a uniformly continuous clone homomorphism from $\Pol(\hrushovski{\sA})$ to
  $\Pol(\sA)$.
\end{proposition}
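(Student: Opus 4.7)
The proposition has three parts: $\xi$ is well-defined (lands in $\Pol(\sA)$), it is a clone homomorphism, and it is uniformly continuous. The main ingredient is Proposition~\ref{prop:ppint}, which tells us that $\sA=\Decode{\EA}$ is obtained from $\EA$ by a $1$-dimensional pp-interpretation with trivial equivalence relation, namely restriction to the pp-definable set $P^{\EA}$.

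\textbf{Well-definedness.} I would first argue that $f|_{P^{\EA}}$ has image inside $P^{\EA}$, and hence, under the identification $P^{\EA}=A$, is a well-defined function on $A$. This is because the unary symbol $P$ lies in the signature $\theta$ of $\EA$, so any $f\in\Pol(\EA)$ preserves $P^{\EA}$ componentwise. Next, I would show that this restriction lies in $\Pol(\sA)$: by Proposition~\ref{prop:ppint}, every relation $R^\sA$ for $R\in\rho$ is pp-definable in $\EA$, and since pp-definable relations are preserved by every polymorphism, $f$ preserves $R^\sA$. Restricting to $P^{\EA}=A$ thus gives an $n$-ary homomorphism $\sA^n\to\sA$, i.e., an element of $\Pol(\sA)$.

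\textbf{Clone homomorphism.} The map $\xi$ preserves arities by definition, since restricting to a subset of the domain does not change arity. It sends the $i$-th $n$-ary projection on $EA$ to the $i$-th $n$-ary projection on $A$, since projections restrict to projections. For composition, given $f\in\Pol(\EA)$ of arity $n$ and $g_1,\dots,g_n\in\Pol(\EA)$ of arity $m$, the fact (established above) that every $g_i$ sends tuples from $P^{\EA}$ to $P^{\EA}$ lets us compute, for any $(a_1,\dots,a_m)\in (P^{\EA})^m$,
\[
\xi(f\circ(g_1,\dots,g_n))(a_1,\dots,a_m) = f(g_1(a_1,\dots,a_m),\dots,g_n(a_1,\dots,a_m)) = \bigl(\xi(f)\circ(\xi(g_1),\dots,\xi(g_n))\bigr)(a_1,\dots,a_m),
\]
so composition is preserved.

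\textbf{Uniform continuity.} This is essentially immediate from the definition of $\xi$ as a pointwise restriction. Given any finite $A_0\subseteq A^n$, set $B_0:=A_0\subseteq (EA)^n$. Whenever $f,g\in\Pol(\EA)$ satisfy $f|_{B_0}=g|_{B_0}$, we immediately have $\xi(f)|_{A_0}=\xi(g)|_{A_0}$, so the uniform-continuity condition holds trivially. I expect no real obstacle here; the whole proposition is a direct consequence of Proposition~\ref{prop:ppint} together with the fact that $P$ belongs to the signature of $\EA$. The only thing to be careful about is remembering that under our identification $A=P^{\EA}$, so that the restriction of $f$ to $P^{\EA}$ is literally a function on $A$.
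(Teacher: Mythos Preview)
Your proof is correct and follows essentially the same approach as the paper's own proof, which is very terse: the paper simply notes that the relations of $\sA$ are pp-definable in $\EA$ (your Proposition~\ref{prop:ppint}), so restriction to $P^{\EA}$ lands in $\Pol(\sA)$, and then declares the clone-homomorphism and uniform-continuity properties ``clear.'' You have faithfully unpacked exactly these steps.
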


\begin{proof} 
  Any such restriction is a function on the domain $P^{\EA}$ of $\sA$. Since the relations of $\sA$ are pp-definable in $\EA$, they are preserved by the polymorphisms of $\EA$. Hence, the restriction to $P^{\EA}$ indeed defines a map from $\Pol(\hrushovski{\sA})$ to
  $\Pol(\sA)$. It clearly is a clone homomorphism and uniformly continuous.
\end{proof}

The next result demonstrates in particular that if $\sA$ is $\omega$-categorical, then for 
every injective $f \in \Pol(\sA)$ there exists a self-embedding $u$ of $\sA$ such
that $uf$ can be extended to a polymorphism of $\EA$.  We will, however,
require a more general and, \emph{h\'{e}las}, more technical statement than
this.

\begin{lemma}\label{lemma:encodingextensions} 
  Let $\sA$ be an $\omega$-categorical $\rho$-structure with no algebraicity,
  and let $\sB$ be a homogeneous first-order expansion of $\sA$ with signature
  $\sigma$. Furthermore, let $\sX$ be a separated $\theta$-structure. Then the following statements hold for all $k\geq 1$.
  \begin{enumerate}[label = \rm{(\arabic*)}]
  \item If $\sX$ is finite, then every injective homomorphism $f \colon (\DX)^k \to \sA$ extends to an embedding from $\sX^k$ to $\EA$.
  \item For every injective homomorphism $f \colon (\DX)^k \to \sA$ there exists an embedding $u \colon \coding{\sA}{\sB} \to \coding{\sA}{\sB}$ such that $u f$ extends to an embedding from $\sX^k$ to $\EA$.
  \item For every injective homomorphism $f \colon \sA^k\to\sA$ there exists an embedding $u \colon \coding{\sA}{\sB} \to \coding{\sA}{\sB}$ such that $u f$ extends to an embedding from $(\EA)^k$ into $\EA$.
\item $\sB^k$ embeds into $\sB$ if and only if
$(\coding{\sA}{\sB})^k$ embeds into $\coding{\sA}{\sB}$.
  \end{enumerate}
\end{lemma}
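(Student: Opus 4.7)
The plan is to prove (1)--(3) via a common construction, then derive (4). Given an injective homomorphism $f\colon (\DX)^k \to \sA$, I define a $\sigma^+$-structure $\sY$ on the domain $X^k$ whose $\theta$-reduct agrees with $\sX^k$ and whose $\sigma$-structure vanishes off $(P^\sX)^k$ and on $(P^\sX)^k$ is the pullback of the $\sigma$-structure of $\sB$ along $f$; that is, $(y_1,\dots,y_n)\in R^\sY$ iff each $y_i\in (P^\sX)^k$ and $(f(y_1),\dots,f(y_n))\in R^\sB$. The key technical claim, which I would verify first, is that every finite substructure of $\sY$ lies in $\newage{K}$. Separatedness of $\sX$ passes to direct powers coordinatewise, and hence to $\sY$; the $\sigma$-structure on $P^\sY=(P^\sX)^k$ is $\sigma$-isomorphic via $f$ to an induced substructure of $\sB$, hence in $K$; and any valid $w$-code in $\sY$ is coordinatewise a valid $w$-code in $\sX$, which puts $(a_1,\dots,a_{|w|})\in R_w^{(\DX)^k}$, and $f$ being a $\rho$-homomorphism forces $(f(a_1),\dots,f(a_{|w|}))\in R_w^\sB$, i.e.\ $(a_1,\dots,a_{|w|})\in R_w^\sY$.

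Part (1) is then immediate: $\sX$ finite makes $\sY$ finite and in $\newage{K}$, so $\sY$ embeds into $\coding{\sA}{\sB}$ via some $\phi$. Since $f$ and $\phi|_{P^\sY}$ are two $\sigma^+$-embeddings of the \emph{finite} structure $\sY|_{P^\sY}$ into $\coding{\sA}{\sB}$, homogeneity of $\coding{\sA}{\sB}$ extends the finite partial isomorphism $\phi(y)\mapsto f(y)$ to an automorphism $\alpha$, and $\alpha\circ\phi$, viewed as a $\theta$-map, is the required embedding of $\sX^k$ into $\EA$ extending $f$.

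For (2), $P^\sY$ may be infinite, so the above homogeneity trick fails; instead I would take the strong amalgam $\sZ$ of $\coding{\sA}{\sB}$ and $\sY$ over the identification $y\sim f(y)$ for $y\in P^\sY$. The verification that the age of $\sZ$ lies in $\newage{K}$ reduces essentially to checking valid codes, where the delicate point is that separatedness together with the $4$-arity of $S$ forces any valid $w$-code in $\sZ$ to lie entirely on one side of the amalgam: if even one of the $c_i$'s or $a_i$'s is strictly on one side (i.e.\ not in the common part $f(P^\sY)$), then chasing the relations $S(a_i,a_j,c_i,c_j)$ forces all other $c_j$'s, and then all $a_j$'s, onto that same side. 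Hence every valid code in $\sZ$ is a valid code either in $\coding{\sA}{\sB}$ or in $\sY$, and the required $R_w$-relation already holds there. The universal property of the Fraïssé limit then provides an embedding $\psi\colon\sZ\to\coding{\sA}{\sB}$, and setting $u:=\psi|_{\coding{\sA}{\sB}}$ and $\phi:=\psi|_{\sY}$ yields a self-embedding of $\coding{\sA}{\sB}$ and an embedding of $\sY$ into $\coding{\sA}{\sB}$ that satisfy $\phi|_{P^\sY}=u\circ f$ by construction.

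Part (3) follows by applying (2) with $\sX$ set to $\EA$, which is separated since $\coding{\sA}{\sB}$ is, and whose decoding equals $\sA$ by Proposition~\ref{prop:expansionencoding}. For (4), the forward direction restricts any $\sigma^+$-embedding $(\coding{\sA}{\sB})^k\to\coding{\sA}{\sB}$ to its $P$-part to obtain a $\sigma$-embedding $\sB^k\to\sB$; conversely, a $\sigma$-embedding $g\colon\sB^k\to\sB$ is in particular an injective $\rho$-homomorphism $\sA^k\to\sA$, so (3) produces a self-embedding $u$ and a $\theta$-embedding $\tilde g\colon(\EA)^k\to\EA$ extending $u\circ g$, and since $\sigma$-relations on $(\coding{\sA}{\sB})^k$ only involve tuples in the $P$-part where $\tilde g$ coincides with the $\sigma$-embedding $u\circ g$, $\tilde g$ is actually a $\sigma^+$-embedding as required. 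The main obstacle throughout is the age verification for the amalgam $\sZ$ in part (2); everything else amounts to bookkeeping around the distinction between $\theta$-, $\sigma$-, and $\sigma^+$-reducts.
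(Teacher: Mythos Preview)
Your proposal is correct, and for parts (1), (3), and (4) it is essentially the paper's argument up to bookkeeping: the paper builds its auxiliary structure $\sH$ on the \emph{image} of $f$ (so that the $P$-part literally sits inside $\sB$), whereas you build $\sY$ on the \emph{domain} $X^k$ via pullback along $f$; these are isomorphic, and the embedding/homogeneity step at the end is the same.

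The genuine divergence is in part~(2). The paper reduces~(2) to~(1) by a compactness argument: applying~(1) to every finite piece of $\sX^k$ and using $\omega$-categoricity of $\EA$ yields a global homomorphism $e\colon\sX^k\to\EA$ that is \emph{locally equivalent} to $f$ modulo $\Aut(\coding{\sA}{\sB})$, and then an external lemma (\cite[Lemma~3]{canonical}) upgrades local equivalence to $v\circ e = u\circ f$ for self-embeddings $u,v$. Your route is instead a single global construction: form the free amalgam $\sZ$ of $\coding{\sA}{\sB}$ and $\sY$ over the identification $P^\sY\cong f(P^\sY)$, verify directly that its age lies in $\newage{K}$, and embed $\sZ$ into the Fra\"iss\'e limit. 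The valid-code check you sketch is indeed the heart of it, and your chasing argument via $S$ is exactly the same mechanism used in the paper's proof of Lemma~\ref{lemma:SAPencoding}. Your approach is more self-contained (no appeal to~\cite{canonical}) and makes the role of strong amalgamation explicit; the paper's approach is more modular, reusing~(1) as a black box and isolating the passage from local to global in a citable lemma. Both rely on $\sZ$ (respectively the compactness argument) being applicable, which tacitly uses that $\sX$ is countable---a hypothesis neither version states but both need.
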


\begin{proof}

  (1) We start by defining a $\sigma^+$-structure $\sH$ and a map $h\colon X^k \to H$ as follows:
  The domain $H$ of $\sH$ is the disjoint union of the image of $f$ and of $X^k \setminus (P^{\sX})^k$, and $h$ is given by 
  \begin{align*}
    (x_1, \ldots, x_k) & \mapsto
    \begin{cases}
      f(x_1,\dots,x_k) &\text{if $(x_1,\dots,x_k)\in (P^{\sX})^k$,}\\
       (x_1,\dots,x_k) &\text{otherwise.}
    \end{cases}
  \end{align*}
  The $\sigma$-relations of $\sH$ are defined as the relations induced in the
  image of $f$ within $\sB$, and for every relation symbol $T \in \theta$ we
  set $T^\sH$ to be the image of $T^{\sX^k}$ under $h$. Since $f$ is injective,
  so is $h$. Moreover, since every $\theta$-relation of $\sH$ is defined as the
  image of the corresponding relation in $\sX^k$, $h$ is a $\theta$-embedding.

Next, we  show that $\sH$ lies in the age of $\coding{\sA}{\sB}$; it then
  follows directly from the homogeneity of $\coding{\sA}{\sB}$ that we can
  embed $\sH$ into $\coding{\sA}{\sB}$ fixing the image of $f$. Composing this
  embedding with the function $h$, we then obtain the desired expansion of $f$.
  In order to prove that $\sH$ lies in the age of $\coding{\sA}{\sB}$, it
  suffices to show that $\sH$ satisfies Definition~\ref{definition:newage}.
  Conditions (1) and (2) are routine to verify. In order to check (3), let
  $n\geq 1$ and let $(b_1, \ldots, b_n, \mathbf{c}_1,\ldots, \mathbf{c}_n)$ be
  a valid $w$-code in $\sH$ for some $w \in \Words$ such that $n = |w|$ and
  there is $R_w \in \rho$. We use the notation $\mathbf{c}_i = (c_{i, 1},
  \ldots, c_{i, k})$ for all $1\leq i\leq n$.  Since $f$ is injective, for all
  $1\leq i\leq n$ there is a unique $(a_{i, 1}, \ldots, a_{i,k}) \in X^k$ such
  that $b_i = f(a_{i, 1}, \ldots, a_{i, k})$. Since $h$ is an embedding with
  respect to $\theta$, we have that a tuple is a valid $w$-code in $\sH$ if and
  only if its pre-image under $h$ is a valid $w$-code in $\sX^k$. Thus
  $(a_{1, l}, \ldots, a_{n, l}, c_{1, l}, \ldots, c_{n, l})$ is a valid
  $w$-code in $\sX$ for all $1\leq l\leq k$. By the definition of the decoding operation $\Decode{}$, $R_w^{\DX}( a_{1,
  l}, \ldots, a_{n, l})$ for all $1\leq l\leq k$. Since $f$ preserves all
  relations of $\rho$ it follows that $R_w^{\sH}( b_{1}, \ldots,  b_{n})$. Thus
  $\sH$ satisfies Definition~\ref{definition:newage}.


  (2) Denote the domain of $f$ by $D^k$. By~(1), every finite substructure
  $\sF$ of $\sX^k$ can be mapped into $\EA$ by a homomorphism which extends the
  restriction of $f$ to $D^k\cap F^k$. By the $\omega$-categoricity of $\EA$, a
  standard compactness argument shows that the entire structure $\sX^k$ can be
  mapped into $\EA$ by a homomorphism $e$ whose restriction to $D^k$ is, in the
  language of~\cite{canonical}, \emph{locally equivalent} to $f$  with respect
  to $\Aut(\coding{\sA}{\sB})$: for every finite $S^k\subseteq D^k$ there
  exists an element $\alpha\in \coding{\sA}{\sB}$ such that $\alpha e$ and $f$
  agree on $S^k$. By Lemma 3 of~\cite{canonical} there exist two
  self-embeddings $v,u$ of $\coding{\sA}{\sB}$ such that $v \circ e = u \circ
  f$ on $D^k$. Setting $g = v \circ e$ then concludes the proof of~(2).
%

(3) This follows directly by setting $\sX:=\Hrushovski{\sA}$ in (2), since by Proposition~\ref{prop:expansionencoding} we have that $\sA$ is isomorphic to $\Decode{\Hrushovski{\sA}}$.

(4) Assume there is an embedding $f \colon \sB^k \to \sB$. Then $f$ is clearly also an injective homomorphism from $\sA^k$ to $\sA$, so by~(3) there exists an embedding $u \colon \coding{\sA}{\sB} \to \coding{\sA}{\sB}$ such that $uf$ extends to an embedding from  $(\EA)^k$ to $\EA$. This embedding is the desired embedding from $(\coding{\sA}{\sB})^k$ into $\coding{\sA}{\sB}$. For the opposite direction note that every restriction of any embedding of $(\coding{\sA}{\sB})^k$ into $\coding{\sA}{\sB}$ to $A^k$ is an embedding of $\sB^k$ into $\sB$.
\end{proof}


As a corollary to Lemma~\ref{lemma:encodingextensions}, we obtain that the
image of the uniformly continuous clone homomorphism from  $\Pol(\EA)$ to
$\Pol(\sA)$ which is given by restriction to $P^{\EA}$ is dense in the
injective part of $\Pol(\sA)$.
	
\begin{corollary}\label{cor:density}
  Let $\sA$ be an $\omega$-categorical $\rho$-structure without algebraicity. Then the set of restrictions of functions in $\Pol(\EA)$ is dense in the set of injective functions of $\Pol(\sA)$. 
\end{corollary}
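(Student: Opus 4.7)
The plan is to reduce the statement to Lemma~\ref{lemma:encodingextensions}(3), using homogeneity to absorb the ``twist'' $u$ that appears there into an automorphism of $\EA$. Fix a homogeneous first-order expansion $\sB$ of $\sA$, so that the encoding blow up $\coding{\sA}{\sB}$ is a Fra\"{\i}ss\'{e} limit and hence homogeneous, and contains $\sA$ as the substructure induced on $P^{\coding{\sA}{\sB}}$ via Lemma~\ref{lemma:A+hasA}. Given an injective $k$-ary $f \in \Pol(\sA)$ and a finite set $F \subseteq A$, the aim is to produce $g \in \Pol(\EA)$ with $g|_{F^k} = f|_{F^k}$; by the arbitrariness of $F$ and $f$ this yields density in the topology of pointwise convergence.

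The first step is to apply Lemma~\ref{lemma:encodingextensions}(3) to $f$, which yields an embedding $u \colon \coding{\sA}{\sB} \to \coding{\sA}{\sB}$ together with a polymorphism $\tilde{g} \in \Pol(\EA)$ whose restriction to $A^k$ equals $u \circ f$. The function $\tilde{g}$ is almost what is wanted, but on $F^k$ it differs from $f$ by the unary self-embedding $u$ applied on the outside. The next step is to cancel $u$ on the finite set $f(F^k) \subseteq A \subseteq \coding{\sA}{\sB}$: since $u$ is an embedding, its restriction to $f(F^k)$ is a partial isomorphism of $\coding{\sA}{\sB}$, and by the homogeneity of $\coding{\sA}{\sB}$ this partial isomorphism extends to an automorphism $\alpha \in \Aut(\coding{\sA}{\sB})$. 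Because $\EA$ is a reduct of $\coding{\sA}{\sB}$, the inclusion $\Aut(\coding{\sA}{\sB}) \subseteq \Aut(\EA) \subseteq \Pol(\EA)$ makes $\alpha$ a unary polymorphism of $\EA$.

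The final step is to set $g := \alpha^{-1} \circ \tilde{g}$, which is a $k$-ary polymorphism of $\EA$. For any $(x_1,\dots,x_k) \in F^k$ one then has $g(x_1,\dots,x_k) = \alpha^{-1}(u(f(x_1,\dots,x_k))) = f(x_1,\dots,x_k)$, since $f(x_1,\dots,x_k) \in f(F^k)$ and $\alpha$ agrees with $u$ on that set. The main subtlety to watch is that the homogeneity argument must be performed in $\coding{\sA}{\sB}$ rather than in the reduct $\EA$, since $u$ is only guaranteed to be an embedding with respect to the richer signature $\sigma^+$; this is unproblematic because $\coding{\sA}{\sB}$ is a Fra\"{\i}ss\'{e} limit by construction, and any automorphism of $\coding{\sA}{\sB}$ is automatically an automorphism of its $\theta$-reduct $\EA$.
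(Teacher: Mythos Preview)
Your proof is correct and follows essentially the same approach as the paper: apply Lemma~\ref{lemma:encodingextensions}(3) to obtain $u$ and an extension $\tilde g$ of $u\circ f$, then use the homogeneity of $\coding{\sA}{\sB}$ to extend $u|_{f(F^k)}$ to an automorphism $\alpha$ and set $g=\alpha^{-1}\circ\tilde g$. The paper's argument is identical up to renaming ($v$ for your $\alpha$, $g$ for your $\tilde g$), and your remark about needing homogeneity in $\coding{\sA}{\sB}$ rather than in $\EA$ is exactly the point.
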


\begin{proof}
Let $f \in \Pol(\sA)$ be injective, and denote its arity by $k$.
  Let $\sB$ be a homogeneous first-order expansion of
  $\sA$. By
  Lemma~\ref{lemma:encodingextensions}~(3), there is an embedding $u \colon \coding{\sA}{\sB} \to \coding{\sA}{\sB}$ such that $u\circ f$ can be
  extended to an embedding $g$ from $\EA^k$ into $\EA$.
  Let $F$ be a finite subset of $A$. Since $u$ is an
  embedding, and since $\coding{\sA}{\sB}$ is homogeneous, there exists $v \in
  \Aut(\coding{\sA}{\sB})$ such that $v$ agrees with $u$ on $f(F^k)$.
  Therefore, $f$ and $v^{-1} \circ g$ agree on $F^k$. Since $v^{-1} \circ g\in\Pol(\EA)$, the statement follows.
\end{proof}

As a further consequence of Lemma~\ref{lemma:encodingextensions}, the Hrushovski-encoding preserves the satisfaction of all pseudo-h1 conditions which are satisfied by injective functions.

\begin{proposition}\label{prop:pseudo-h1-ids}
  Let $\sA$ be an $\omega$-categorical $\rho$-structure with no algebraicity.
  Suppose that a pseudo-h1 condition $\Sigma$ is satisfied in
  $\sA$ by injections. Then $\Sigma$ is also satisfied in $\hrushovski{\sA}$.
\end{proposition}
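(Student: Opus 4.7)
The plan is to lift a witnessing assignment of injective polymorphisms of $\sA$ for $\Sigma$ to one in $\Pol(\EA)$, by applying Lemma~\ref{lemma:encodingextensions}(3) to each inner height-$1$ function symbol and using the defining feature of pseudo-h1 conditions---namely that the outer unary symbols are pairwise distinct---to absorb the residual embedding shifts produced by that lemma.

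I would fix, for each inner symbol $f$ and each outer unary $\alpha$ appearing in $\Sigma$, injective polymorphism witnesses $f^\sA$ and $\alpha^\sA$ in $\Pol(\sA)$ realizing $\Sigma$. For each inner $f^\sA$, Lemma~\ref{lemma:encodingextensions}(3) provides an embedding $u_f\colon\coding{\sA}{\sB}\to\coding{\sA}{\sB}$ and a polymorphism $\tilde f\in\Pol(\EA)$ with $\tilde f|_{A^{n_f}}=u_f\circ f^\sA$; take $\tilde f$ as the interpretation of $f$ in $\Pol(\EA)$. Now fix an identity $\alpha\,s=\beta\,t$ of $\Sigma$ with top-level inner symbols $f,g$. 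For the lifted identity $\tilde\alpha\circ\tilde s=\tilde\beta\circ\tilde t$ to hold on $A^n$, it suffices that $\tilde\alpha(u_f(a))=\alpha^\sA(a)$ and $\tilde\beta(u_g(a))=\beta^\sA(a)$ for all $a\in A$. The partial map $\alpha^\sA\circ(u_f|_A)^{-1}$ on $u_f(A)\subseteq A$ is an embedding of the substructure of $\sA$ induced on $u_f(A)$ into $\sA$; using $\omega$-categoricity of $\sA$ and the absence of algebraicity---which yields strong amalgamation for the age of a homogenization and hence extensibility of partial embeddings by a back-and-forth argument---I would extend this partial map to a self-embedding $\bar\alpha$ of $\sA$, and then apply Lemma~\ref{lemma:encodingextensions}(3) with $k=1$ to $\bar\alpha$ to obtain $\tilde\alpha\in\Pol(\EA)$; the analogous construction yields $\tilde\beta$.

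The hard part will be to ensure that the identity holds not only on $A^n$ but on the whole of $\EA^n$. Two independent applications of Lemma~\ref{lemma:encodingextensions}(3) to $\bar\alpha$ and $\bar\beta$ would generally produce different residual shifts $v_\alpha,v_\beta\in\mathrm{Emb}(\coding{\sA}{\sB})$, breaking the identity outside $A^n$. I would resolve this by carrying out a joint extension: apply Lemma~\ref{lemma:encodingextensions}(2) once to a single separated $\theta$-structure that simultaneously encodes both sides of the identity and glues the two images via the common function $\alpha^\sA\circ s^\sA=\beta^\sA\circ t^\sA$, so that a single embedding $v$ arises that is absorbed on both sides at once. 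Because the outer unary symbols of $\Sigma$ are pairwise distinct, this joint construction can be performed independently for every identity of $\Sigma$ without interfering with any other identity, yielding the satisfaction of $\Sigma$ in $\Pol(\EA)$.
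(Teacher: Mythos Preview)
Your overall plan—lift the inner symbols once via Lemma~\ref{lemma:encodingextensions}(3) and then manufacture the outer unaries separately for each identity, exploiting that these unaries are pairwise distinct—is the right shape. But the crucial step, your ``joint extension'' producing $\tilde\alpha,\tilde\beta$ with $\tilde\alpha\circ\tilde s=\tilde\beta\circ\tilde t$ on all of $(\EA)^k$, is not actually carried out, and there is a concrete obstruction you have not addressed. The lifts $\tilde s,\tilde t$ are embeddings with respect to the $\theta$-structure only; they need not respect the $\sigma$-relations of the homogeneous expansion $\coding{\sA}{\sB}$, because the original witnesses $f^\sA,g^\sA$ are merely injective polymorphisms of $\sA$, not of $\sB$. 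Hence the natural bijection $\tilde s(\vec z)\mapsto\tilde t(\vec z)$ between their images is in general \emph{not} a partial isomorphism of $\coding{\sA}{\sB}$, so you cannot extend it by homogeneity, and a one-shot application of Lemma~\ref{lemma:encodingextensions}(2) to some amalgamated $\theta$-structure does not circumvent this: that lemma again only controls $\theta$, while the extension step needs the full $\sigma^+$-homogeneity. (Your earlier route has the same defect: $\alpha^\sA\circ(u_f|_A)^{-1}$ is only an injective $\sA$-homomorphism, not an embedding, so the back-and-forth extension you invoke is not available.)

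The paper handles this by working locally and then compactifying. It defines, for any $\omega$-categorical $\sX$, the relation $f_1\sim_\sX f_2$ meaning that on every finite set there exist injective endomorphisms $e_1,e_2$ with $e_1 f_1=e_2 f_2$; by~\cite[Lemma~3]{canonical} this local relation already forces a single global pair $e_1,e_2$. The technical heart is the implication $f_1\sim_\sA f_2\Rightarrow\overline{f_1}\sim_{\EA}\overline{f_2}$: given a finite $F$, one first takes the local witnesses $e_1,e_2\in\End(\sA)$ for $f_1\sim_\sA f_2$ on $(F\cap A)^n$, lifts them too, and then observes that the map $\overline{e_1}v_1\overline{f_1}(\vec x)\mapsto\overline{e_2}v_2\overline{f_2}(\vec x)$ on $F^n$ is a $\sigma^+$-isomorphism between finite substructures of $\coding{\sA}{\sB}$—the $\theta$-part because all maps involved are $\theta$-embeddings, the $\sigma$-part because on $(F\cap A)^n$ the two sides literally take the same value in $\sA$ (thanks to $e_1f_1=e_2f_2$ there), and outside $A$ no $\sigma$-relations hold. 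Homogeneity of $\coding{\sA}{\sB}$ then extends this finite isomorphism. Applying this with $g_1,g_2$ the common-arity minors of the two sides (using that injectivity of the witnesses forces both sides of the identity to have the same variable set) yields $\overline{g_1}\sim_{\EA}\overline{g_2}$, hence global $w_1,w_2$ with $w_1\overline{g_1}=w_2\overline{g_2}$, and these $w_1,w_2$ serve as the outer unaries. The missing ingredient in your sketch is precisely this local-to-global mechanism together with the use of the $\sim_\sA$-witnesses to repair the $\sigma$-structure.
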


\begin{proof}
  We fix a homogeneous first order expansion $\sB$  of $\sA$, and denote its signature by 
  $\sigma$. For any structure $\sX$, define an equivalence relation
  $\sim_{\sX}$ on $\Pol(\sX)$ by setting $f_1 \sim_{\sX} f_2$ if and only if $f_1$ and
  $f_2$ are of the same arity, and for every finite subset $F$ of $X$ there
  exist injective endomorphisms $e_1, e_2$ of $\sX$ such that $e_1 \circ f_1 = e_2 \circ f_2$ on
  $F^{\arity{f_1}}$. Recall that, by Lemma~\ref{lemma:encodingextensions}~(3), for every
  injective $f \in \Pol(\sA)$ there exist a self-embedding $u_f$ of $\coding{\sA}{\sB}$ and $\overline{f} \in \Pol(\hrushovski{\sA})$ such
  that $\overline{f}$ extends $u_f \circ f$. We fix such $\overline{f}$ and $u_f$; neither has to be unique.

  We begin by showing that for all injective $f_1,f_2\in\Pol(\sA)$ we have $f_1 \sim_\sA f_2$ if and only if $\overline{f_1}
  \sim_{\EA} \overline{f_2}$. If $\overline{f_1}
  \sim_{\EA} \overline{f_2}$, then $f_1 \sim_\sA f_2$ by Proposition~\ref{prop:ucclonehomo}. For the other direction, assume that $f_1 \sim_{\sA} f_2$, and denote the arity of $f_1$ and $f_2$ by $n$.
   Let $F$ be a finite subset of the domain of 
  $\coding{\sA}{\sB}$. By assumption, there are injective endomorphisms $e_1, e_2$ of $\sA$ such
  that $e_1 \circ f_1 = e_2 \circ f_2$ on $(F \cap A)^n$. Since
  $\coding{\sA}{\sB}$ is homogeneous and $u_{e_1}$, $u_{e_2}$ are self-embeddings of this structure,
  there are $v_1, v_2 \in \Aut(\coding{\sA}{\sB})$ such that both $v_1 \circ
  u_{f_1}$ and $v_2 \circ u_{f_2}$ act as the identity on the image of $(F \cap A)^n$ under $f_1$ and $f_2$, respectively. Consider the map $\psi \colon \overline{e_1} \circ
  v_1 \circ \overline{f_1}(x_1, \ldots, x_n) \mapsto \overline{e_2} \circ v_2
  \circ \overline{f_2}(x_1, \ldots, x_n)$ between the substructures of
  $\coding{\sA}{\sB}$ induced by the images of $F^n$ under $ \overline{e_1} \circ v_1 \circ
  \overline{f_1}$ and $\overline{e_2} \circ v_2 \circ \overline{f_2}$, respectively. By the injectivity of all involved functions, this map is well-defined. Note that all the functions appearing in the definition of $\psi$ are
embeddings with respect to the relations of $\theta$, and so $\psi$ is an isomorphism with respect to this signature. On the other
  hand, the relations of the signature $\sigma$ in $\coding{\sA}{\sB}$ are only non-empty on
  $P^{\coding{\sA}{\sB}} = A$. Moreover, $\overline{e_1} \circ v_1 \circ
  \overline{f_1}= u_{e_1} \circ e_1 \circ f_1$ and $\overline{e_2} \circ v_2
  \circ \overline{f_2}= u_{e_2} \circ e_2 \circ f_2$ on $(F \cap A)^n$. Since
  $e_1 \circ f_1 = e_2 \circ f_2$ on $(F \cap A)^n$ and since $u_{e_1}$ and $u_{e_2}$
  are embeddings with respect to $\sigma^+$, $\psi$ is even an isomorphism with respect to $\sigma^+$. Hence, there
  exists $w \in \Aut(\coding{\sA}{\sB})$ extending $\psi$, and thus $ (w \circ
  \overline{e_1} \circ v_1) \circ f_1 = (\overline{e_2} \circ v_2) \circ f_2$
  of $F^n$. Hence, 
  $\overline{f_1} \sim_{\EA} \overline{f_2}$.

It follows from the proof of~\cite[Lemma~3]{canonical} that in an $\omega$-categorical structure $\sX$, $f_1\sim_\sX f_2$ implies the existence of injective endomorphisms $e_1,e_2$ of $\sX$ such that $e_1\circ f_1=e_2\circ f_2$.

Suppose that $\Sigma$ is a  pseudo-h1 condition 
  satisfied in $\sA$ by injections. Let $v_1 \circ f_1(x_1, \ldots,
  x_{n}) = v_2 \circ f_2(y_1, \ldots, y_{m})$ be one of the identities from
  $\Sigma$, and, for the sake of brevity, identify the functions satisfying the
  identity with the symbols $f_1$, $f_2$, $v_1$, and $v_2$. Since the functions
  are injective, it follows that $\{x_1, \ldots, x_n\} = \{y_1, \ldots, y_m\}$.
  Let $z_1, \ldots, z_k$ be any enumeration of the variables in $\{x_1, \ldots,
  x_n\}$, and define $g_1(z_1, \ldots, z_k) := f_1(x_1, \ldots, x_n)$ and
  $g_2(z_1,\ldots, z_k) := f_2(y_1, \ldots, y_m)$. 
  Note that $\overline{g_1}$ can be chosen so that $\overline{g_1}(z_1,
  \ldots, z_k) = \overline{f_1}(x_1, \ldots, x_n)$, because $\overline{f_1}(x_1, \ldots, x_n)$ is an extension of $u_{f_1}\circ g_1$. A similar statement holds for $\overline{g_2}$. Since $g_1 \sim_{\sA} g_2$
  by virtue of the satisfied pseudo-h1 identity, we have $\overline{g_1} \sim_{\EA}
  \overline{g_2}$. Hence, there exist endomorphisms $w_1, w_2$ of $\EA$ such that $w_1 \circ \overline{g_1} = w_2 \circ
  \overline{g_2}$, and thus $w_1 \circ \overline{f_1}(x_1, \ldots, x_n) = w_2
  \circ \overline{f_2}(y_1, \ldots, y_m)$. Therefore, $\Sigma$ is satisfied in
  $\EA$.
\end{proof}


\ignore{
\begin{proposition}
  Let $\sA$ be an $\omega$-categorical homogeneous structure such that $\sA^2$
  embeds into $\sA$. Then every set of pseudo-h1 identities $\Sigma$ 
  which can be satisfied by injective functions, is satisfied in $\sA$.
\end{proposition}

\begin{proof}
  First, note that it follows from the assumption that there is an embedding
  $\alpha_n \colon \sA^n \to \sA$ for all $n \in \N$. We will show that
  $\Sigma$ is satisfied locally modulo $\Aut(\sA)$ by simply assigning
  $\alpha_{\ari(f)}$ to every function symbol $f$ (except for the outer unaries)
  appearing in $\Sigma$. Fix a pseudo heigth 1 identity $v_1 \circ f_1(x_1,
  \ldots, x_n) = v_2 \circ f_2(y_1, \ldots, y_m)$ satisfiable by injective
  functions. Then $\{x_1,\ldots, x_n\} = \{y_1, \ldots, y_m\}$. Let $z_1,
  \ldots, z_k$ be any enumerations of the variables in $\{x_1, \ldots, x_n\}$,
  and define $g_1(z_1, \ldots, z_k) = \alpha_n(x_1,\ldots, x_n)$ and $g_2(z_1,
  \ldots, z_k) = \alpha_m(y_1, \ldots, y_m)$. Let $F \subseteq A$ be finite and
  consider the map $\psi \colon g_1(z_1, \ldots, z_k) \mapsto g_2(z_1,\ldots,
  z_k)$ between induced substructures of $\sA$ on $g_1(F^k)$ and $g_2(F^k)$.
  Since $g_1$ and $g_2$ are embeddings, it follows that $\psi$ is an
  isomorphism. Hence there is $u \in \Aut(\sA)$ extending $\psi$. Finally, we
  get that $ w \circ g_1 = g_2$ on $F^k$. In other words, $g_1 = g_2$ locally
  modulo $\Aut(\sA)$. By~\cite[Lemma~3]{canonical}, $g_1 = g_2$ globally modulo
  $\Aut(\sA)$, that is, there are $e_1, e_2$ in the topological closure of
  $\Aut(\sA)$ such that 
  \[
    e_1 \circ \alpha_n(x_1, \ldots, x_n) = e_1 \circ g_1(z_1, \ldots, z_k) =
    e_2 \circ g_2(z_1, \ldots, z_k) = e_2 \circ \alpha_m(y_1, \ldots, y_m),
  \]
  thus $\Sigma$ is satisfied in $\sA$.
\end{proof}
}

\subsection{Homomorphisms and the encoding}
\label{section:homomorphismsandops}
We now examine the relationship between the finite structures that homomorphically map into a structure $\sA$ with those that homomorphically map into its encoding $\EA$; the latter is precisely $\csp(\EA)$. This will be particularly
  relevant in Section~\ref{section:complexity} where we investigate the complexity of CSPs of structures encoded with the Hrushovski-encoding.

In the following definition, we assign to every $\rho$-structure $\sC$ a $\theta$-structure  $\Completion{\sC}$ in such a way that the original structure can be recovered. Contrary to the operator $\hrushovski{}$ this operator $\Completion{}$ is however mostly intended for finite structures; applied to a finite structure, it yields a finite structure.

\begin{definition} \label{definition:ccode}
  Let $\sC$ be a $\rho$-structure. Then the \emph{canonical code}
  $\Completion{\sC}$ of $\sC$ is the $\theta$-structure with underlying set 
  \[ 
    C \cup \{(w, \mathbf{t},i) \mid w \in \Words, \;R_w\in\rho,\;
    \mathbf{t}\in R_w^\sC, \text{ and } 1\leq i\leq |w| \}
  \] 
  and relations 
  \begin{itemize}
    \item $P^{\Completion{\sC}} = C$;

    \item $H_s^{\Completion{\sC}} = \{ \left((w, \mathbf{t},i),\;
	  (w, \mathbf{t},j)\right) \mid w \in \Words, \; R_w \in \rho,\;
	  \mathbf{t}\in R_w^\sC, \; w_i =s, \text{ and }i\equiv j+1\mod
	  |w|\}$ for all $s \in \Sigma$;

    \item $\iota^{\Completion{\sC}} = \{(w, \mathbf{t},1) \mid w \in \Words,
    \; R_w \in \rho,\; \mathbf{t} \in R_w^\sC \}$ and $\tau^{\Completion{\sC}} =
    \{(w, \mathbf{t}, |w| ) \mid w \in \Words, \; R_w \in \rho,\;
    \mathbf{t} \in R_w^\sC\}$;

    \item $S^{\Completion{\sC}} = \{ (t_i,t_j,(w, \mathbf{t},i),(w,
    \mathbf{t},j)) \mid w \in \Words, \; R_w \in \rho,\;
    \mathbf{t}=(t_1,\dots,t_{|w|})\in R_w^\sC, \text{ and } i \neq j \}$.
  \end{itemize}
\end{definition}
 
\begin{lemma}\label{lemma:remarks}
  The following statements hold.
  \begin{enumerate}[label = \rm{(\arabic*)}]
    \item $\sB=\Decode{\Completion{\sB}}$
    for every $\rho$-structure $\sB$;

    \item Let $\sB$ and $\sC$ be two $\theta$-structures. Then $f\colon
            \sB\to\sC$ is a homomorphism if and only if $f\colon\Expansion{\sB}
            \to \Expansion{\sC}$ (i.e., $f$ viewed as a function from $\Expansion{\sB}$ to $\Expansion{\sC}$)  is a homomorphism;
		
    \item  If $\sA$ is a $\rho$-structure, and $\sB$ a
            $\theta$-structure, then there exists a homomorphism from $\sA$ to $\Decode{\sB}$
            if and only if there exists a homomorphism from $\Completion{\sA}$ to $\sB$.
  \end{enumerate}
\end{lemma}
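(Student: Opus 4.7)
I would prove the three parts in order, using part~(1) to set up the restriction-to-$P$ step in~(2) and~(3), and then bootstrapping~(3) from~(1) and~(2).

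For~(1), since $\Decode{} = \Restriction{}\circ\Expansion{}$ and $P^{\Completion{\sB}}=B$, the domains of $\sB$ and $\Decode{\Completion{\sB}}$ already agree, so it suffices to fix $R_w\in\rho$ and show $R_w^{\sB} = R_w^{\Decode{\Completion{\sB}}}$. The forward inclusion is witnessed by the \emph{canonical} valid $w$-code: for $\mathbf{t}=(t_1,\ldots,t_{|w|})\in R_w^{\sB}$, the tuple $(t_1,\ldots,t_{|w|},(w,\mathbf{t},1),\ldots,(w,\mathbf{t},|w|))$ satisfies conditions (a)--(d) of Definition~\ref{definition:validencode} directly from Definition~\ref{definition:ccode}. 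For the reverse inclusion, take any valid $w$-code $(a_1,\ldots,a_{|w|},c_1,\ldots,c_{|w|})$ in $\Completion{\sB}$. The definition of $S^{\Completion{\sB}}$ together with the conditions $S(a_i,a_j,c_i,c_j)$ for $i\neq j$ forces every $c_i$ to be a triple of the form $(w',\mathbf{t}',k_i)$ with a common $w'$ and a common $\mathbf{t}'\in R_{w'}^{\sB}$, and with $a_i$ equal to the $k_i$-th entry of $\mathbf{t}'$; then $\iota(c_1)$, $\tau(c_{|w|})$ and the chained $H_{w_i}$ conditions pin down $w'=w$ and $k_i=i$, so $\mathbf{t}'=(a_1,\ldots,a_{|w|})\in R_w^{\sB}$.

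For~(2), the content is the forward direction: a $\theta$-homomorphism $f\colon\sB\to\sC$ maps any valid $w$-code in $\sB$ to a valid $w$-code in $\sC$ because the defining conditions of a valid code are quantifier-free $\theta$-formulas, so the pp-definition of $R_w^{\Expansion{\sB}}$ is preserved, and $f$ becomes a $\rho^+$-homomorphism. The converse is immediate from $\theta\subseteq\rho^+$. For~(3), given $g\colon\Completion{\sA}\to\sB$ a $\theta$-homomorphism, part~(2) upgrades it to a $\rho^+$-homomorphism $\Expansion{\Completion{\sA}}\to\Expansion{\sB}$; preservation of $P$ lets me restrict to $P^{\Completion{\sA}}=A$, yielding by part~(1) a $\rho$-homomorphism $\Decode{\Completion{\sA}}=\sA\to\Decode{\sB}$. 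Conversely, given $f\colon\sA\to\Decode{\sB}$, I would define $g\colon\Completion{\sA}\to\sB$ by $g|_A=f$ and, for each triple $(w,\mathbf{t},i)\in|\Completion{\sA}|$, using $f(\mathbf{t})\in R_w^{\Decode{\sB}}$ to choose witnesses $d_1,\ldots,d_{|w|}\in B$ forming a valid $w$-code $(f(t_1),\ldots,f(t_{|w|}),d_1,\ldots,d_{|w|})$ in $\sB$, and setting $g((w,\mathbf{t},i)):=d_i$. Verifying preservation of $P,\iota,\tau,H_s,S$ then reduces to comparing Definition~\ref{definition:ccode} termwise against the valid-code property of the chosen witnesses.

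The main obstacle is the reverse inclusion in~(1), where one must trace through the $\iota$, $\tau$, $H_s$, and $S$ constraints to conclude that the auxiliary part $c_1,\ldots,c_{|w|}$ of an arbitrary valid $w$-code in $\Completion{\sB}$ really is (up to the canonical identification) the cycle of triples associated to a single tuple in $R_w^{\sB}$; once this rigidity is established, parts~(2) and~(3) are essentially exercises in unwinding the definitions.
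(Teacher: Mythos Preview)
Your proposal is correct and follows essentially the same approach as the paper: the canonical valid $w$-code for the forward inclusion in~(1), the pp-definability of the $\rho$-relations for~(2), and the combination of~(1) and~(2) plus an explicit witness-choosing extension for~(3). Your treatment of the reverse inclusion in~(1) is more detailed than the paper's (which dismisses it as immediate from the definitions), but the underlying idea---that the rigidity of $S^{\Completion{\sB}}$, $\iota$, $\tau$, and the $H_s$ chain forces any valid $w$-code in $\Completion{\sB}$ to arise from a tuple of $R_w^{\sB}$---is exactly what makes that step go through.
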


\begin{proof}
  (1) If $w \in \Sigma^{\geq 2}$, $R_w \in \rho$, and $\mathbf{t} = (t_1, \ldots,
  t_{|w|}) \in R_w^{\sB}$, then $(t_1,\dots,t_{|w|},(w, \mathbf{t},1),\dots,(w,
  \mathbf{t},{|w|}))$ is a valid $w$-code in $\Completion{\sB}$. The rest
  follows immediately from the definitions.

  (2) If $f\colon
  \Expansion{\sB} \to\Expansion{\sC}$ is a homomorphism, then clearly so is $f\colon \sB\to\sC$ since $\sB$ and $\sC$ are reducts of $  \Expansion{\sB}$ and $\Expansion{\sC}$, respectively. The other direction
  follows from the fact that the relations of $\Expansion{\sB}$ and $\Expansion{\sC}$ have primitive positive definitions in $\sB$ and $\sC$, respectively, and are thus preserved by homomorphisms.

  (3) Let $f\colon \Completion{\sA}\to\sB$ be a homomorphism. Then by~(2),
  $f\colon\Expansion{\Completion{\sA}} \to \Expansion{\sB}$ is a homomorphism
  as well, and so its restriction to $P^\sA$ is a homomorphism from
  $\Decode{\Completion{\sA}}=\sA$ to $\Decode{\sB}$. In order to show the other
  implication, let $f \colon \sA \to \Decode{\sB}$ be a homomorphism, let $w \in \Words$ such that $R_w
  \in \rho$, and let $\mathbf{t} = (t_1,\ldots, t_{|w|}) \in R_w^{\sA}$.
  Then $(f(t_1),\ldots, f(t_{|w|})) \in R_w^{\Decode{\sB}}$, and so there
  exist $c_{w, \mathbf{t}, 1}, \ldots, c_{w, \mathbf{t}, {|w|}} \in B$ such
  that $(f(t_1), \ldots, f(t_{|w|}), c_{w, \mathbf{t}, 1}, \ldots,
  c_{w, \mathbf{t}, {|w|}})$ is a valid $w$-code in $\sB$. Set $g \colon
  \Completion{\sA} \to \sB$ to be the extension of $f$ defined by
  $g((w, \mathbf{t}, i)) = c_{w, \mathbf{t}, i}$ for all $1\leq i\leq
  {|w|}$.  It is routine to verify that $g$ is a homomorphism. 
\end{proof}

We are now ready to compare the structures which homomorphically map into a structure with the $\csp$ of its encoding. 
\begin{proposition}\label{prop:otherremarks}
  Let $\sA$ be a $\rho$-structure with no algebraicity. Let $\sX$ be a separated $\theta$-structure, and let $\sY$ be a $\rho$-structure. 
  \begin{enumerate}
    \item If there exists a homomorphism from $\sX$ to $\hrushovski{\sA}$, then there exists a
            homomorphism from $\Decode{\sX}$ to $\sA$.

    \item If $\sA$ is $\omega$-categorical and there exists an injective homomorphism from $\Decode{\sX}$ to $\sA$, then
            there exists an injective homomorphism from $\sX$ to $\hrushovski{\sA}$.           
            
     \item $\sY$ has a homomorphism into $\sA$ if and only if $\Completion{\sY}$ has a homomorphism into $\EA$.
  \end{enumerate}
\end{proposition}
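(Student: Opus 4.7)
For (1), the plan is to use Lemma~\ref{lemma:remarks}(2) to transfer a given homomorphism $f\colon \sX\to\EA$ to a homomorphism $f\colon \Expansion{\sX}\to\Expansion{\EA}$, and then to restrict to $P^\sX$. Since $P$ is a unary relation preserved by $f$, we have $f(P^\sX)\subseteq P^{\EA}$, so this restriction is a homomorphism from $\Restriction{\Expansion{\sX}} = \Decode{\sX}$ to $\Restriction{\Expansion{\EA}} = \Decode{\EA}$. By Proposition~\ref{prop:expansionencoding}, $\Decode{\EA}$ is isomorphic to $\sA$, and composing with this isomorphism yields the required homomorphism.

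For (2), the plan is to apply Lemma~\ref{lemma:encodingextensions}(2) with $k=1$ to the given injective homomorphism $f\colon\Decode{\sX}\to\sA$. This produces an embedding $u\colon\coding{\sA}{\sB}\to\coding{\sA}{\sB}$, for some homogeneous first-order expansion $\sB$ of $\sA$, such that $u\circ f$ extends to an embedding $g\colon\sX\to\EA$. This embedding $g$ is in particular an injective homomorphism, which is what we need. The $\omega$-categoricity hypothesis is needed precisely to invoke Lemma~\ref{lemma:encodingextensions}(2). Here the separatedness of $\sX$, assumed in the statement, is exactly the hypothesis of that lemma.

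For (3), the plan is to apply Lemma~\ref{lemma:remarks}(3) with the $\rho$-structure there taken to be $\sY$ and the $\theta$-structure taken to be $\EA$. The lemma gives an equivalence between the existence of a homomorphism from $\sY$ to $\Decode{\EA}$ and the existence of a homomorphism from $\Completion{\sY}$ to $\EA$. Identifying $\Decode{\EA}$ with $\sA$ via Proposition~\ref{prop:expansionencoding} closes the argument. Notably, no $\omega$-categoricity is needed for this part.

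I do not foresee any substantial obstacle: the proposition is a repackaging of facts already established in Lemmas~\ref{lemma:remarks} and~\ref{lemma:encodingextensions} together with Proposition~\ref{prop:expansionencoding}. The only minor point of care is in (2), where one must remember that Lemma~\ref{lemma:encodingextensions}(2) lifts $u\circ f$ rather than $f$ itself; this is harmless because we only need the existence of \emph{some} injective homomorphism from $\sX$ to $\EA$, not one extending $f$.
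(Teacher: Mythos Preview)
Your proposal is correct and follows essentially the same approach as the paper. The only cosmetic difference is in (3): the paper derives one direction by invoking the already-proved item (1) applied to $\sX=\Completion{\sY}$ together with Lemma~\ref{lemma:remarks}(1), whereas you apply Lemma~\ref{lemma:remarks}(3) directly for both directions---your route is marginally more streamlined but not substantively different.
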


\begin{proof}
  (1) If $f\colon \sX\to\hrushovski{\sA}$ is a homomorphism, then $f\colon \Expansion{\sX}\to\Expansion{\hrushovski{\sA}}$ is a homomorphism, by Lemma~\ref{lemma:remarks}~(2). Its restriction to $P^\sX$ then is a homomorphism from $\Decode{\sX}$ to $\DEA$.  By Proposition~\ref{prop:expansionencoding}, $\sA=\DEA$.
  
  (2) 
  This follows directly from Lemma~\ref{lemma:encodingextensions} (2).
  
  (3) Assume first that $\Completion{\sY}$ has a homomorphism into $\EA$. Then  $\Decode{\Completion{\sY}}$ has a homomorphism into $\sA$, by~(1). By Lemma~\ref{lemma:remarks}~(1), $\Decode{\Completion{\sY}}=\sY$. For the other direction, assume that $\sY$ has a homomorphism into $\sA$. By Proposition~\ref{prop:expansionencoding}, $\sA=\DEA$, and so application of Lemma~\ref{lemma:remarks}~(3) shows that $\Completion{\sY}$ has a homomorphism into $\EA$.
\end{proof}


The properties from Proposition~\ref{prop:otherremarks} are enough to give a concrete description of $\csp(\Hrushovski{\sA})$ when $\sA$ is homomorphically bounded.

\begin{proposition}\label{prop:emb-forb}
  Let $\sA$ be a homogenizable $\rho$-structure with no algebraicity which is homomorphically bounded by a set $\mathcal{G}$ of $\rho$-structures.
  Let $\sX$ be a 
  $\theta$-structure. Then the following are equivalent.
  \begin{enumerate}[label = \rm{(\arabic*)}]
    \item There exists an embedding of $\sX$ into $\hrushovski{\sA}$;

    \item There exists a homomorphism from  $\sX$ to $\hrushovski{\sA}$;

    \item $\sX$ is separated and for all $\sG \in \mathcal{G}$ we have that there
            exists no homomorphism from $\Completion{\sG}$ to $\sX$. 
  \end{enumerate}
\end{proposition}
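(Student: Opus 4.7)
The plan is to prove the cycle of implications (1) $\Rightarrow$ (2) $\Rightarrow$ (3) $\Rightarrow$ (1). The first implication is immediate since every embedding is a homomorphism.

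For (2) $\Rightarrow$ (3), I would start from a homomorphism $f\colon \sX \to \hrushovski{\sA}$. Separation of $\sX$ transfers through $f$: since $\hrushovski{\sA}$ is separated and $f$ maps $P^\sX$ into $P^{\hrushovski{\sA}}$, any element that is $H_s^{\sX}$-related, belongs to $\iota^{\sX}\cup\tau^{\sX}$, or appears as the third or fourth component of an $S^{\sX}$-tuple must lie outside $P^\sX$; and the last two components of an $S^{\sX}$-tuple are distinct because their $f$-images are. For the forbidden-homomorphism condition, suppose some $\sG\in\mathcal{G}$ admits a homomorphism $g\colon \Completion{\sG}\to\sX$; then $f\circ g\colon \Completion{\sG}\to\hrushovski{\sA}$, and Proposition~\ref{prop:otherremarks}(3) produces a homomorphism $\sG\to\sA$, contradicting the homomorphic boundedness of $\sA$ by $\mathcal{G}$.

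For (3) $\Rightarrow$ (1), the strategy is to lift $\sX$ to a $\sigma^+$-structure $\sX^+$ whose age lies in $\newage{K}$ (for some homogeneous first-order expansion $\sB$ of $\sA$ with signature $\sigma$ and age $K$) and then embed $\sX^+$ into the Fra\"{i}ss\'{e} limit $\coding{\sA}{\sB}$. I first use Lemma~\ref{lemma:remarks}(3) to rephrase (3) as: no $\sG\in\mathcal{G}$ maps homomorphically into $\Decode{\sX}$. For any finite substructure $\sF$ of $\sX$, the inclusion $P^\sF\hookrightarrow P^\sX$ is a homomorphism $\Decode{\sF}\to\Decode{\sX}$ (valid codes in $\sF$ remain valid codes in $\sX$), so no $\sG$ maps into $\Decode{\sF}$; homomorphic boundedness then places $\Decode{\sF}$ in the age of $\sA$ and yields a $\sigma$-expansion $\widetilde{\sF}\in K$. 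Combining $\widetilde{\sF}$ with the $\theta$-relations of $\sF$ gives a $\sigma^+$-structure $\sF^+$ for which a routine check of conditions (1)--(3) of Definition~\ref{definition:newage} shows $\sF^+\in \newage{K}$. If $\sX$ is finite we take $\sF=\sX$, embed $\sF^+$ into $\coding{\sA}{\sB}$ by universality of the Fra\"{i}ss\'{e} limit, and the $\theta$-reduct of this embedding delivers~(1).

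The main obstacle is the infinite (countable) case: the local $\sigma$-expansions $\widetilde{\sF}$ must be assembled into a single $\sigma$-expansion of $\Decode{\sX}$ all of whose finite substructures lie in $K$. I would accomplish this by a first-order compactness argument. Introduce a constant symbol for each element of $P^\sX$, and form a theory consisting of the $\rho$-atomic diagram of $\Decode{\sX}$, the inequalities between distinct constants, and, for each finite $F\subseteq P^\sX$ and each $\sigma$-structure $D$ on $F$ not in $K$, the negation of the atomic diagram of $D$. Finite satisfiability follows from the finite case above together with the hereditary property of $K$; a model of the theory then yields a $\sigma$-expansion $\widetilde{\sX}$ of $\Decode{\sX}$ whose finite substructures lie in $K$. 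Combined with the $\theta$-relations of $\sX$ this produces $\sX^+$ with age contained in $\newage{K}$, and the standard back-and-forth argument using the amalgamation property of $\newage{K}$ from Lemma~\ref{lemma:SAPencoding} embeds the countable $\sX^+$ into $\coding{\sA}{\sB}$. Taking $\theta$-reducts gives the desired embedding $\sX\to\hrushovski{\sA}$.
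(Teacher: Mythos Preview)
Your proof is correct and follows the same cycle $(1)\Rightarrow(2)\Rightarrow(3)\Rightarrow(1)$ as the paper, with essentially identical arguments for the first two implications.

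For $(3)\Rightarrow(1)$ the two approaches differ in presentation rather than substance. The paper argues in one line that $\Decode{\sX}$ embeds into $\sA$ (since no $\sG\in\mathcal G$ maps into $\Decode{\sX}$ and $\sA$ is homomorphically bounded by $\mathcal G$, so the age of $\Decode{\sX}$ is contained in the age of $\sA$; universality of the homogenizable $\sA$ then gives the embedding). It then simply pulls back the $\sigma$-structure from $\sB$ along this embedding to obtain the desired $\sigma^+$-expansion of $\sX$ with age in $\newage{K}$. Your route instead builds the $\sigma$-expansion of $\Decode{\sX}$ with age in $K$ directly, first on finite pieces and then globally via a compactness argument, and only afterwards embeds into $\coding{\sA}{\sB}$. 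These are equivalent: the paper's universality assertion for a homogenizable structure that is not assumed $\omega$-categorical is itself proved by exactly the compactness step you spell out. So your version is more explicit where the paper is terse; what you gain is a self-contained justification of the infinite case, while the paper's phrasing is shorter and makes the structural idea (embed $\Decode{\sX}$ first, then inherit the $\sigma$-data) more transparent.
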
 
\begin{proof}
  (1) $\implies$ (2) is trivial.

  (2) $\implies$ (3). Assume there exists a homomorphism $f\colon \sX \to
  \hrushovski{\sA}$. Since $\hrushovski{\sA}$ is separated, it follows that  
  $\sX$ is also separated, since this property is expressed only by negations of relations, which in turn are preserved under preimages of homomorphisms. Now, for the sake of contradiction, assume that there exists a
  homomorphism $g\colon \Completion{\sG} \to \sX$ for some $\sG \in \mathcal{G}$. Then 
composing $f$ with $g$, we obtain a homomorphism from $\Completion{\sG}$ to $\hrushovski{\sA}$. By
  Lemma~\ref{lemma:remarks}~(3), $\sG$ maps homomorphically into
  $\Decode{\hrushovski{\sA}}$, which is isomorphic to $\sA$ by
  Proposition~\ref{prop:expansionencoding}. Thus $\sG$ maps homomorphically to
  $\sA$, which is a contradiction.
 	
  (3) $\implies$ (1).  Let $\sB$ be a homogeneous first-order expansion of $\sA$ with
  signature $\sigma$. For every $\sG \in \mathcal{G}$, it follows from
  Lemma~\ref{lemma:remarks}~(3) that there exists a homomorphism from $\sG$ to $\Decode{\sX}$ if and only if there exists a homomorphism from $\Completion{\sG}$ to $\sX$; the latter, however, contradicts our assumption. Therefore $\Decode{\sX}$ embeds into $\sA$; for the sake of simplicity assume that $\Decode{\sX}$ is a substructure of $\sA$.
  Now, let $\sY$ be an expansion of $\sX$ to a $\sigma^+$-structure such that the  $\sigma$-reduct
  of $\sY$ restricted to $P^{\sY}$ equals the restriction of $\sB$ to the domain of
  $\Decode{\sX}$, or in other words to $P^{\sX}$. Since $\Decode{\sX}$ satisfies Definition~\ref{definition:newage}, and we only added relations outside $\rho$, so does $\sY$. Hence, the age of $\sY$ is contained in the age of $\coding{\sA}{\sB}$; by the homogeneity of $\coding{\sA}{\sB}$, it follows that $\sY$ embeds into $\coding{\sA}{\sB}$. Therefore, $\sX$ embeds into $
  \hrushovski{\sA}$.
\end{proof}

Note that being separated can be characterised by not containing the homomorphic image of any element of a finite set $\mathcal S$ of finite $\theta$-structures. 
As an immediate consequence of Proposition~\ref{prop:emb-forb} we therefore obtain the following corollary.

\begin{corollary}\label{cor:homobounded}
  Let $\sA$ be a $\rho$-structure with no algebraicity which is homomorphically bounded by a set $\mathcal{G}$ of $\rho$-structures.  Then 
  $\hrushovski{\sA}$ is homomorphically bounded by $\{\Completion{\sG} \mid \sG \in \mathcal{G}\} \cup
  \mathcal S$.
\end{corollary}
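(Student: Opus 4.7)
The plan is to apply Proposition~\ref{prop:emb-forb} essentially directly. Recall that a structure is homomorphically bounded by a set $\mathcal{H}$ if its age consists exactly of the finite structures in its signature which do not contain a homomorphic image of any member of $\mathcal{H}$ as an induced substructure. So I need to identify the age of $\hrushovski{\sA}$ with the class of finite $\theta$-structures admitting no homomorphic image of any member of $\{\Completion{\sG} \mid \sG\in\mathcal{G}\}\cup\mathcal{S}$.

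First, I would unpack the remark preceding the corollary: the property of being separated (Definition~\ref{definition:separated}) is the conjunction of finitely many ``negative'' conditions, namely (i) $H_s$ does not relate pairs meeting $P$, (ii) $\iota,\tau$ do not meet $P$, and (iii) if $(a,b,c,d)\in S$ then $c,d\notin P$ and $c\neq d$. Each of these can be translated into the non-existence of a homomorphic image of one of finitely many small $\theta$-structures; collect these into a finite set $\mathcal{S}$ of finite $\theta$-structures. Thus a $\theta$-structure is separated if and only if it does not contain a homomorphic image of any member of $\mathcal{S}$.

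Next, for a finite $\theta$-structure $\sX$, by Fra\"{\i}ss\'{e}'s theorem and the homogeneity of $\hrushovski{\sA}$, $\sX$ lies in the age of $\hrushovski{\sA}$ if and only if it embeds into $\hrushovski{\sA}$. By Proposition~\ref{prop:emb-forb}, this is in turn equivalent to $\sX$ being separated and admitting no homomorphism from $\Completion{\sG}$ for any $\sG\in\mathcal{G}$. Combining the two equivalences, $\sX$ is in the age of $\hrushovski{\sA}$ if and only if no member of $\{\Completion{\sG}\mid \sG\in\mathcal{G}\}\cup\mathcal{S}$ maps homomorphically into $\sX$, which is the desired conclusion.

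There is no real obstacle here: the work was already done in Proposition~\ref{prop:emb-forb}, whose hypothesis (a finite $\theta$-structure and the existence of a homomorphism into $\hrushovski{\sA}$) lines up exactly with the definition of homomorphic boundedness. The one point to verify carefully is that the description of separatedness in terms of forbidden homomorphic images (rather than merely forbidden induced substructures) is correct; this is immediate since each clause of Definition~\ref{definition:separated} is preserved under taking preimages by a homomorphism, so the violation of any clause in $\sX$ corresponds to a homomorphism from a small ``bad'' structure into $\sX$.
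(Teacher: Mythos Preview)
Your proof is correct and matches the paper's approach exactly: the paper states the corollary as an immediate consequence of Proposition~\ref{prop:emb-forb} together with the observation that separatedness is characterised by forbidding the homomorphic images of the finitely many structures in $\mathcal{S}$. One minor remark: you do not need to invoke Fra\"{i}ss\'{e}'s theorem or homogeneity to say that a finite $\theta$-structure $\sX$ lies in the age of $\hrushovski{\sA}$ if and only if it embeds into $\hrushovski{\sA}$ --- this is just the definition of the age.
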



\section{Height 1 identities: local without global}
\label{section:local}
Let us recall that Question~(2) of the introduction asks whether the existence of a minion
homomorphism from $\Pol(\sA)$ to $\proj$ implies the existence of a uniformly
continuous minion homomorphism from $\Pol(\sA)$ to $\proj$. It has already been
established recently that there exists an $\omega$-categorical
structure with slow orbit growth which shows that the answer is negative~\cite{Bodirsky:2019aa}. However, that structure has an infinite number of relations and hence does not define a CSP, a fact that is inherent in its construction.

We are now going to prove that the Hrushovski-encoding of that structure, or in fact, of a simplification $\sS$ thereof, also provides an example. Since $\hrushovski{\sS}$ is a finite language structure, and since both $\omega$-categoricity and slow orbit growth are preserved by
the encoding, $\hrushovski{\sS}$ is a witness for the truth of
Theorem~\ref{thm:main:localnoglobalh1}. 

While the non-satisfaction of non-trivial h1~identities globally easily lifts from $\sS$ to $\hrushovski{\sS}$ by virtue of Proposition~\ref{prop:ucclonehomo}, we do not know in general when this is the case for the local satisfaction of 
non-trivial h1~identities.
Our proof thus relies on specific structural properties of $\sS$; we show that both $\sS$ and $\hrushovski{\sS}$ locally satisfy \emph{\dwnu\ identities}. This will also constitute an alternative proof of the fact that the original structure  $\sS$ satisfies non-trivial h1~identities locally -- the proof in~\cite{Bodirsky:2019aa} is indirect in the sense that it does not provide the actual identities satisfied in~$\sS$, a strategy which turned out infeasible for $\ES$.

\subsection{\Dwnu\ identities} \label{sect:phamidentities}
  We now define the \dwnu\ identities and argue
  that they are non-trivial. Following that, we prove Theorem~\ref{thm:embtoPham} 
  providing  a sufficient condition for the local satisfaction of such
  identities. 

\begin{definition}
  Let $n>k>1$, let $g_1,\ldots, g_{n}$ be binary function symbols,
  and for every injective function $\psi\colon \{1, \ldots, k\} \to \{1,
  \ldots, n\}$ let $f_\psi$ be a $k$-ary function symbol. Then the set of identities given by 
  \begin{align*}
    f_\psi(y,x,\ldots,x) &= g_{\psi(1)}(x,y)\\
    f_\psi(x,y,\ldots,x) &= g_{\psi(2)}(x,y)\\
    &\vdots\\
    f_\psi(x,\ldots,x,y) &= g_{\psi(k)}(x,y),
  \end{align*}
  for all injective functions $\psi\colon \{1, \ldots, k\} \to \{1, \ldots,
  n\}$ is called a set of \emph{\dwnu\ identities}, or simply
  \emph{DWNU identities} by abbreviation aficionados. In order to emphasise the
  parameters $n$ and $k$, we sometimes refer to the identities as \emph{$(n,
  k)$ \dwnu\ identities}.
\end{definition}


  Note that any function clone which satisfies identities of the form 
  $$
  f(y,x,\ldots,x) = \cdots =
  f(x,\ldots,x,y),
  $$ 
  called \emph{$k$-ary weak near-unanimity identities} when $f$ is $k$-ary for some $k
  \geq 3$, must also satisfy the $(n, k)$ \dwnu\
  identities for all $n > k$. This can be seen by setting
  $f_\psi = f$ for every $\psi$. Moreover,  there exist function clones which satisfy 
  \dwnu\ identities, but do not  satisfy any weak
  near-unanimity identities: one example is the clone generated by all injective
  operations on a countable set, see~\cite{Eq-oligo-CSP}. Hence, we can regard
  \dwnu\ identities as a strict weakening of the weak
  near-unanimity identities. 

  Further note that, for all parameters $m\geq n>k>1$, the $(n, k)$ \dwnu\ 
  identities form a subset of the $(m, k)$ \dwnu\ 
  identities. Thus for every fixed $k>1$ the family of $(n, k)$
  \dwnu\ identities form an infinite chain of h1
  identities of increasing strength. In the special case $k = 2$, the satisfaction of any of the $(k,2)$
  \dwnu\ identities is equivalent to the existence of a
  binary commutative term (as they imply $g_1(x,y) = g_2(y,x) = g_3(x,y) =
  g_1(y,x)$).

\begin{lemma} \label{lemma:phamnontrivial}
  For all parameters $n > k > 1$ the $(n,k)$ \dwnu\
 identities are non-trivial.
\end{lemma}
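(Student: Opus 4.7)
The plan is to show that the identities cannot be satisfied in $\proj$, the projection clone on a two-element set. Suppose for contradiction that an interpretation in $\proj$ exists. Each $k$-ary symbol $f_\psi$ must be assigned to some projection $\pi^k_{i_\psi}$, and each binary symbol $g_i$ to either $\pi^2_1$ or $\pi^2_2$.

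Next I would read off the constraints. Substituting into the $j$-th identity $f_\psi(x,\ldots,x,y,x,\ldots,x) = g_{\psi(j)}(x,y)$ (with $y$ in position $j$), the left-hand side evaluates to $y$ precisely when $i_\psi = j$ and to $x$ otherwise; correspondingly $g_{\psi(j)}$ must be $\pi^2_2$ when $j = i_\psi$, and $\pi^2_1$ otherwise. Define $T := \{i \in \{1,\ldots,n\} : g_i = \pi^2_2\}$. Then for every injective $\psi\colon\{1,\ldots,k\}\to\{1,\ldots,n\}$, exactly one index of the image $\operatorname{Im}(\psi)$, namely $\psi(i_\psi)$, lies in $T$. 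Since every $k$-subset of $\{1,\ldots,n\}$ arises as the image of some injective $\psi$, we obtain the combinatorial condition $|S \cap T| = 1$ for every $k$-subset $S \subseteq \{1,\ldots,n\}$.

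The final step is to show that no $T \subseteq \{1,\ldots,n\}$ satisfies this. If $|T| \geq 2$, pick two distinct elements of $T$ and extend them (using $n > k \geq 2$) to a $k$-subset $S$, obtaining $|S \cap T| \geq 2$. If $|T| \leq 1$, then since $n > k$ the set $\{1,\ldots,n\}\setminus T$ has at least $k$ elements, so we can choose a $k$-subset $S$ disjoint from $T$, giving $|S \cap T| = 0$. Both cases contradict the required $|S \cap T| = 1$, so no interpretation in $\proj$ exists, and the identities are non-trivial.

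I do not expect any real obstacle here; the argument is a clean combinatorial count and the only mild point is to note that the injectivity of $\psi$ lets one realise every $k$-subset as an image, and that the hypothesis $n > k$ is used precisely to produce a $k$-subset avoiding a singleton $T$.
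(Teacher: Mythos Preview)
Your proof is correct and follows essentially the same approach as the paper: both argue that the set $T$ of indices $i$ with $g_i=\pi^2_2$ can have neither two elements (else some $f_\psi$ would project onto two coordinates) nor at most one (else $n>k$ lets one choose $\psi$ avoiding $T$, forcing $f_\psi$ to satisfy weak near-unanimity). Your version packages the case analysis via the clean combinatorial condition $|S\cap T|=1$ for all $k$-subsets $S$, while the paper argues the two cases directly from the identities, but the substance is identical.
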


\begin{proof}
Assume to the contrary that there exist projections
  $g_1,\ldots, g_n \in \Projs$ and $f_\psi \in \Projs$ for every injection
  $\psi\colon \{1, \ldots, k\} \to \{1, \ldots, n\}$ that satisfy the $(n,k)$
  \dwnu\ identities. First, suppose that there are two
  distinct $1\leq i,j\leq k$ such that $g_i, g_j$ are both the projection onto the second coordinate. Then let $\psi$
  be an injective function with $\psi(1)= i, \psi(2) = j$. It follows from the
  identities that $f_\psi(y,x,\ldots,x) = f_\psi(x,y,\ldots,x) = y$ holds for all values of the variables,
  which contradicts $f_\psi$ being a projection. Therefore at most one
  operation $g_i$ equals the projection to its second coordinate. Since $n
  > k$, there is an injective function $\psi\colon \{1, \ldots, k \} \to \{1,
  \ldots, n\}$ such that $g_{\psi(i)}$ is the first projection for all $i \in \{1, \ldots, k\}$.
  Then $f_\psi$ satisfies the weak near-unanimity identities, which again
  contradicts $f_\psi$ being a projection.
\end{proof}

We now prove Theorem~\ref{thm:embtoPham}. Recall that
the theorem states that a homogeneous structure $\sU$ satisfies $(n, k)$ \dwnu\ identities on a finite subset $F$ of its domain if the following two
assumptions hold:
  \begin{enumerate}[label = \textrm{ (\roman*)}]
    \item Only relations of arity smaller than $k$ hold on $F$;
    \item There is an embedding from $\sU^2$ into $\sU$.
  \end{enumerate}

Before we prove Theorem~\ref{thm:embtoPham}, observe that condition (ii) is equivalent to the existence of embeddings from arbitrary powers of $\sU$ into $\sU$.

\begin{lemma}\label{lemma:embsq}
  Let $\sU$ be a relational structure and let $n \geq 2$. Then there exists an
  embedding from $\sU^2$ into $\sU$ if and only if there exists an embedding from $\sU^n$ into $\sU$.
\end{lemma}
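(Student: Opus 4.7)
The plan is a straightforward induction on $n$, relying on the fact that embeddings are closed under taking products and composition, and that relations in a direct power are computed componentwise.

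For the ``if'' direction, I would simply exhibit an embedding $\sU^2 \to \sU^n$ for $n\geq 2$. The map $(x,y)\mapsto (x,y,y,\ldots,y)$ works: a tuple $((a_1,b_1),\ldots,(a_k,b_k))$ belongs to $R^{\sU^2}$ iff both $(a_1,\ldots,a_k)$ and $(b_1,\ldots,b_k)$ lie in $R^\sU$, and this is exactly the condition for the image tuple to lie in $R^{\sU^n}$. Composing with any given embedding $\sU^n\to\sU$ then yields the required embedding $\sU^2\to\sU$.

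For the ``only if'' direction, I would proceed by induction on $n$. The base case $n=2$ is the hypothesis. For the induction step, suppose $e_{n-1}\colon \sU^{n-1}\to\sU$ is an embedding. Identifying $\sU^n$ with $\sU^{n-1}\times\sU$, the map $e_{n-1}\times\mathrm{id}_\sU$ is an embedding from $\sU^n$ into $\sU\times\sU=\sU^2$; this is immediate from the componentwise definition of relations in a direct product. Composing with the hypothesised embedding $\sU^2\to\sU$ gives the desired embedding $\sU^n\to\sU$.

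There is no real obstacle here: both directions reduce to the routine observation that a product of embeddings between product structures is again an embedding, together with closure of the class of embeddings under composition. The only minor subtlety is making sure the ``if'' direction uses a map that is genuinely an embedding (not merely a homomorphism), which is why one duplicates a coordinate rather than, say, sending $(x,y)\mapsto(x,y,x,\ldots)$ with an arbitrary pattern; any coordinate pattern that uses both $x$ and $y$ at least once works in the same way.
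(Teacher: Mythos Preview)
Your proof is correct and essentially identical to the paper's. Both directions use the same constructions: for the ``if'' direction the paper sets $g(x,y):=f(x,y,\ldots,y)$, which is exactly your map $(x,y)\mapsto(x,y,\ldots,y)$ composed with $f$; for the ``only if'' direction the paper's inductive step $f(x_1,\ldots,x_{n+1})=g(h(x_1,\ldots,x_n),x_{n+1})$ is precisely your $e_{n-1}\times\mathrm{id}_\sU$ followed by the embedding $\sU^2\to\sU$, up to an index shift.
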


\begin{proof}
If there is an embedding $f \colon \sU^n \to \sU$ for some $n  \geq 2$, then 
$g \colon \sU^2 \to \sU$, defined by $g(x, y) := f(x, y,
\ldots, y)$, is also an embedding. On the other hand, if for some $n \geq 2$ there
exist embeddings $g \colon \sU^2 \to \sU$ and $h \colon \sU^n \to \sU$, then the
composition $f(x_1, \ldots, x_{n + 1}) = g(h(x_1, \ldots, x_n), x_{n + 1})$ is
an embedding from $\sU^{n+1}$ into $\sU$. Hence by induction the existence of an embedding from $\sU^2$ into $\sU$
implies the existence of an embedding from $\sU^n$ into $\sU$ for all $n \geq 2$.
\end{proof}

\begin{proof}[Proof of Theorem~\ref{thm:embtoPham}]
  For all $l\geq 2$, define $X_l \subseteq F^l$ by
  \[
    X_l = \bigcup_{a,b\in F} \{(a,\ldots,a,b),(a,\ldots,a,b,a), \ldots,(b,a,\ldots,a) \},
  \] 
  and let $\sX_l$ be the substructure which $X_l$ induces in $\sU^l$. 
  
  The first
  step of our proof is to show that if $n \geq k$, then there exists an embedding
  $h \colon \sX_k \to \sX_{n}$ such that $\mathbf{x}$ is an initial segment of $h(\mathbf{x})$ for all $\mathbf{x}\in X_k$.
 Let us first assume that $k \geq 3$. For every tuple $\mathbf{x} \in \sX_k$ we are then going to denote the unique element of $F$ which
occurs more than once among its entries by $s(\mathbf{x})$. Define $h
\colon X_k \to X_{n}$ to be the map that extends the tuple $\mathbf{x}$ by $n-k$ many
entries with value $s(\mathbf{x})$. In order to prove that $h$ is an embedding let
$\mathbf{x}_1, \ldots, \mathbf{x}_m \in X_k$ be such that $R^{\sU^k}(\mathbf{x}_1,
\ldots, \mathbf{x}_m)$ holds for some $m$-ary relation symbol $R$ in the signature of $\sU$.
By assumption~(i) we have $m<k$. Thus
there exists $1\leq j\leq k$ such that the projection of each
$\mathbf{x}_i$ to its $j$-th coordinate equals $s(\mathbf{x}_i)$.
Therefore $(s(\mathbf{x}_1),\ldots,s(\mathbf{x}_m)) \in R^{\sU}$, and hence $h$ is a homomorphism. Its inverse -- the projection of $n$-tuples to the
first $k$-coordinates -- is also a homomorphism, and thus $h$ is an embedding. In the remaining case of $k = 2$, we define $h(x_1,x_2)= (x_1,
x_2,\ldots,x_2)$. To check that this $h$ is an embedding, by assumption (i), we only need to check that $h$ is an
embedding with respect to unary relations, which however follows from its definition.

Observe that $h$ was defined in such a way that, for each index $1\leq i\leq k$, the $i$-th
projection of $h(\mathbf{x})$ is equal to $x_i$. By permuting the
coordinates of its image in a suitable manner, we can obtain embeddings
$h_\psi\colon \sX_k \to \sX_{n}$ for every injection $\psi \colon \{1, \ldots,
k\} \to \{1, \ldots, n\}$ such that the $\psi(i)$-th projection of
$h_\psi(\mathbf{x})$ is equal to $x_i$ for all $1\leq i\leq k$.

  In order to construct the operations $f_\psi$ on $F$, let $f \colon \sU^k \to
  \sU$ and $g \colon \sU^n \to \sU$ be embeddings, which exist by
  Lemma~\ref{lemma:embsq}. For every injection $\psi:\{1, \ldots, k\} \to \{1,
  \ldots, n\}$ define the map $u_\psi\colon f(\sX_k) \to g(\sX_{n})$ by 
  \begin{align} \label{eq:localpham}
    u_\psi(f(a, \ldots, a, \underset{i^{\text{th}}}{b}, a, \ldots, a)) = g(a,
    \ldots, a, \underset{\psi(i)^{\text{th}}}{b}, a, \ldots, a).
  \end{align}
  Then $u_\psi$ is equal to $g \circ h_\psi \circ f^{-1}$. Since $h_\psi$ is an
  embedding, $u_\psi\colon f(\sX_k) \to u_\psi (f(\sX_k))$ is an isomorphism
  between finite substructures of $\sU$. By the homogeneity of $\sU$, it can be
  extended to an automorphism $v_\psi$ of $\sU$. Set $f_\psi := v_\psi \circ f$
  and, for all $1\leq i\leq n$, define $g_i(x, y) := g(x,\ldots,x,y,x,\ldots,x)$, where the only $y$ appears at
  the $i$-th coordinate of $g$. It then follows from \eqref{eq:localpham} 
  that these polymorphisms satisfy the $(n,k)$ \dwnu\
  identities on $F$, concluding the proof.
\end{proof}

\subsection{Revisiting the infinite language
counterexample}\label{section:infinitelang}
We now revisit the infinite language structure presented in~\cite{Bodirsky:2019aa} which provides a negative answer to Question~(2). In fact, the construction there depends on two parameters $\alpha$ and $\delta$, of which only $\alpha$ is mentioned, whereas $\delta$ is eliminated by an (arbitrary) choice. Therefore, actually a family of structures are presented, which will be of importance to us when we study the CSPs of those structures in Section~\ref{section:complexity}, which depends on the parameters $\alpha$ and $\delta$.

We are going to recall the construction of the structures, or in fact a slight simplification thereof, as we do not require
  them to be model-complete cores. This additional condition was necessary in~\cite{Bodirsky:2019aa} because of the indirect proof of the local satisfaction of non-trivial h1~identities; since we are going to prove directly the satisfaction of \dwnu\ 
  identities, we can avoid these technicalities.

%

\begin{theorem}[\cite{CherlinShelahShi}, Corollary of Theorem~3.1 in~\cite{Hubicka2009UniversalSW}]\label{thm:CSS}
  Let $\mathcal{F}$ be a finite family of finite connected relational
  structures. Then there exists a countable $\omega$-categorical structure
  $\CSS(\mathcal{F})$ such that
  \begin{itemize}
    \item $\CSS(\mathcal{F})$ is homomorphically bounded by $\mathcal{F}$;
    \item $\CSS(\mathcal{F})$ has no algebraicity; 
    \item there exists a homogeneous expansion $\sH$ of\/ $\CSS(\mathcal{F})$ by finitely many pp-definable relations whose arities are the size of the minimal cuts of structures in $\mathcal{F}$,
    and\/ $\sH$ is homomorphically bounded.
  \end{itemize}
\end{theorem}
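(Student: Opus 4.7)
The plan is to construct $\CSS(\mathcal{F})$ as the Fra\"iss\'e limit of a suitable amalgamation class, and then obtain the homogeneous expansion $\sH$ by adding relations for minimal cuts as in the framework of Hubi\v cka--Ne\v set\v ril. Let $\sigma$ be the (finite) common signature of the members of $\mathcal{F}$, and let $K$ denote the class of finite $\sigma$-structures which admit no homomorphism from any $\sF\in\mathcal{F}$. Then $K$ is trivially closed under isomorphisms and under induced substructures (HP). The key observation is that, because each $\sF\in\mathcal{F}$ is connected, $K$ is closed under \emph{free amalgamation}: given $\sA,\sB\in K$ with a common substructure $\sC$, the free amalgam $\sD = \sA\sqcup_{\sC}\sB$ lies in $K$, since the image of any homomorphism from a connected structure $\sF\in\mathcal{F}$ into $\sD$ would be connected and therefore contained entirely in $\sA$ or in $\sB$, contradicting $\sA,\sB\in K$. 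Hence $K$ has the SAP.

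Applying Fra\"iss\'e's Theorem~\ref{thm:fraisse}, I would define $\CSS(\mathcal{F})$ to be the Fra\"iss\'e limit of $K$. The first bullet (homomorphic boundedness by $\mathcal{F}$) is immediate from the definition of $K$; the SAP guarantees the no-algebraicity property (second bullet); and $\omega$-categoricity follows from the finiteness of $\sigma$ together with homogeneity, via the standard counting of orbits by isomorphism types of finite induced substructures.

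For the third bullet, I would invoke the Hubi\v cka--Ne\v set\v ril framework. A \emph{minimal cut} of $\sF\in\mathcal{F}$ is a subset $C\subseteq F$ of minimum cardinality such that the substructure induced by $F\setminus C$ is disconnected. For each isomorphism type $\tau$ of a tuple enumerating such a $C$, together with the combinatorial data recording how it separates the connected components of $\sF$, I would introduce a fresh relational symbol $R_\tau$. Its interpretation in $\sH$ consists of those tuples of $\CSS(\mathcal{F})$ which appear as a minimal cut inside a partial copy of the pattern that could, within $\CSS(\mathcal{F})$, be completed on one side to witness the corresponding component configuration. Each $R_\tau$ is pp-definable in $\CSS(\mathcal{F})$ by existentially quantifying over the vertices of one of the components, using the relations of $\sigma$; its arity equals the size of the minimal cut, as required.

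Homogeneity of $\sH$ is then established by a back-and-forth argument: the cut relations carry precisely the information needed to recognize when a partial isomorphism fails to extend inside $\CSS(\mathcal{F})$, and so once a partial map preserves all the $R_\tau$, it preserves exactly those obstructions. Homomorphic boundedness of $\sH$ follows because the forbidden configurations become expressible locally -- either one exhibits a forbidden $\sF\in\mathcal{F}$, or one violates one of finitely many incompatibility patterns among the $R_\tau$, of arities bounded by the minimal cut sizes. The main obstacle, and the genuinely technical step, is this last part -- verifying that finitely many such local incompatibility patterns suffice -- which is precisely the content of Theorem~3.1 of Hubi\v cka--Ne\v set\v ril, so in practice I would simply cite it. What distinguishes the statement here from a generic Morleyization is the pp-definability of the new relations in $\CSS(\mathcal{F})$, and this is what makes the expansion usable in the Hrushovski-encoding machinery developed later.
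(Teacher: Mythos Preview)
The paper does not prove this theorem; it is quoted from Cherlin--Shelah--Shi and Hubi\v cka--Ne\v set\v ril and used as a black box. Your sketch, however, contains a genuine error in its very first step, and the error is exactly what makes the cited result non-trivial.

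Your claim that the class $K$ of finite $\mathcal F$-free $\sigma$-structures is closed under free amalgamation is false. Connectedness of $\sF\in\mathcal F$ ensures that the image of a homomorphism $\sF\to\sD$ is connected, but in the free amalgam $\sD=\sA\sqcup_{\sC}\sB$ the two pieces overlap in $C$, so a connected image can pass through $C$ and meet both sides. Concretely: let $\sigma$ have one binary edge relation, let $\mathcal F=\{P_2\}$ with $P_2$ the path on three vertices, and take $\sA$ to be a single edge $\{1,2\}$, $\sB$ a single edge $\{2,3\}$, and $\sC=\{2\}$. Neither $\sA$ nor $\sB$ admits a homomorphism from $P_2$, yet their free amalgam is the path $1$--$2$--$3$, which does. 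So $K$ fails AP in general, has no Fra\"iss\'e limit, and $\CSS(\mathcal F)$ is \emph{not} homogeneous in its own signature.

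The actual construction proceeds in the opposite order from yours. One first expands $\sigma$ by the cut relations, defines an appropriate class $K'$ of expanded structures, and proves that $K'$ has SAP; the Fra\"iss\'e limit of $K'$ is the homogeneous structure $\sH$ of the third bullet, and $\CSS(\mathcal F)$ is defined as its $\sigma$-reduct. The point of the cut relations is precisely to record, at the level of finite substructures, the information that the free amalgam loses, so that amalgamation can be repaired. The technical content of the Hubi\v cka--Ne\v set\v ril theorem is that finitely many such relations, of the stated arities, suffice, and that the resulting $K'$ is indeed a strong amalgamation class.
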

We refer to~\cite{Hubicka2009UniversalSW} for further details,  including the definitions of connectedness and cuts. 

The first step in the construction is to use Theorem~\ref{thm:CSS} to obtain $\omega$-categorical structures that are homomorphically bounded by a given connected graph on $n$-tuples. More precisely, for every finite connected loopless graph $\mathbb{G}$ and every integer $n \geq 1$ define $\mathbb{G}[n]$ to be a structure with a $2n$-ary predicate $R$; the domain of $\mathbb{G}[n]$ is obtained by substituting every vertex $x$ of $\mathbb{G}$ by $n$ distinct elements $x_1,\ldots,x_n$, and the relation $R^{\mathbb{G}[n]}$ is defined to contain all tuples $(x_1,\ldots,x_n,y_1,\ldots,y_n)$ for which $(x,y)$ is an edge in $\mathbb{G}$. Furthermore let ${\mathbb L}_1^{2n},\ldots, {\mathbb L}_{N}^{2n}$ be all the `loop-like' $R$-structures, that is all structures of size $2n-1$ in which $R$ holds for precisely one $2n$-tuple.

For every finite connected loopless graph $\mathbb{G}$, let $\sS(\mathbb{G},n)$ be the structure obtained from Theorem~\ref{thm:CSS}
for the set $\mathcal F:=\{\mathbb{G}[n],{\mathbb L}_1^{2n},\ldots, {\mathbb L}_{N}^{2n}\}$, and let $\sH(\mathbb{G},n)$ be its homogeneous homomorphically bounded expansion
whose existence is claimed in Theorem~\ref{thm:CSS}. The following statement follows from the results in~\cite{Bodirsky:2019aa} (although not explicitly stated there, it can be inferred from the proof of Lemma~6.5).

\begin{lemma} \label{lemma:CSSconstruction} 
  Let $\mathbb{G}$ be a finite connected non-trivial loopless graph.
   All tuples related by a relation in $\sH(\mathbb G, n)$ have at least $n$ distinct entries.
\end{lemma}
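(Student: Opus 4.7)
The strategy is to split the relations of $\sH(\mathbb G,n)$ into two families and handle each separately: the original $2n$-ary predicate $R$ from $\sS(\mathbb G,n)$, and the finitely many new predicates introduced in the homogeneous pp-definable expansion supplied by Theorem~\ref{thm:CSS}. The first family is easy; essentially all the work is in controlling the arities and supports of the second family.

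For $R$ itself, I would observe that $R$ has arity $2n$ and that the forbidden structures $\mathbb L_1^{2n},\ldots,\mathbb L_N^{2n}$ enumerate, up to isomorphism, all single-hyperedge structures of size $2n-1$ (a single $R$-tuple with exactly one repetition). If $R(a_1,\ldots,a_{2n})$ held in $\sH(\mathbb G,n)$ with some $a_i=a_j$, the obvious map from some $\mathbb L_\ell^{2n}$ into $\sS(\mathbb G,n)$ would be a homomorphism, contradicting homomorphic boundedness. Hence every $R$-tuple already has its full $2n\geq n$ entries distinct.

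For the new predicates, I would unpack the construction of Theorem~\ref{thm:CSS}: each added symbol has arity equal to the size of some minimal cut of one of the structures in $\mathcal F=\{\mathbb G[n],\mathbb L_1^{2n},\ldots,\mathbb L_N^{2n}\}$, and is pp-definable in $\sS(\mathbb G,n)$ by a formula that ``pins down'' an injective witness for that cut. Two sub-claims then suffice:
\begin{enumerate}[label=\textrm{(\alph*)}]
    \item Every minimal cut in every structure of $\mathcal F$ has size at least $n$.
    \item The pp-definition of each new predicate forces its tuples to have all entries distinct (in particular at least $n$ of them).
\end{enumerate}
Sub-claim~(b) is essentially syntactic: the cut-based definitions of Hubi\v{c}ka--Ne\v{s}et\v{r}il introduce existential witnesses whose joint realisation in a structure with no accidental identifications forces the free variables to name pairwise distinct elements, and I would verify this directly from the form of the formulas. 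Sub-claim~(a) is combinatorial: for a loop-like structure $\mathbb L_i^{2n}$, the Gaifman graph is a clique on $2n-1$ vertices, so every proper vertex cut has at least $2n-2\geq n$ elements. For $\mathbb G[n]$, I would argue that because $\mathbb G$ is connected and non-trivial, any cut $S$ of $\mathbb G[n]$ whose complement is disconnected must, for some edge $uv$ of $\mathbb G$, contain the entire $n$-element bundle over $u$ or the entire $n$-element bundle over $v$: otherwise a remaining pair of bundle representatives in the same hyperedge keeps the two sides Gaifman-connected, and connectedness of $\mathbb G$ propagates this to all of $\mathbb G[n]$.

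Combining the two parts, every relation of $\sH(\mathbb G,n)$ is either $R$ (handled directly) or an added predicate whose arity is $\geq n$ and whose tuples have pairwise distinct entries. Either way a related tuple has at least $n$ distinct entries, as claimed. The principal obstacle is sub-claim~(a) for $\mathbb G[n]$: the argument must treat graphs $\mathbb G$ of arbitrary connectivity uniformly, and I would expect to have to carefully invoke the precise ``blow-up'' nature of $\mathbb G[n]$ to rule out unusually small cuts produced by clever choices of vertices across many bundles.
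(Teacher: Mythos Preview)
The paper does not actually prove this lemma: it merely states that the result can be inferred from the proof of Lemma~6.5 in~\cite{Bodirsky:2019aa}. So there is no in-paper argument to compare against, and your proposal amounts to an independent reconstruction. Your overall decomposition is sound, and the cut analysis for $\mathbb G[n]$ in sub-claim~(a) is exactly the right argument.

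Two points need repair. First, the Gaifman graph of each $\mathbb L_i^{2n}$ is a clique on $2n-1$ vertices and therefore has \emph{no} vertex cuts; your sentence ``every proper vertex cut has at least $2n-2$ elements'' is vacuous rather than informative. The correct conclusion is simply that the loop structures contribute no new relations to the expansion, so only cuts of $\mathbb G[n]$ are relevant for~(a). Second, sub-claim~(b) as you state it---that tuples in the added relations have \emph{all} entries pairwise distinct---is stronger than needed and does not obviously follow from the general shape of the Hubi\v{c}ka--Ne\v{s}et\v{r}il pp-definitions. What you actually need, and what does follow, is weaker: by your own argument every minimal cut of $\mathbb G[n]$ contains a full $n$-element bundle over some vertex~$v$; this bundle sits inside an $R$-hyperedge of the corresponding piece, so the pp-formula defining the new relation contains an $R$-atom among whose arguments are all $n$ bundle coordinates. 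Since $R$ holds only on injective $2n$-tuples in $\sS(\mathbb G,n)$ (the loop structures being forbidden), those $n$ coordinates are forced to be pairwise distinct in any realisation, giving the required $\geq n$ distinct entries without appealing to full injectivity of the cut tuple.
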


In the next step, we superpose structures of the form $\sS(\mathbb G, n)$ and of the form $\sH(\mathbb G, n)$, respectively,  in a generic way to obtain in a single structure, as
in~\cite[Section~6.2]{Bodirsky:2019aa}. Suppose that $\sA$ and $\sB$ are two
 structures with no algebraicity, and without loss of
generality assume that their signatures $\sigma$ and $\tau$ are disjoint. Then
their \emph{generic superposition} $\sA \odot \sB$ is defined in the following way.
\begin{itemize} 
  \item Let $\sA'$ and $\sB'$ be homogeneous first-order expansions of $\sA$ and $\sB$ in disjoint signatures $\sigma'$ and $\tau'$. Note that both the age of $\sA'$ and the age of  $\sB'$ have the SAP.
  \item Let $\mathcal{C}$ be the class of finite $(\sigma' \cup \tau')$-structures such that their $\sigma'$- and $\tau'$-reducts embed into $\sA'$ and $\sB'$ respectively. Then $\mathcal{C}$ is also a Fra\"{i}ss\'{e} class, and in fact it also has SAP. We then define $\sA \odot \sB$ to be the $(\sigma \cup \tau)$-reduct of the Fra\"{i}ss\'{e} limit of $\mathcal{C}$.
\end{itemize}
In a similar fashion we can also form the generic superposition of a family of
countably many structures with no algebraicity. 





\begin{definition}\label{definition:S}
Let $\alpha \colon \N\setminus\{0\} \to \N$ be a strictly monotone map and let $\delta$ be a map from $\N\setminus\{0\}$ to the set of loopless connected graphs  which contains all non-3-colourable graphs in its image. Then we define
  $\sS_{\delta, \alpha}$ and $\sH_{\delta, \alpha}$ to be the generic
  superpositions of the families $(\sS(\delta(n), \alpha(n)))_{n\geq 1}$ and $(\sH(\delta(n),
  \alpha(n)))_{n\geq 1}$ respectively. 
\end{definition}

The superposed structures have the following properties.

\begin{proposition}\label{proposition:SandH}
  Let $\delta$ and $\alpha$ be as in Definition~\ref{definition:S}. Then the following statements hold.
  \begin{enumerate}[label = \textrm{ (\arabic*)}]
     \item $\sH_{\delta, \alpha}$ is a homogeneous first-order expansion of $\sS_{\delta,
       \alpha}$ by pp-definable relations; 
     \item $\sS_{\delta, \alpha}$ (and hence also $\sH_{\delta, \alpha}$) is  
       $\omega$-categorical and has no algebraicity;
     \item $\sS_{\delta, \alpha}$ and $\sH_{\delta, \alpha}$ are homomorphically bounded; 
     \item There exists a minion homomorphism from $\Pol(\sS_{\delta, \alpha})$ (and hence also from $\Pol(\sH_{\delta, \alpha})$)
       to $\proj$.
  \end{enumerate}
\end{proposition}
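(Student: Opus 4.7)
The plan is to reduce each of the four claims to the corresponding property of the level structures $\sS(\delta(n),\alpha(n))$ and $\sH(\delta(n),\alpha(n))$, using general transfer principles for generic superpositions. Claims (1)--(3) are routine, while (4) is the substantive step.

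First I would establish (1). By construction of the generic superposition, a finite structure embeds into $\sH_{\delta,\alpha}$ if and only if each of its reducts to the signature of level~$n$ embeds into the corresponding $\sH(\delta(n),\alpha(n))$. In particular, the reduct of $\sH_{\delta,\alpha}$ to the signature of level~$n$ is isomorphic to $\sH(\delta(n),\alpha(n))$, and analogously for $\sS_{\delta,\alpha}$. Hence a pp-formula in the signature of level~$n$ that defines some relation of $\sH(\delta(n),\alpha(n))$ inside $\sS(\delta(n),\alpha(n))$ continues to pp-define the same relation of $\sH_{\delta,\alpha}$ inside $\sS_{\delta,\alpha}$: witnesses live inside the corresponding reduct, which is exactly $\sS(\delta(n),\alpha(n))$. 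Homogeneity of $\sH_{\delta,\alpha}$ is automatic from its definition as a Fra\"{\i}ss\'{e} limit.

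Next, for (2) and (3), the key combinatorial input is Lemma~\ref{lemma:CSSconstruction}: every non-trivial tuple in a relation of $\sH(\delta(n),\alpha(n))$ has at least $\alpha(n)$ distinct entries, and $\alpha$ is strictly monotone, so only finitely many levels contribute non-empty relations on any $m$-element substructure. Combined with the $\omega$-categoricity of each level, this yields finitely many orbits on $m$-tuples in $\sH_{\delta,\alpha}$ and in $\sS_{\delta,\alpha}$. Absence of algebraicity follows since each level's Fra\"{\i}ss\'{e} class has the SAP (Theorem~\ref{thm:CSS}), and SAP is preserved by the generic superposition construction. For (3), Theorem~\ref{thm:CSS} supplies for each level a finite family of structures homomorphically bounding $\sS(\delta(n),\alpha(n))$ (and likewise for $\sH$); viewed in the full superposition signature by interpreting all other relations as empty, the countable union of these families homomorphically bounds $\sS_{\delta,\alpha}$ and $\sH_{\delta,\alpha}$, because a homomorphism from a forbidden structure into the superposition would restrict to one at the relevant level.

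The main obstacle is (4), which is equivalent to saying that $\sS_{\delta,\alpha}$ satisfies only trivial h1~identities globally. The approach I would pursue relies on the independence of the infinitely many levels: for every polymorphism $f$ of $\sS_{\delta,\alpha}$ of arity $k$, and every level $n$ with $\delta(n)$ not 3-colourable, a Ramsey canonisation of the restriction of $f$ to $\sS(\delta(n),\alpha(n))$ selects a coordinate $i_n(f)\in\{1,\dots,k\}$ on which $f$ essentially behaves like the corresponding projection, in a way that is robust under composition with projections from the inside. Taking an ultrafilter limit over $n$ of these coordinates produces a single $\xi(f)\in\{1,\dots,k\}$; the resulting map $\xi$ into $\proj$ is then a minion homomorphism, and composing with the restriction map $\Pol(\sH_{\delta,\alpha})\to\Pol(\sS_{\delta,\alpha})$ from (1) handles $\sH_{\delta,\alpha}$ as well. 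I expect the delicate point to be verifying that $\xi$ is minor-preserving: this requires combining the Ramsey canonicity at each individual level with the fact that the levels of the superposition are independent from each other, so that distinct polymorphisms obtained from each other by a minor operation cannot \emph{conspire} to pick incompatible coordinates in the limit.
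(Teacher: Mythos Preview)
Your arguments for (1)--(3) are correct and agree with the paper's, which itself defers most details to~\cite{Bodirsky:2019aa}. One minor difference on~(3): you expand each level's forbidden structures by \emph{empty} relations in the other signatures, while the paper takes \emph{all} possible expansions; these yield the same bounded class, since a homomorphism out of any expansion restricts to one out of the empty expansion.

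For~(4), the global scheme you propose---a minion homomorphism $\xi_n$ to $\proj$ at each level $n$ with $\delta(n)$ non-3-colourable, followed by an ultrafilter limit over such~$n$---is indeed the structure of the argument in~\cite[Lemma~6.7]{Bodirsky:2019aa}, which the paper also simply cites. What is off in your sketch is the per-level mechanism. Ramsey canonisation produces polymorphisms with uniform behaviour on orbit-types, but it does not by itself yield a minor-preserving coordinate assignment, and your phrase ``essentially behaves like the corresponding projection'' is stronger than what a minion homomorphism to $\proj$ actually asserts. The argument in the reference instead uses that, at level~$n$, the structure $\sS_{\delta,\alpha}$ pp-constructs the finite graph $\delta(n)$ (via the $2\alpha(n)$-ary relation and the definition of $\delta(n)[\alpha(n)]$); since $\delta(n)$ is a non-3-colourable loopless graph, $\Pol(\delta(n))$ has a minion homomorphism to $\proj$ by the finite-domain theory, and the pp-construction lifts this to~$\xi_n$. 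Once each $\xi_n$ is a genuine minion homomorphism, the ultrafilter limit automatically preserves minors, so your worry about levels ``conspiring'' does not arise.
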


\begin{proof} 
  For~(1), note that each $\sH_{\delta, \alpha}$ is homogeneous by the construction of the superposition. Any relation of $\sH_{\delta, \alpha}$ is a relation of $\sH(\delta(n),
  \alpha(n))$ for some $n\geq 1$. Thus, it is first-order definable in $\sS(\delta(n),
  \alpha(n))$, and hence also in $\sS_{\delta, \alpha}$. Item~(2) can be proven as in~\cite[Lemma~6.5]{Bodirsky:2019aa}. To see~(3), note that $\sS(\delta(n),
  \alpha(n))$ is homomorphically bounded by a set $\mathcal F_n$ for all $n\geq 1$; taking all possible expansions of all structures from $\bigcup_{n\geq 1} \mathcal F_n$ to the signature of $\sS_{\delta, \alpha}$ yields a set by which $\sS_{\delta, \alpha}$ is homomorphically bounded. The same argument works for $\sH_{\delta, \alpha}$.  Item~(4) can be shown by the same proof as in~\cite[Lemma~6.7]{Bodirsky:2019aa}.
\end{proof}

By Proposition~\ref{proposition:SandH} (3), the structure $\sH_{\delta,
\alpha}$ is homomorphically bounded; therefore it satisfies the condition of
the following lemma.

\begin{lemma} \label{lemma:powerembedding}
  Let $\sA$ be a homogeneous homomorphically bounded structure and let $k \geq 1$. Then there exists an embedding from $\sA^k$ into $\sA$.
\end{lemma}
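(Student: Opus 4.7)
The plan is to reduce the claim to a Fra\"{i}ss\'{e}-style comparison of ages: $\sA^k$ embeds into $\sA$ as soon as $\mathrm{age}(\sA^k)\subseteq \mathrm{age}(\sA)$, which is a consequence of the universality of countable homogeneous structures. Concretely, enumerate $A^k$ as $a_1,a_2,\ldots$ (possible since $\sA$ is countable by our standing assumptions) and inductively build embeddings $f_n$ of the substructure of $\sA^k$ induced by $\{a_1,\ldots,a_n\}$ into $\sA$; at step $n+1$, the substructure on $\{a_1,\ldots,a_{n+1}\}$ lies in $\mathrm{age}(\sA)$, so it embeds into $\sA$ via some $g$, and the homogeneity of $\sA$ provides an automorphism $\alpha$ of $\sA$ such that $\alpha\circ g$ extends $f_n$. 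The union of the $f_n$ is then the desired embedding $\sA^k \hookrightarrow \sA$.

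The main content is therefore to establish $\mathrm{age}(\sA^k)\subseteq \mathrm{age}(\sA)$. Fix a set $\mathcal F$ of structures witnessing the homomorphic boundedness of $\sA$. A finite structure $\sC$ fails to belong to $\mathrm{age}(\sA)$ if and only if some $\sF \in \mathcal F$ admits a homomorphism into $\sC$: any such homomorphism $h$ exhibits the substructure of $\sC$ induced by $h(F)$ as a homomorphic image of $\sF$, and conversely any induced homomorphic image of $\sF$ in $\sC$ produces a homomorphism $\sF\to \sC$ by composition with inclusion. Now let $\sC$ be a finite substructure of $\sA^k$, and suppose for contradiction that $h:\sF\to\sC$ is a homomorphism for some $\sF\in\mathcal F$. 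Composing with the first coordinate projection yields a homomorphism $\pi_1\circ h:\sF\to\sA$, whose image $D:=\pi_1(h(F))$ induces a finite substructure of $\sA$ admitting a homomorphism from $\sF$. This substructure must itself belong to $\mathrm{age}(\sA)$ as it is a finite substructure of $\sA$, contradicting the homomorphic boundedness of $\sA$ by $\mathcal F$.

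I do not anticipate any serious obstacle. The proof rests only on the general observation that the projections $\sA^k\to\sA$ are homomorphisms, so that any ``bad'' finite substructure of $\sA^k$ projects to an equally bad finite substructure of $\sA$, together with the standard universality of countable homogeneous structures. The one place where mild care is required is the translation between the two equivalent formulations of homomorphic boundedness (forbidding homomorphic images as induced substructures versus forbidding bare homomorphisms from $\mathcal F$), which is however immediate.
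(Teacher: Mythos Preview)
Your proof is correct and follows essentially the same approach as the paper: show that no $\sF\in\mathcal F$ maps homomorphically into $\sA^k$ by composing with a coordinate projection, conclude $\mathrm{age}(\sA^k)\subseteq\mathrm{age}(\sA)$, and invoke homogeneity (universality) of $\sA$. The paper merely compresses your back-and-forth construction into the phrase ``a standard argument''.
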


\begin{proof}
  Let $\sA$ be homomorphically bounded by $\mathcal F$. We first claim that no structure from $\mathcal F$ homomorphically maps into $\sA^k$. Suppose for a contradiction that there exists $\mathbb{X}\in \mathcal F$ and a homomorphism $h: \mathbb{X} \to \sA^k$. Composing $h$
  with the projection of $\sA^k$ to the first coordinate, we obtain a
  homomorphism from $\mathbb X$ to $\sA$, which is a contradiction. Hence, the age of $\sA^k$ is contained in the age of $\sA$. By the homogeneity of $\sA$, a standard argument shows that $\sA^k$ embeds into $\sA$.
\end{proof}

\ignore{
Next, we are going to use Theorem \ref{thm:embtoPham} to prove that each 
$\sS_{\delta, \alpha}$ satisfies some \dwnu\ identities on every finite subset of its domain. Note that this result is new and that also no other
explicit description of the non-trivial local h1~identities of $\sS_{\delta,
\alpha}$ was given in~\cite{Bodirsky:2019aa}.

\begin{theorem} \label{thm:localTrungterms}
  Let $\delta$ and $\alpha$ be as in Definition~\ref{definition:S}. Let $F$ be a finite subset of\/ $\sS_{\delta,
  \alpha}$. Then there exists $k \geq 1$ such that 
  $\sS_{\delta, \alpha}$ satisfies the $(n, k)$ \dwnu\
  identities on $F$ for every $n > k$.
\end{theorem}

\begin{proof}
  We are going to prove the statement for $\sH_{\delta, \alpha}$; it then follows for $\sS_{\delta, \alpha}$ since it is a reduct. As a side note, since all relations of $\sH_{\delta, \alpha}$ are pp-definable from $\sS_{\delta, \alpha}$, the polymorphism clones of the two structures are actually identical.
  
  It suffices to show that $\sH_{\delta, \alpha}$  satisfies both conditions
  (i) and (ii) of Theorem~\ref{thm:embtoPham} for sufficiently large $k\geq 1$. The
  structure $\sH_{\delta, \alpha}$ is homomorphically bounded by Proposition~\ref{proposition:SandH}, so by Lemma
  \ref{lemma:powerembedding} there exists an embedding of $\sH_{\delta, \alpha}^2$
  into $\sH_{\delta, \alpha}$; in other words (ii) holds. In order to prove
  (i), we need to find a $k\geq 1$ that is an upper bound on the arity of all
  relations of $\sH_{\delta, \alpha}$ that contain tuples who entirely lie in $F$. 
	
  Let $R$ be a relation symbol in the signature of $\sH_{\delta, \alpha}$ such that
  $R^{\sH_{\delta, \alpha}}$ holds for some tuple within $F$. Since we constructed $\sH_{\delta, \alpha}$ as the superposition of
  the family $(\sH(\delta(n), \alpha(n)))_{n\geq 1}$, the symbol $R$ lies in the signature of  $\sH(\delta(n), \alpha(n))$ for some $n\geq 1$. By
  Lemma~\ref{lemma:CSSconstruction}~(4), at least $\alpha(n)$ many of the values of any tuple in   $R^{\sH_{\delta, \alpha}}$ are distinct. Therefore, $\alpha(n)$ must be smaller
  than $|F|$.  Since $\alpha$ is a strictly increasing function and each $\sH(\delta(n),
  \alpha(n))$ has a finite language, it follows that only finitely many relations
  of $\sH_{\delta, \alpha}$ have tuples that lie entirely in $F$. Let $k\geq 1$ be a
  strict upper bound on the arity of those relations. For this choice of $k$ we have that~(ii) of of Theorem~\ref{thm:embtoPham} holds, and thus $\sH_{\delta, \alpha}$ satisfies the $(n,k)$ \dwnu\  
  identities on $F$ for all $n > k$.
\end{proof}
}

\subsection{The finite language counterexample} \label{section:liftingtofin}

We are now ready to prove that the
  Hrushovski-encoding $\hrushovski{{\sS_{\delta, \alpha}}}$ of $\sS_{\delta,
  \alpha}$ satisfies \dwnu\ identities locally, and
  therefore has no uniformly continuous minion homomorphism to $\Projs$.
  Note that $\hrushovski{{\sS_{\delta, \alpha}}}$ is well-defined since
  $\sS_{\delta, \alpha}$ has at most one relation in every arity and no
  algebraicity by Proposition~\ref{proposition:SandH}.



\begin{theorem}\label{thm:localTrungterms2}
  Let $\delta$ and $\alpha$ be as in Definition~\ref{definition:S}, and let $F$ be a finite subset of the domain of 
  $\hrushovski{\sS_{\delta, \alpha}}$. Then there exists $k > 1$ such that
  $\Pol(\hrushovski{\sS_{\delta, \alpha}})$ satisfies the $(n, k)$ \dwnu\ 
  identities on $F$ for all $n > k$. 
\end{theorem}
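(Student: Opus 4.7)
The plan is to apply Theorem~\ref{thm:embtoPham} not directly to $\hrushovski{\sS_{\delta,\alpha}}$ (which is not a priori homogeneous), but to the \emph{encoding blow up} $\sU := \coding{\sS_{\delta,\alpha}}{\sH_{\delta,\alpha}}$, where $\sH_{\delta,\alpha}$ is the homogeneous first-order expansion of $\sS_{\delta,\alpha}$ from Proposition~\ref{proposition:SandH}. This structure is homogeneous by construction (it is a Fra\"{i}ss\'{e} limit), and its $\theta$-reduct is precisely $\hrushovski{\sS_{\delta,\alpha}}$. Consequently $\Pol(\sU) \subseteq \Pol(\hrushovski{\sS_{\delta,\alpha}})$, so any set of identities that is witnessed on a finite set by polymorphisms of $\sU$ is witnessed on the same set by polymorphisms of $\hrushovski{\sS_{\delta,\alpha}}$. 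It therefore suffices to produce a $k > 1$ such that $\sU$ satisfies both hypotheses of Theorem~\ref{thm:embtoPham} with respect to $F$.

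Hypothesis (ii) -- the existence of an embedding $\sU^2 \hookrightarrow \sU$ -- follows immediately from known facts: by Proposition~\ref{proposition:SandH}(3), $\sH_{\delta,\alpha}$ is homomorphically bounded, and it is homogeneous; hence Lemma~\ref{lemma:powerembedding} yields an embedding $\sH_{\delta,\alpha}^2 \hookrightarrow \sH_{\delta,\alpha}$, and Lemma~\ref{lemma:encodingextensions}(4) then transfers this to $\sU^2 \hookrightarrow \sU$.

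The real work is in verifying hypothesis~(i), namely choosing $k$ so that every relation of $\sU$ (in the signature $\sigma^+ = \sigma \cup \theta$, where $\sigma$ is the infinite signature of $\sH_{\delta,\alpha}$) of arity at least $k$ is empty on tuples from $F$. The finite part $\theta$ contributes only relations of arity at most~$4$, so these are harmless. For the $\sigma$-relations, the key observation is that in the separated structure $\sU$ every $\sigma$-relation holds only on tuples lying entirely in $P^\sU$, so we only need to control relations meeting $F \cap P^\sU$. By construction, $\sigma$ is a union of the finite signatures of the homogeneous expansions $\sH(\delta(n),\alpha(n))$ for $n \geq 1$, and Lemma~\ref{lemma:CSSconstruction} guarantees that any tuple in a relation from $\sH(\delta(n),\alpha(n))$ has at least $\alpha(n)$ distinct entries. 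Since $F \cap P^\sU$ is finite and $\alpha$ is strictly increasing, only those $n$ with $\alpha(n) \leq |F \cap P^\sU|$ can contribute a non-empty relation on tuples from $F$; these finitely many relations come from finitely many finite signatures and therefore have bounded arity. Choosing $k$ strictly larger than this bound and strictly larger than~$4$, hypothesis~(i) holds.

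Applying Theorem~\ref{thm:embtoPham} to $\sU$ with this value of $k$ yields polymorphisms of $\sU$ witnessing the $(n,k)$ \dwnu\ identities on $F$ for every $n > k$, and by the inclusion $\Pol(\sU) \subseteq \Pol(\hrushovski{\sS_{\delta,\alpha}})$ noted above the same functions witness these identities in $\Pol(\hrushovski{\sS_{\delta,\alpha}})$. The principal obstacle is really the bookkeeping step in the preceding paragraph: one must resist the temptation to work directly in $\hrushovski{\sS_{\delta,\alpha}}$ (whose homogeneity is unclear) and instead use the homogeneous $\sU$ together with the sparsity provided by Lemma~\ref{lemma:CSSconstruction} to tame the infinite signature $\sigma$.
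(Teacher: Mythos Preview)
Your proposal is correct and follows essentially the same approach as the paper's proof: both pass to the homogeneous encoding blow up $\coding{\sS_{\delta,\alpha}}{\sH_{\delta,\alpha}}$, verify condition~(ii) of Theorem~\ref{thm:embtoPham} via Proposition~\ref{proposition:SandH}(3), Lemma~\ref{lemma:powerembedding}, and Lemma~\ref{lemma:encodingextensions}(4), and verify condition~(i) by bounding the $\theta$-arities by~$4$ and using Lemma~\ref{lemma:CSSconstruction} together with the strict monotonicity of~$\alpha$ to show that only finitely many $\sigma$-relations can hold on tuples from~$F$. Your write-up is in fact slightly more explicit than the paper's in recording that $k$ must exceed~$4$ as well as the maximal relevant $\sigma$-arity.
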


\begin{proof}
  For the sake of notational lightness, denote by $\sB$ the homogeneous first-order
  expansion $\sH_{\delta, \alpha}$ of $\sS_{\delta,\alpha}$. Let $\rho$ and
  $\sigma$ be the signatures of $\sS_{\delta, \alpha}$ and of $\sB$, respectively. The Hrushovski-encoding
  $\hrushovski{\sS_{\delta, \alpha}}$ is then a reduct of the blowup $\coding{\sS_{\delta, \alpha}}{\sB}$, and hence $\Pol(\coding{\sS_{\delta,
  \alpha}}{\sB}) \subseteq \Pol(\hrushovski{{\sS_{\delta, \alpha}}})$. We claim
  that there exists some $k> 1$ for which $\coding{\sS_{\delta, \alpha}}{\sB}$
  satisfies the $(n,k)$ \dwnu\ identities on $F$, in
  which case $\hrushovski{{\sS_{\delta, \alpha}}}$ satisfies these identities on
  $F$ as well.

  In order to prove the claim we verify that conditions (i) and (ii) of
  Theorem~\ref{thm:embtoPham} hold for $\coding{\sS_{\delta, \alpha}}{\sB}$, F, and
  a suitable $k > 1$. By Proposition~\ref{proposition:SandH}, the structure 
  $\sB = \sH_{\delta,\alpha}$ is homomorphically bounded, so
  Lemma~\ref{lemma:powerembedding} implies that $\sB^2$ embeds
  into $\sB$. By Lemma~\ref{lemma:encodingextensions}~(4), there
  exists an embedding of $(\coding{\sS_{\delta, \alpha}}{\sB})^2$ into
  $\coding{\sS_{\delta, \alpha}}{\sB}$, and thus condition (ii) holds.
	
  It remains to check~(i) which states that there exists an upper bound on
  the arity of tuples in $F$ that satisfy some relation from
  $\coding{\sS_{\delta, \alpha}}{\sB}$.   Suppose that
  $R^{\coding{\sS_{\delta, \alpha}}{\sB}}$ contains a tuple entirely within $F$ for some $R\in\sigma^+$, the language of $\coding{\sS_{\delta, \alpha}}{\sB}$. Since $\sigma^+=\sigma\cup\theta$, and all relations in $\theta$ have arity at most $4$, we may assume that $R \in \sigma$. Then any tuple in $R^{\coding{\sS_{\delta, \alpha}}{\sB}}$ must lie entirely within $P^{\coding{\sS_{\delta, \alpha}}{\sB}}$, and so the tuple is an element of $R^{\sB}$, by Lemma~\ref{lemma:A+hasA}. 
Since we constructed $\sB=\sH_{\delta, \alpha}$ as the superposition of
  the family $(\sH(\delta(n), \alpha(n)))_{n\geq 1}$, the symbol $R$ lies in the signature of  $\sH(\delta(n), \alpha(n))$ for some $n\geq 1$. By
  Lemma~\ref{lemma:CSSconstruction}, at least $\alpha(n)$ many of the values of any tuple in   $R^{\sB}$ are distinct. Therefore, $\alpha(n)$ must be smaller
  than $|F|$. Since $\alpha$ is a strictly increasing function and each $\sH(\delta(n),
  \alpha(n))$ has a finite language, it follows that only finitely many relations
  of $\sB= \sH_{\delta, \alpha}$ have tuples that lie entirely in $F$. Let $k> 1$ be a
  strict upper bound on the arity of those relations. For this choice of $k$ we have that~(ii) of of Theorem~\ref{thm:embtoPham} holds, and thus $\coding{\sS_{\delta, \alpha}}{\sB}$ satisfies the $(n,k)$ \dwnu\  
  identities on $F$ for all $n > k$.
  \end{proof}

It follows that the original structures $\sS_{\delta, \alpha}$ satisfy \dwnu\ identities locally as well, since by Proposition~\ref{prop:ucclonehomo}, there is a uniformly continuous minion homomorphism from $\Pol(\hrushovski{\sS_{\delta, \alpha}})$ to $\Pol(\sS_{\delta, \alpha})$.
	This result is new and no other explicit description of non-trivial local h1~identities of $\sS_{\delta, \alpha}$ was given in~\cite{Bodirsky:2019aa}.

We are now ready to prove Theorem~\ref{thm:main:localnoglobalh1}. 

\begin{proof}[Proof of Theorem~\ref{thm:main:localnoglobalh1}]
  It was shown in Lemma \cite[Lemma 6.6]{Bodirsky:2019aa} that there are
  choices of the functions $\alpha$ and $\delta$ (as in Definition~\ref{definition:S}) such that $\sS_{\delta,
  \alpha}$ is not only $\omega$-categorical, but it also has slow orbit growth; this is the case if $\alpha$ grows sufficiently fast. 
  We will show that any such $\hrushovski{{\sS_{\delta, \alpha}}}$ satisfies the properties of the required
  $\sS$. Note that $\hrushovski{{\sS_{\delta, \alpha}}}$ has a finite relational
  signature. 
  
By Theorem~\ref{thm:localTrungterms2}, for every finite
  subset $F$ of $\hrushovski{{\sS_{\delta, \alpha}}}$ the clone
  $\Pol(\hrushovski{{\sS_{\delta, \alpha}}})$ satisfies some \dwnu\ 
  identities on $F$. By Lemma~\ref{lemma:phamnontrivial}, the
  identities are non-trivial, and hence there is no uniformly continuous minion
  homomorphism from $\Pol(\hrushovski{{\sS_{\delta, \alpha}}})$ to $\Projs$.

  Finally, by Proposition~\ref{prop:ucclonehomo} we have that 
  $\Pol(\hrushovski{{\sS_{\delta, \alpha}}})$ has a clone homomorphism to
  $\Pol({\sS_{\delta, \alpha}})$.  There exists a minion homomorphism from
  $\Pol(\sS_{\delta, \alpha})$ to $\Projs$ by
  Proposition~\ref{proposition:SandH}~(4), so the composition of the two homomorphisms gives us a
  minion homomorphism from $\Pol(\hrushovski{{\sS_{\delta, \alpha}}})$ to
  $\Projs$, which completes the proof. 
\end{proof}

\section{A Hierarchy of Hard Constraint Satisfaction Problems}
\label{section:complexity}
Next, we investigate the complexity of $\csp$s of structures encoded by the Hrushovski-encoding. We will mostly encode \emph{trivial} structures, that is, structures whose relations are all empty (but whose signature might be complex). In Section~\ref{S:CoderLangage} we show that for every language $L$ we can construct a trivial structure $\sT$ such that $L$ reduces to $\csp(\hrushovski{\sT})$ in logarithmic space, and such that there is a so called \emph{$\coNP$-many-one reduction} from $\csp(\hrushovski{\sT})$ to $L$. This implies the completeness result in Theorem~\ref{thm:complexity-main}.  In Sections~\ref{S:coNP-complete} and~\ref{S:Intermediaire} we
perform a more detailed analysis for the case where $L \in \PolComplexity$ and show in particular that we can obtain \coNP-intermediate CSPs (assuming that $\PolComplexity\ \neq \coNP$). In Section \ref{section:pseudo} we use encodings of trivial structures to prove Theorem \ref{thm:pseudo}.

\subsection{Encoding arbitrary languages}
\label{S:CoderLangage}
We begin by giving a formal definition of trivial structures and edge
structures.

\begin{definition}\label{D:ERS}
  For an alphabet $\Sigma$ and a language $W\subseteq\Sigma^{\geq 2}$, let $\rho_W$ be
  the signature consisting of $|\ww|$-ary relation symbols $R_\ww$ for every
  word $\ww \in W$. The \emph{trivial structure} $\sT_W$ is the countable
  $\rho_W$-structure with all relations empty.

  For every word $\ww\in W$, the \emph{$\ww$-edge structure $\sF_w$} is the
  $\rho_W$-structure on the set $F_\ww=\set{1,\dots,|w|}$ whose only non-empty relation is
  $R_\ww^{\sF_\ww}=\set{(1,\dots,|w|)}$.
\end{definition}

The trivial structure $\sT_W$ is homomorphically bounded by the set of
all edge-structures $\sF_\ww$ with $\ww\in W$. Moreover, $\sT_W$ has  no algebraicity. In the following lemma we show that trivial structures and their encodings have the
algebraic properties required in Theorem~\ref{thm:complexity-main}. 

\begin{lemma} \label{L:trivialprop}
  Let $\sT_W$ be the trivial structure for some $W\subseteq\Sigma^{\geq 2}$. Then both
  $\sT_W$ and $\hrushovski{\sT_W}$ are $\omega$-categorical, have slow orbit growth, and satisfy non-trivial h1~identities. Furthermore $\hrushovski{\sT_W}$ is homogeneous in a finite language.
\end{lemma}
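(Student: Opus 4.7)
My plan is to verify the four families of claims in turn, relying on the machinery developed in Section~\ref{section:finitelang} together with Theorem~\ref{thm:embtoPham}. For $\sT_W$ itself, since all relations of $\sT_W$ are empty, $\Aut(\sT_W)$ is the full symmetric group on the countable domain and the orbit of an $n$-tuple is determined solely by its equality pattern. Thus the number of orbits equals the Bell number $B_n$, which is finite and satisfies $B_n \leq n^n$, yielding both $\omega$-categoricity and slow orbit growth. Moreover $\Pol(\sT_W)$ is the clone of all finitary operations on a countable set, so I can realise the $(n,k)$ \dwnu\ identities for any $n > k > 1$ by an unconstrained free construction -- pick injective binary $g_i$ with disjoint ranges, define $f_\psi$ on the ``special'' $k$-tuples to match, and extend injectively elsewhere; Lemma~\ref{lemma:phamnontrivial} then provides non-triviality.

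For $\hrushovski{\sT_W}$, I first observe that $\sT_W$ is itself homogeneous and has no algebraicity, since its age is the class of all finite empty-relation $\rho_W$-structures and strong amalgamation is automatic when there are no relations to respect. Hence the encoding is defined, and Proposition~\ref{prop:omegacat} immediately transfers both $\omega$-categoricity and slow orbit growth from $\sT_W$ to $\hrushovski{\sT_W}$. To obtain non-trivial h1 identities on every finite subset, I would apply Theorem~\ref{thm:embtoPham} with $k=5$: condition~(i) is automatic, because every symbol of $\theta$ has arity at most $4$ by Definition~\ref{definition:theta}; for condition~(ii), $\sT_W^2$ embeds into $\sT_W$ via any bijection of countable sets (there are no relations to preserve), and Lemma~\ref{lemma:encodingextensions}(4) then promotes this to an embedding of $\hrushovski{\sT_W}^2$ into $\hrushovski{\sT_W}$. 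Thus for every finite subset of $\hrushovski{\sT_W}$ and every $n>5$ the $(n,5)$ \dwnu\ identities are satisfied, and they are non-trivial by Lemma~\ref{lemma:phamnontrivial}.

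Finally, for homogeneity of $\hrushovski{\sT_W}$ in a finite language, the signature $\theta$ is finite by Definition~\ref{definition:theta}, so only homogeneity itself requires an argument. By Lemma~\ref{lemma:SAPencoding} the blow-up $\coding{\sT_W}{\sT_W}$ is the Fra\"{i}ss\'{e} limit of $K^+$ and is therefore homogeneous in $\sigma^+=\rho_W\cup\theta$. The one step that deserves care, and which I anticipate as the only modest obstacle, is the observation that all $\rho_W$-relations in $\coding{\sT_W}{\sT_W}$ are empty: Lemma~\ref{lemma:A+hasA} identifies the $\rho_W$-reduct on $P$ with $\sT_W$, whose relations are empty by assumption, while Definition~\ref{definition:newage}(2) forces $\rho_W$-relations to be supported entirely on $P$. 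Consequently any $\theta$-isomorphism between finite substructures of $\hrushovski{\sT_W}$ is automatically a $\sigma^+$-isomorphism of the corresponding substructures of $\coding{\sT_W}{\sT_W}$, and hence lifts to an automorphism; this yields homogeneity of $\hrushovski{\sT_W}$ in the finite signature $\theta$.
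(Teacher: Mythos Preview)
Your argument is close to the paper's, with one genuine gap. For $\hrushovski{\sT_W}$ you only establish that the $(n,5)$ \dwnu\ identities hold \emph{on every finite subset}; the lemma asserts global satisfaction of non-trivial h1~identities. The paper closes this gap by a standard compactness argument: the $(6,5)$ \dwnu\ system is a fixed finite set of h1~identities, and since the structure is $\omega$-categorical, local satisfaction on every finite set yields polymorphisms satisfying it globally (take witnesses on an increasing exhaustion by finite sets and extract a limit using compactness of the polymorphism clone modulo automorphisms, as in~\cite{Topo-Birk}). Without this step your conclusion is not reached.

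Two smaller remarks. First, you apply Theorem~\ref{thm:embtoPham} directly to $\hrushovski{\sT_W}$, which presupposes its homogeneity, proved only later in your write-up; the paper sidesteps this forward reference by applying the theorem to the blow-up $\coding{\sT_W}{\sT_W}$, homogeneous by construction as a Fra\"{i}ss\'{e} limit, and then passing to the reduct $\hrushovski{\sT_W}$. Second, Lemma~\ref{lemma:encodingextensions}(4) literally yields an embedding of $(\coding{\sT_W}{\sT_W})^2$ into $\coding{\sT_W}{\sT_W}$, not of $\hrushovski{\sT_W}^2$ into $\hrushovski{\sT_W}$; the passage to the $\theta$-reduct is immediate (an embedding for $\sigma^+$ is in particular one for $\theta$) but should be stated. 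Your homogeneity argument for $\hrushovski{\sT_W}$ and your direct construction of \dwnu\ operations in $\Pol(\sT_W)$ are correct; the paper instead deduces the identities for $\sT_W$ from those of $\hrushovski{\sT_W}$ via the clone homomorphism of Proposition~\ref{prop:ucclonehomo}, but your route is equally valid.
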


\begin{proof}
  It follows immediately from the definition that $\sT_W$ is both $\omega$-categorical and has slow
  orbit growth. By Proposition~\ref{prop:omegacat} its encoding
  $\hrushovski{\sT_W}$ is also $\omega$-categorical of slow orbit growth. It further easy to see that $\hrushovski{\sT_W}$ is homogeneous.

  In order to show that the structures satisfy some non-trivial h1~identities,
  note that $\sT_W^2$ embeds into $\sT_W$. Let $\sB = \coding{\sT_W}{\sT_W}$ be
  the blow-up of $\sT_W$. By
  Lemma~\ref{lemma:encodingextensions}~(4), $\sB^2$ embeds into $\sB$.
  Moreover, the non-empty relations of $\sB$ are of arity at most $4$. Hence,
  by Theorem~\ref{thm:embtoPham}, $\sB$ satisfies $(6,5)$ dissected weak
  near-unanimity identities locally. By a standard compactness argument we
  obtain that $\sB$ satisfies $(6,5)$ dissected weak near-unanimity identities
  globally. Since $\hrushovski{\sT_W}$ is a reduct of $\sB$,
  $\hrushovski{\sT_W}$ also satisfies the identities. It follows that $\sT_W$ satisfies the same non-trivial h1 identities. 
\end{proof}

Since $\sT_W$ is homomorphically bounded by the edge structures $\setm{\sF_\ww}{\ww\in W}$,
Proposition~\ref{prop:emb-forb} can be used to give an explicit description of
$\csp(\hrushovski{\sT_W})$.

\begin{lemma}\label{L:DescriptionCSP}
  Let $W\subseteq\Sigma^{\geq 2}$, and let $\sX$ be a $\theta$-structure. Then the
  following are equivalent.
  \begin{enumerate}[label = $(\arabic*)$]
    \item There exists a homomorphism from $\sX$ to $\Hrushovski{\sT_W}$;
		
    \item $\sX$ is separated and there is no word $\ww\in W$ 
            such that $\Completion{\sF_\ww}$ homomorphically
            maps to $\sX$.
            
    \item $\sX$ is separated and there is no word $\ww\in W$ of length smaller
            than $\card{X}$ such that $\Completion{\sF_\ww}$ homomorphically
            maps to $\sX$.
\end{enumerate}
\end{lemma}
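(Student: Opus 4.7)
The plan is to derive $(1)\Leftrightarrow(2)$ directly from Proposition~\ref{prop:emb-forb}, and then to establish $(2)\Leftrightarrow(3)$ by a counting argument based on the separation of $\sX$.

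First I would verify that $\sT_W$ satisfies the hypotheses of Proposition~\ref{prop:emb-forb}: since all of its relations are empty, $\sT_W$ is itself homogeneous (hence homogenizable) and has no algebraicity, and it is homomorphically bounded by $\mathcal{G}=\setm{\sF_w}{w\in W}$, because the age of $\sT_W$ consists of precisely the finite $\rho_W$-structures whose relations are all empty, i.e.\ of those structures into which no $\sF_w$ maps homomorphically. Applying Proposition~\ref{prop:emb-forb} with $\sA=\sT_W$ and this $\mathcal{G}$ then yields $(1)\Leftrightarrow(2)$.

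The implication $(2)\Rightarrow(3)$ is trivial, so the remaining task is $(3)\Rightarrow(2)$, which I reduce to the following claim---the one place where any substantive work happens: \emph{if $\sX$ is separated and $w\in W$ satisfies $|w|\geq\card{X}$, then there is no homomorphism $\Completion{\sF_w}\to\sX$}. Suppose such a homomorphism $h$ exists, and denote the code elements of $\Completion{\sF_w}$ by $c_i=(w,(1,\dots,|w|),i)$. By the definition of $\Completion{\cdot}$, for every pair $i\neq j$ the tuple $(i,j,c_i,c_j)$ lies in $S^{\Completion{\sF_w}}$, hence $(h(i),h(j),h(c_i),h(c_j))\in S^{\sX}$; by clause~(iii) of the definition of separated, the $|w|$ elements $h(c_1),\dots,h(c_{|w|})$ are then pairwise distinct and all lie in $X\setminus P^{\sX}$. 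On the other hand, the data elements $1,\dots,|w|$ of $\Completion{\sF_w}$ lie in $P^{\Completion{\sF_w}}$, so $h$ sends them into $P^{\sX}$; in particular $P^{\sX}$ is nonempty. Combining these two bounds yields
\[
\card{X}=\card{P^{\sX}}+\card{X\setminus P^{\sX}}\geq 1+|w|\geq 1+\card{X},
\]
a contradiction.

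This proves the claim; together with $(3)$ it shows that no $w\in W$, of any length, can produce a homomorphism $\Completion{\sF_w}\to\sX$, which is exactly $(2)$. The only real obstacle I foresee is getting the separation bookkeeping right---in particular, simultaneously using clause~(iii) to force distinctness of the code images and the preservation of the unary predicate $P$ to force the existence of at least one $P$-element in $\sX$. Everything else follows immediately from Proposition~\ref{prop:emb-forb} and the explicit description of $\Completion{\sF_w}$ from Definition~\ref{definition:ccode}.
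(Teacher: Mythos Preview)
Your proposal is correct and follows the same approach as the paper: both derive $(1)\Leftrightarrow(2)$ from Proposition~\ref{prop:emb-forb} and handle $(2)\Leftrightarrow(3)$ by observing that a homomorphism $\Completion{\sF_w}\to\sX$ forces a valid $w$-code in $\sX$ whose code elements are distinct. Your counting is in fact slightly more careful than the paper's terse ``$|w|\le|X|$'': by also using that the $a_i$'s land in $P^{\sX}$ (so $P^{\sX}\neq\emptyset$) you obtain the strict inequality $|w|<|X|$, which is precisely what the formulation of~(3) requires.
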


\begin{proof}
  The equivalence of (1) and (2) follows from Proposition~\ref{prop:emb-forb}.
  To demonstrate the equivalence of (2) and (3), observe that if there is a
  homomorphism from $\Completion{\sF_\ww}$ to $\sX$, then $\sX$ contains a valid
  $\ww$-code, and so $\card{\ww} \leq \card{X}$.
\end{proof}

To prove our complexity results,  it will be convenient to use the notion of a \emph{$\coNP$-many-one reduction}. Such reductions were first defined by Beigel, Chang, and Ogiwara in \cite{BCO}; we are going to use the following equivalent definition.

\begin{definition}
  Let $K$ and $L$ be two languages in an alphabet $\Sigma$. Then a \emph{$\coNP$-many-one reduction} from
  $K$ to $L$ is a non-deterministic Turing Machine $M$ such that $M$ runs in polynomial
  time, and for all words $\ww$ over $\Sigma$ we have $\ww\in K$ if and only if each
  path of $M$, on input $\ww$, computes a word in $L$.
\end{definition}

Note that having a $\coNP$-many-one reduction from $K$ to $L$ is a stronger condition than $K$ being in $\coNP^L$ (i.e. having a \emph{Turing} $\coNP$-reduction from $K$ to $L$), since then there are no restrictions on when and how to use the oracle $L$. If for instance $K$ has a $\coNP$-many-one reduction to a problem that is in $\coNP$, $K$ is also in $\coNP$ -- but this is not necessarily true for Turing $\coNP$-reductions.

The following lemma generalizes this fact. It follows from the easily verified fact that the composition of a Turing $\coNP$-reduction and a $\coNP$-many-one reduction is a Turing $\coNP$-reduction.


\begin{lemma}\label{L:CompositionReduction}
  For every complexity class $\mathcal C$ the class $\coNP^\mathcal C$ is
  closed under $\coNP$-many-one reductions.
\end{lemma}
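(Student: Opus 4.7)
The plan is to show directly that composing a $\coNP$-many-one reduction with a $\coNP^\mathcal C$ oracle machine yields a $\coNP^\mathcal C$ oracle machine, exactly as the preceding sentence of the text hints. So suppose $K$ admits a $\coNP$-many-one reduction to some $L\in\coNP^\mathcal C$. Unfolding the two definitions, I would fix a nondeterministic polynomial-time Turing machine $M$ witnessing the reduction (so that $w\in K$ iff every computation path of $M$ on $w$ outputs a word lying in $L$), together with some $L'\in\mathcal C$ and a nondeterministic polynomial-time oracle machine $N$ witnessing $L\in\coNP^{L'}$ (so that for each $v$, $v\in L$ iff every computation path of $N^{L'}(v)$ is accepting).

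The next step is to build a single nondeterministic polynomial-time oracle machine $P^{L'}$ that decides $K$ in $\coNP^{L'}$. On input $w$, the machine $P^{L'}$ first nondeterministically simulates one computation path $p$ of $M$ on $w$, producing the output word $v_p$ of $M$ along this path, and then nondeterministically simulates one computation path $q$ of $N^{L'}$ on input $v_p$, accepting if and only if $q$ is accepting. Since $M$ runs in polynomial time the word $v_p$ has length polynomial in $|w|$, so $N^{L'}(v_p)$ runs in polynomial time in $|w|$, and $P^{L'}$ is indeed polynomial-time.

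By construction, the computation paths of $P^{L'}$ on input $w$ are in bijection with pairs $(p,q)$ where $p$ is a path of $M(w)$ and $q$ a path of $N^{L'}(v_p)$, the acceptance of $(p,q)$ depending only on $q$. Therefore every path of $P^{L'}(w)$ accepts if and only if, for every path $p$ of $M(w)$, every path $q$ of $N^{L'}(v_p)$ accepts; by the definitions of $M$ and $N$ this is equivalent to $v_p\in L$ for every path $p$ of $M(w)$, i.e.\ to $w\in K$. Hence $K\in\coNP^{L'}\subseteq\coNP^\mathcal C$, and the lemma follows.

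I do not anticipate any genuine obstacle here: the argument is a straightforward concatenation of the two nondeterministic phases, and the fact that a $\coNP$-many-one reduction produces an output word on \emph{every} path (not just accepting paths) is precisely what makes the ``$\forall p\,\forall q$'' quantifier structure collapse into a single $\coNP$ phase on top of the oracle $L'\in\mathcal C$. The only minor point to keep in mind is that the oracle $L'$ used in the conclusion is simply the one witnessing $L\in\coNP^\mathcal C$, so the class $\mathcal C$ is not enlarged by the reduction.
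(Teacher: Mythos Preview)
Your proof is correct and follows essentially the same approach as the paper, which simply remarks that the composition of a Turing $\coNP$-reduction and a $\coNP$-many-one reduction is again a Turing $\coNP$-reduction. You have spelled out the details of this composition explicitly (building $P^{L'}$ and verifying the $\forall p\,\forall q$ quantifier collapse), whereas the paper leaves this as an ``easily verified fact''.
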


We are now ready to encode arbitrary languages as CSPs of Hrushovski-encoded structures.

\begin{theorem}\label{T:CSPcodeLangage}
  Let $L\subseteq \Sigma^{\geq 2}$ be a language such that both $L$ and its complement $W = \Sigma^{\geq 2} \setminus L$ are non-empty. Then $L$ has a
  log-space many-one reduction to $\csp(\hrushovski{\sT_W})$, and
  $\csp(\hrushovski{\sT_W})$ has a $\coNP$-many-one reduction to $L$.
\end{theorem}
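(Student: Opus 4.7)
The plan is to use the map $w \mapsto \Completion{\sF_w}$ for the forward log-space reduction, and to use a nondeterministic polynomial-time machine that guesses a valid code inside the input structure for the backward $\coNP$-many-one reduction. The key structural fact underlying both directions is that a homomorphism $\Completion{\sF_w} \to \sX$ exists if and only if $\sX$ contains a valid $w$-code, and that the only valid code present in $\Completion{\sF_w}$ is its canonical $w$-code.

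For the log-space reduction, I would define $f(w) := \Completion{\sF_w}$ for $w \in \Sigma^{\geq 2}$ and extend $f$ on shorter inputs by mapping them to a fixed structure outside $\csp(\hrushovski{\sT_W})$, e.g., to $\Completion{\sF_{w_0}}$ for some hard-coded $w_0 \in W$. Computing $\Completion{\sF_w}$ is log-space since its $2|w|$ domain elements and the $O(|w|)$ tuples in each of the finitely many relations of the signature $\theta$ can be enumerated with $O(\log|w|)$ workspace. Correctness follows from Lemma~\ref{L:DescriptionCSP}: if $w \in W$, the canonical $w$-code already present in $\Completion{\sF_w}$ blocks the homomorphism; while if $w \in L$, the uniqueness of valid codes in $\Completion{\sF_w}$ rules out the presence of any valid $w'$-code with $w' \in W$.

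For the $\coNP$-many-one reduction, I would construct a polynomial-time NDTM $M$ that, on input a $\theta$-structure $\sX$, first checks deterministically whether $\sX$ is separated, outputting a hard-coded word $w_0 \in W$ on every path if it is not. Otherwise, $M$ nondeterministically guesses an integer $k$ with $2 \leq k \leq |X|$, elements $a_1, \ldots, a_k, c_1, \ldots, c_k \in X$, and symbols $w_1, \ldots, w_k \in \Sigma$, and verifies in polynomial time that $(a_1, \ldots, a_k, c_1, \ldots, c_k)$ is a valid $w$-code in $\sX$ for $w := w_1 \cdots w_k$; if the verification succeeds $M$ outputs $w$, and otherwise $M$ outputs a hard-coded word in $L$. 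By Lemma~\ref{L:DescriptionCSP}, $\sX \in \csp(\hrushovski{\sT_W})$ if and only if $\sX$ is separated and every valid code in $\sX$ has its word in $L$; this is precisely the condition that every computation path of $M$ outputs a word in $L$.

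The main obstacle is the clean proof of the uniqueness of the valid code in $\Completion{\sF_w}$: separation forces the $c$-entries of any valid code to be pairwise distinct non-$P$ elements, the singletons $\iota^{\Completion{\sF_w}}$ and $\tau^{\Completion{\sF_w}}$ pin down the first and last $c$-entries, and the cyclic $H_s$-structure on the non-$P$ elements has length exactly $|w|$ with edge labels dictated by $w$. Together these force any valid $w'$-code in $\Completion{\sF_w}$ to satisfy $w' = w$, and this uniqueness yields both correctness arguments with essentially routine bookkeeping, while the log-space and polynomial-time complexity bounds on the two reductions are immediate from the explicit descriptions above.
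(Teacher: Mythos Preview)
Your proposal is correct and follows essentially the same approach as the paper: both directions use the map $w\mapsto\Completion{\sF_w}$ together with Lemma~\ref{L:DescriptionCSP}, and the backward reduction is a nondeterministic machine that guesses a potential obstruction (you guess a valid code directly, the paper guesses a word $w$ and a map $\Completion{\sF_w}\to\sX$, which amounts to the same thing). One minor slip: the relation $S^{\Completion{\sF_w}}$ has $O(|w|^2)$ tuples rather than $O(|w|)$, but this does not affect the log-space bound.
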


\begin{proof}
It is easy to see that the function $\ww \mapsto \Completion{\sF_\ww}$ is computable in logarithmic space with respect to
  $\card{\ww}$. Also note that there is a homomorphism $\Completion{\sF_\wu}
  \to \Completion{\sF_\ww}$ if and only if $\ww = \wu$. Moreover, it follows
  from Lemma~\ref{L:DescriptionCSP} applied to $\sX = \Completion{\sF_\ww}$ 
  that there is a homomorphism $\Completion{\sF_\ww} \to \hrushovski{\sT_W}$ if
  and only if $\ww \in L$. Thus $L$ has a log-space many-one reduction to
  $\csp(\hrushovski{\sT_W})$.
  
  For the other reduction, let $\sX$ be a finite $\theta$-structure, an instance of $\csp(\hrushovski{\sT_W})$. If there is no homomorphism $\sX \to  \hrushovski{\sT_W}$, by
Lemma~\ref{L:DescriptionCSP}, either $\sX$ is not separated (which can be
  checked in polynomial time), or there is a word $\ww \in W$ not longer than the size of the domain of $\sX$ and a homomorphism $f \colon \Completion{\sF_\ww} \to \sX$.
  The reduction does the following: if $\sX$ is not separated, we map it to a fixed element of $W$. 
  Otherwise, we guess a word $\ww$ not longer than the size of the domain of $\sX$ and a function $f\colon \Completion{\sF_\ww}\to\sX$.
  If this function is not a homomorphism, we map $\sX$ to a fixed word of $L$.
  If $f$ is a homomorphism, we map $\sX$ to $\ww$.
  Thus, if $\sX\in\csp(\hrushovski{\sT_W})$ then all runs of the reduction output a word of $L$.
  Moreover, if $\sX\notin\csp(\hrushovski{\sT_W})$, then at least one run outputs word in $W$.
\end{proof}

As a direct consequence of Theorem~\ref{T:CSPcodeLangage} we obtain the completeness result in 
Theorem \ref{thm:complexity-main}.

\begin{corollary} \label{cor:CSPcompletness}
  Let $\mathcal C$ be a complexity class such that there exist $\coNP^\mathcal
  C$-complete problems. Then there exists $W \subseteq \{0,1\}^{\geq 2}$ such that
  $\csp(\Hrushovski{\sT_{W}})$ is $\coNP^\mathcal C$-complete. In particular, we
  have complete problems of the form $\csp(\Hrushovski{\sT_{W}})$ for the
  following classes:
\begin{itemize}
\item $\Pi_n^{\PolComplexity}$ -- part of the polynomial hierarchy; 
\item \PSPACE; 
\item \EXPTIME; 
\item the fast-growing time complexity classes $\mathbf{F_\alpha}$ where
        $\alpha\geq 2$ is an ordinal (such as the classes \textsc{Tower},
        \textsc{Ackermann}, and \textsc{Hyperackermann},
        see~\cite{DBLP:journals/toct/Schmitz16}).
\end{itemize}
\end{corollary}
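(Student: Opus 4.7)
The plan is to reduce the corollary directly to Theorem~\ref{T:CSPcodeLangage} combined with Lemma~\ref{L:CompositionReduction}. Given a complexity class $\mathcal C$ for which $\coNP^\mathcal C$-complete problems exist (under log-space many-one reductions, say), I pick such a problem and present it as a language $L$ over the binary alphabet $\{0,1\}$ with $L\subseteq\{0,1\}^{\geq 2}$. Since any $\coNP^\mathcal C$-complete problem must be non-trivial, neither $L$ nor its complement $W:=\{0,1\}^{\geq 2}\setminus L$ is empty (if needed, one can adjust $L$ by a trivial padding to guarantee this). The candidate complete problem for $\coNP^\mathcal C$ will then be $\csp(\hrushovski{\sT_W})$.

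For hardness, Theorem~\ref{T:CSPcodeLangage} supplies a log-space many-one reduction from $L$ to $\csp(\hrushovski{\sT_W})$. Composing this with a log-space many-one reduction from an arbitrary problem in $\coNP^\mathcal C$ to $L$, which exists by the $\coNP^\mathcal C$-completeness of $L$, shows that $\csp(\hrushovski{\sT_W})$ is $\coNP^\mathcal C$-hard. For membership, Theorem~\ref{T:CSPcodeLangage} also provides a $\coNP$-many-one reduction from $\csp(\hrushovski{\sT_W})$ to $L$. Since $L\in\coNP^\mathcal C$, Lemma~\ref{L:CompositionReduction} then yields $\csp(\hrushovski{\sT_W})\in\coNP^\mathcal C$, concluding the completeness argument.

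For the explicit list of classes, I will invoke the standard identities
\[
\Pi_n^{\PolComplexity}=\coNP^{\Sigma_{n-1}^{\PolComplexity}}, \quad \PSPACE=\coNP^{\PSPACE}, \quad \EXPTIME=\coNP^{\EXPTIME},
\]
and the analogous identity $\mathbf{F_\alpha}=\coNP^{\mathbf{F_\alpha}}$ for every ordinal $\alpha\geq 2$, each of which possesses complete problems under log-space many-one reductions. Applying the main part of the corollary to $\mathcal C=\Sigma_{n-1}^{\PolComplexity}$, $\mathcal C=\PSPACE$, $\mathcal C=\EXPTIME$, and $\mathcal C=\mathbf{F_\alpha}$ respectively yields CSPs of the required form complete for each of these classes.

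The essentially only subtlety is to make sure the two reductions from Theorem~\ref{T:CSPcodeLangage} compose correctly with the hypothesised completeness of $L$, and that the alphabet restriction to $\{0,1\}$ does not cost anything: both are routine, so I do not anticipate a genuine obstacle beyond carefully stating the ``$\coNP^\mathcal C$-complete problem'' hypothesis in the right form.
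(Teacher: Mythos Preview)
Your proposal is correct and matches the paper's own proof essentially verbatim: pick a $\coNP^{\mathcal C}$-complete language $L$, set $W$ to be its complement, and combine the two reductions of Theorem~\ref{T:CSPcodeLangage} with Lemma~\ref{L:CompositionReduction} for membership and with the completeness of $L$ for hardness. The paper does not spell out the identities $\Pi_n^{\PolComplexity}=\coNP^{\Sigma_{n-1}^{\PolComplexity}}$, $\PSPACE=\coNP^{\PSPACE}$, etc., but your explicit listing of them is a harmless elaboration of the same point.
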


\begin{proof}
  Let $L \subseteq \{0,1\}^{\geq 2}$ be a $\coNP^\mathcal C$-complete language, and
  let $W$ be its complement. Then $L$ reduces to $\csp(\Hrushovski{\sT_{W}})$
  by Theorem~\ref{T:CSPcodeLangage}, and so $\csp(\Hrushovski{\sT_{W}})$ is
  $\coNP^\mathcal C$-hard. On the other hand, there is a $\coNP$-many-one reduction of $\csp(\Hrushovski{\sT_{W}})$ to $L$. Thus, by Lemma~\ref{L:CompositionReduction}
  $\csp(\Hrushovski{\sT_{W}})$ belongs to $\coNP^\mathcal C$.
\end{proof}

We remark that in Corollary~\ref{cor:CSPcompletness} we used encodings with respect to an alphabet
$\Sigma$ that is not unary (in order words, we used our refinement of the original encoding due to Hrushovski). This is indeed necessary, as for instance the existence of unary \PSPACE-hard language would imply \PolComplexity = \PSPACE.

If the language $L$ in Theorem~\ref{T:CSPcodeLangage} is undecidable, then
$\csp(\Hrushovski{\sT_W})$ is undecidable of the same Turing degree. Thus we
obtain the following additional corollary.

\begin{corollary} \label{cor:CSPundecidability}
  For every undecidable Turing degree $\tau$ there exists a set $W \subseteq
  \Sigma^{\geq 2}$ such that $\csp(\hrushovski{\sT_W})$ undecidable of degree $\tau$.
\end{corollary}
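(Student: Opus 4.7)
The plan is to derive this as an immediate consequence of Theorem~\ref{T:CSPcodeLangage}, exactly as in the proof of Corollary~\ref{cor:CSPcompletness}, but tracking Turing reducibility instead of complexity class membership. Fix any undecidable Turing degree $\tau$. First I would pick a language $L \subseteq \{0,1\}^{\geq 2}$ of Turing degree $\tau$: any undecidable set of natural numbers can be encoded as such a language (for instance by identifying $\mathbb{N}$ with binary strings of length at least $2$ via an arbitrary computable bijection, which does not change the Turing degree). Then I would set $W := \{0,1\}^{\geq 2}\setminus L$ and claim that $\csp(\hrushovski{\sT_W})$ has Turing degree $\tau$.

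To apply Theorem~\ref{T:CSPcodeLangage} I first need $L$ and $W$ to be nonempty. This is automatic: if either were finite, $L$ would be finite or cofinite in $\Sigma^{\geq 2}$, hence decidable, contradicting that it has undecidable degree $\tau$. Theorem~\ref{T:CSPcodeLangage} therefore provides a log-space many-one reduction from $L$ to $\csp(\hrushovski{\sT_W})$, as well as a $\coNP$-many-one reduction in the opposite direction.

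The remaining step is to argue that both reductions are Turing reductions, so that $L$ and $\csp(\hrushovski{\sT_W})$ have the same Turing degree. The log-space many-one reduction is in particular computable, so it yields $L \leq_T \csp(\hrushovski{\sT_W})$. For the converse, a $\coNP$-many-one reduction from $\csp(\hrushovski{\sT_W})$ to $L$ places $\csp(\hrushovski{\sT_W})$ in $\coNP^L$: on input $\sX$, enumerate all (exponentially many) computation paths of the nondeterministic polynomial-time machine $M$ and check, using $L$ as an oracle, that each output lies in $L$. Since $\coNP^L \subseteq \{K : K \leq_T L\}$, this gives $\csp(\hrushovski{\sT_W}) \leq_T L$. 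Combining the two inequalities yields $\deg(\csp(\hrushovski{\sT_W})) = \tau$, as desired.

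There is essentially no technical obstacle here; the only points requiring a line of justification are the nonemptiness of $L$ and $W$ (trivial, as explained above) and the fact that a $\coNP$-many-one reduction is a Turing reduction, which is standard. Hence this really is a corollary of Theorem~\ref{T:CSPcodeLangage} rather than an independent argument.
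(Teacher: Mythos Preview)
Your proposal is correct and follows exactly the approach the paper intends: the corollary is stated immediately after the sentence ``If the language $L$ in Theorem~\ref{T:CSPcodeLangage} is undecidable, then $\csp(\Hrushovski{\sT_W})$ is undecidable of the same Turing degree,'' and your argument simply spells out why the two reductions from Theorem~\ref{T:CSPcodeLangage} are both Turing reductions.
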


\subsection{\coNP-complete $\csp$s} \label{S:coNP-complete}
Observe that if a language $L$ is in $\PolComplexity$, then, by
Theorem~\ref{T:CSPcodeLangage}, $\csp(\hrushovski{\sT_L})$ is in $\coNP$. In
this section, we consider two special cases -- $L$ being finite and cofinite.
In the first case, $\csp(\hrushovski{\sT_L})$ is $\PolComplexity$ and, in the
second case, it is $\coNP$-complete. These results are used in the next
section to obtain a $\coNP$-intermediate  CSP.

\begin{lemma}\label{L:EW-Reduction}
  Let $V\subseteq W\subseteq\Sigma^{\geq 2}$. If $W\setminus V$ is finite, then there
  is a polynomial-time reduction from $\csp(\hrushovski{\sT_W})$ to
  $\csp(\hrushovski{\sT_V})$.
\end{lemma}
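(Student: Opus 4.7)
The plan is to describe an explicit polynomial-time many-one reduction based on the characterisation of $\csp(\hrushovski{\sT_W})$ given by Lemma~\ref{L:DescriptionCSP}. The key observation is that
\[
\sX\in\csp(\hrushovski{\sT_W})\iff \sX\in\csp(\hrushovski{\sT_V})\ \textrm{and}\ \text{no } \Completion{\sF_\ww},\ \ww\in W\setminus V, \text{ maps homomorphically into } \sX.
\]
Indeed, by Lemma~\ref{L:DescriptionCSP}(2), membership in either CSP is equivalent to separatedness of $\sX$ together with the non-existence of homomorphisms from the corresponding edge codes, and $W$ is the disjoint union of $V$ and $W\setminus V$.

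The assumption that $W\setminus V$ is finite is what makes the second conjunct checkable in polynomial time. Since $W\setminus V$ has constantly many elements, each of bounded length $c$, the structures $\Completion{\sF_\ww}$ for $\ww\in W\setminus V$ have bounded size $2c$. For each such $\ww$, the existence of a homomorphism $\Completion{\sF_\ww}\to\sX$ can be tested by enumerating all at most $\card{X}^{2c}$ maps from $\Completion{\sF_\ww}$ to $\sX$ and checking which ones preserve the (constantly many) relations of $\theta$; this runs in time polynomial in $\card{X}$.

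With this, the reduction is defined as follows. Fix once and for all a finite $\theta$-structure $\sX_0$ that is not in $\csp(\hrushovski{\sT_V})$; for instance, a one-element structure $\set{a}$ with $P(a)$ and $H_s(a,a)$, which fails to be separated. On input $\sX$, first verify in polynomial time that $\sX$ is separated, then test, for each of the finitely many $\ww\in W\setminus V$, whether $\Completion{\sF_\ww}$ homomorphically maps into $\sX$. If $\sX$ is not separated, or if some such homomorphism exists, output $\sX_0$; otherwise, output $\sX$ itself. Correctness follows immediately from the equivalence displayed above together with Lemma~\ref{L:DescriptionCSP}: in the first case $\sX\notin\csp(\hrushovski{\sT_W})$ and the output $\sX_0\notin\csp(\hrushovski{\sT_V})$, whereas in the second case $\sX\in\csp(\hrushovski{\sT_W})$ if and only if $\sX\in\csp(\hrushovski{\sT_V})$. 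There is no real obstacle here; the only thing to notice is that the finiteness of $W\setminus V$ is exactly what makes the required subroutine run in polynomial time, and so the argument would break down if $W\setminus V$ were allowed to be infinite (even with bounded length, the number of words $\ww$ of a given length would remain finite, but without a length bound across $W\setminus V$ the homomorphism tests would no longer be polynomial).
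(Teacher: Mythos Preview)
Your proof is correct and follows essentially the same route as the paper: both use Lemma~\ref{L:DescriptionCSP} to obtain the equivalence
\[
\sX\in\csp(\hrushovski{\sT_W})\iff \sX\in\csp(\hrushovski{\sT_V})\ \text{and no } \Completion{\sF_\ww},\ \ww\in W\setminus V,\ \text{maps into } \sX,
\]
and then observe that the finitely many homomorphism tests run in time $O(\card{X}^{2N})$ where $N$ bounds the word lengths in $W\setminus V$. The only difference is cosmetic: you spell out the many-one reduction explicitly (mapping to a fixed non-instance $\sX_0$ when the extra tests fail), whereas the paper simply asserts that a polynomial-time reduction exists once the extra conjunct is polynomial-time checkable. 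Your separatedness check is in fact redundant---if $\sX$ is not separated you could equally well output $\sX$ itself, since it lies in neither CSP---but this does no harm.
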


\begin{proof}
  Let $w_1,\dots,w_n$ be the elements of $W\setminus V$. Denote by $N$ the
  maximal length of a word in $\{w_1,\dots,w_n\}$.  It follows from Lemma~\ref{L:DescriptionCSP} that,
  for a $\theta$-structure $\sX$, there is a homomorphism $\sX\to
  \hrushovski{\sT_W}$ if and only if $\sX$ is separated and for all $\ww\in W$
  there is no homomorphism $\Completion{\sF_{\ww}} \to \sX$. There is a similar
  characterisation for the existence of a homomorphism $\sX\to
  \hrushovski{\sT_V}$. Therefore, there is a homomorphism $\sX\to
  \hrushovski{\sT_W}$ if and only if there is a homomorphism $\sX\to
  \hrushovski{\sT_V}$ and for all $1\leq i\le n$ there is no homomorphism
  $\Completion{\sF_{w_i}} \to \sX$.

  Finally, given $\sX$, computing whether there is a homomorphism
  $\Completion{\sF_{w_i}} \to \sX$ for some $1\leq i \leq n$, can be done in time
  $\bigO(\card{X}^{2N})$. Hence there is a polynomial-time reduction from
  $\csp(\hrushovski{\sT_W})$ to $\csp(\hrushovski{\sT_V})$.
\end{proof}

\begin{corollary}\label{C:EW-Facile}
  Let $W\subseteq\Sigma^{\geq 2}$ be finite. Then $\csp(\hrushovski{\sT_W})$ is
  solvable in polynomial time. 
\end{corollary}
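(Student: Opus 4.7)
The plan is to reduce to the trivial case $W=\emptyset$ via the preceding lemma and then observe that separation can be verified in polynomial time. Explicitly, since $W$ is finite, the set $V:=\emptyset$ satisfies $V\subseteq W\subseteq\Sigma^{\geq 2}$ and $W\setminus V=W$ is finite, so Lemma~\ref{L:EW-Reduction} yields a polynomial-time reduction from $\csp(\hrushovski{\sT_W})$ to $\csp(\hrushovski{\sT_\emptyset})$. Hence it suffices to show that $\csp(\hrushovski{\sT_\emptyset})$ is in $\PolComplexity$.

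For this, I would apply Lemma~\ref{L:DescriptionCSP} with $W=\emptyset$: a finite $\theta$-structure $\sX$ admits a homomorphism into $\hrushovski{\sT_\emptyset}$ if and only if $\sX$ is separated, since the quantification ``for all $\ww\in W$'' is vacuous. Separation is the conjunction of the three conditions in Definition~\ref{definition:separated}, each of which is a universally quantified statement over tuples of $\sX$ of bounded arity (at most $4$), so it can be checked in time polynomial in $|X|$ (in fact, in time $\bigO(|X|^4)$).

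Combining the two steps gives the desired polynomial-time algorithm: on input $\sX$, first run the reduction of Lemma~\ref{L:EW-Reduction} for the pair $(W,\emptyset)$, then test whether the resulting structure is separated. There is essentially no obstacle here — the only thing to be careful about is that the reduction of Lemma~\ref{L:EW-Reduction} runs in polynomial time with the explicit bound $\bigO(|X|^{2N})$ where $N=\max_{w\in W}|w|$, which is a constant depending only on $W$, so composing with the separation check keeps the overall running time polynomial in $|X|$.
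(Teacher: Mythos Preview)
Your proof is correct and follows exactly the same approach as the paper: reduce to $\csp(\hrushovski{\sT_\emptyset})$ via Lemma~\ref{L:EW-Reduction}, then observe via Lemma~\ref{L:DescriptionCSP} that membership in $\csp(\hrushovski{\sT_\emptyset})$ is equivalent to separatedness, which is polynomial-time checkable. The paper's proof is just a terser version of what you wrote.
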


\begin{proof}  
  It follows from Lemma~\ref{L:DescriptionCSP} that a $\theta$-structure $\sX$
  has a homomorphism to $\hrushovski{\sT_\emptyset}$ if and only if it is
  separated. This can be determined in polynomial time, and so
  $\csp(\hrushovski{\sT_W})$ is in $\PolComplexity$ by
  Lemma~\ref{L:EW-Reduction}.
\end{proof}

\begin{theorem}\label{T:CSPdifficile}
  Let $W\subseteq\Sigma^{\geq2}$ be such that $\Sigma^{\geq2}\setminus W$ is finite. Then
  $\csp(\hrushovski{\sT_{W}})$ is \coNP-complete.
\end{theorem}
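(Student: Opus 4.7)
The plan is to split the argument into membership in $\coNP$ and $\coNP$-hardness. For hardness, I would first reduce the general cofinite case to $W=\Sigma^{\geq 2}$ via Lemma~\ref{L:EW-Reduction}, and then give a direct polynomial-time reduction from the $\coNP$-complete problem ``the graph $G$ has no independent set of size $k$''.

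Membership in $\coNP$ follows directly from the existing infrastructure: Theorem~\ref{T:CSPcodeLangage} provides a $\coNP$-many-one reduction from $\csp(\hrushovski{\sT_W})$ to $L:=\Sigma^{\geq 2}\setminus W$, which is finite by hypothesis and therefore in $\PolComplexity$; Lemma~\ref{L:CompositionReduction} then places $\csp(\hrushovski{\sT_W})$ in $\coNP^{\PolComplexity}=\coNP$.

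For $\coNP$-hardness, the finiteness of $\Sigma^{\geq 2}\setminus W$ lets me apply Lemma~\ref{L:EW-Reduction} with $V:=W$, yielding a polynomial-time reduction from $\csp(\hrushovski{\sT_{\Sigma^{\geq 2}}})$ to $\csp(\hrushovski{\sT_W})$; so it suffices to show that $\csp(\hrushovski{\sT_{\Sigma^{\geq 2}}})$ is $\coNP$-hard. Given an instance $(G,k)$ with $G=(V,E)$ and $k\geq 2$, I would construct a finite $\theta$-structure $\sX$ on the domain $V\sqcup\{c_1,\ldots,c_k\}$ by fixing a symbol $s\in\Sigma$ and setting $P^{\sX}:=V$, $\iota^{\sX}:=\{c_1\}$, $\tau^{\sX}:=\{c_k\}$, $H_s^{\sX}:=\{(c_i,c_{i+1}):1\leq i<k\}\cup\{(c_k,c_1)\}$, $H_{s'}^{\sX}:=\emptyset$ for all $s'\neq s$, and $S^{\sX}:=\{(a,a',c_i,c_j):a,a'\in V,\ a\neq a',\ \{a,a'\}\notin E,\ i\neq j\}$. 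One checks immediately that $\sX$ is separated, so by Lemma~\ref{L:DescriptionCSP} the condition $\sX\in\csp(\hrushovski{\sT_{\Sigma^{\geq 2}}})$ is equivalent to the non-existence of a homomorphism $\Completion{\sF_w}\to\sX$ for any $w\in\Sigma^{\geq 2}$.

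The key remaining claim is that such a homomorphism exists if and only if $w=s^k$ and $G$ contains an independent set of size $k$. Separatedness of $\sX$ forces the images $c'_1,\ldots,c'_n$ of the cycle vertices to be pairwise distinct; combined with $\iota^{\sX}=\{c_1\}$, $\tau^{\sX}=\{c_k\}$, and the fact that each $c_i$ has exactly one outgoing $H_s$-edge in $\sX$ (and no other $H_{s'}$-edges exist), an easy induction yields $c'_i=c_i$, forcing $n=k$ and $w=s^k$; the putative ``wrap-around'' case $n>k$ would force $c'_{k+1}=c_1=c'_1$, contradicting distinctness. The remaining $S^{\sX}$-conditions then translate exactly into requiring the images $a_1,\ldots,a_k\in V$ of the $P$-vertices to form an independent set of size $k$ in $G$. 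The only delicate step is this rigidity argument, but it is handled cleanly by separatedness, so the equivalence follows and the $\coNP$-hardness is established.
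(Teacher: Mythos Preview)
Your proof is correct and follows essentially the same approach as the paper's: both establish membership via Theorem~\ref{T:CSPcodeLangage}, reduce to the case $W=\Sigma^{\geq 2}$ via Lemma~\ref{L:EW-Reduction}, and then build a gadget $\sX$ in which valid $w$-codes correspond to $k$-subsets of the vertex set of a graph. The only differences are cosmetic: the paper reduces from \textsc{Clique} by setting $S^\sX$ on pairs that \emph{are} edges (so codes witness cliques), while you reduce from \textsc{Independent Set} by putting $S^\sX$ on \emph{non}-edges; and the paper makes $H_a^\sX$ the same cycle for every $a\in\Sigma$ (so any length-$k$ word works), while you use a single letter $s$ (forcing $w=s^k$). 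Your explicit rigidity argument via separatedness is in fact more detailed than the paper's one-line justification.
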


\begin{proof}
  By Lemma~\ref{L:EW-Reduction}, it suffices to prove the theorem for
  $W=\Sigma^{\geq2}$. Moreover, it follows from Theorem~\ref{T:CSPcodeLangage} that
  $\csp(\hrushovski{\sT_{\Sigma^{\geq2}}})$ is in $\coNP$. We are going to reduce the
  clique problem, which is known to be \NP-complete, to the complement of
  $\csp(\hrushovski{\sT_{\Sigma^{\geq2}}})$. 

  Let $\sG=(V,E)$ be a finite loopless graph and let $n\ge 2$ be an integer. By
  Lemma~\ref{L:DescriptionCSP}, there is a homomorphism
  $\sX\to\hrushovski{\sT_{\Sigma^{\geq2}}}$ for some $\theta$-structure $\sX$ if and
  only if $\sX$ is separated and for all $\ww\in\Sigma^{\geq2}$ of length at most
  $\card{X}$ there is no homomorphism $\Completion{\sF_\ww}\to\sX$. Now
  consider the structure $\sX$ with base set $X=V\cup\set{c_1,\dots,c_n}$, and
  relations defined by:
  \begin{itemize}
    \item $P^\sX=V$;

    \item $\iota^\sX=\set{c_1}$, and $\tau^\sX=\set{c_n}$;

    \item $H_a^\sX=\setm{(c_i,c_j)}{j\equiv i+1\mod k}$, for every $a \in \Sigma$;

    \item $S^\sX=\setm{(u,v,c_i,c_j)}{(u,v)\in E,\text{ and }i\not=j}$.
  \end{itemize}
  Then $\sX$ is separated and can be computed from $\sG$ in polynomial time. By the 
  definition  of $\sX$, the set $\{v_1,\dots,v_n\}$ is a clique of size $n$ in
  $\sG$ if and only if $(v_1,\dots,v_n,c_1,\dots,c_n)$ is a $\ww$-code in $\sX$
  for some word $\ww$ of length $n$. Therefore, there is $\ww\in\Sigma^{\geq2}$ and a
  homomorphism $\Completion{\sF_\ww}\to\sX$ if and only if $\sG$ has a clique
  of size $n$. It follows that there is no homomorphism
  $\sX\to\hrushovski{\sT_{\Sigma^{\geq2}}}$ if and only if $\sG$ has a clique of size
  $n$. Hence there is a polynomial-time reduction from the clique problem to
  the complement of $\csp(\hrushovski{\sT_{\Sigma^{\geq2}}})$, and thus
  $\csp(\hrushovski{\sT_{\Sigma^{\geq2}}})$ is \coNP-complete. 
\end{proof}

\subsection{\coNP-intermediate CSPs}
\label{S:Intermediaire}

  Assuming that the complexity classes $\PolComplexity$ and $\coNP$ are
  distinct, we construct a trivial structure such that the CSP of its Hrushovski-encoding is in \coNP, but neither in $\PolComplexity$ nor
  \coNP-complete. The proof is adapted from a construction by Bodirsky and
  Grohe in~\cite{BodirskyGrohe}, which was itself inspired by Ladner's theorem
  on the existence of $\NP$-intermediate problems~\cite{Ladner}. We are, in
  fact, going to prove the following more general result which, similarly to
  Ladner's theorem, implies that there is an infinite hierarchy of such
  \coNP-intermediate $\csp$s.

\begin{theorem} \label{T:ComplexiteIntermediaire}
  Let $L \subseteq \{0,1\}^{\geq 2}$ be a language in $\coNP \setminus
  \PolComplexity$. Then there is a unary language $I \subseteq \{0\}^{\geq 2}$ such
  that $\csp(\hrushovski{\sT_I})$ is also in $\coNP \setminus \PolComplexity$,
  but $L$ is not polynomial-time reducible to $\csp(\hrushovski{\sT_I})$.
\end{theorem}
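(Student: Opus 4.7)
The plan is to carry out a Ladner-style diagonalisation, closely patterned on the construction of Bodirsky and Grohe~\cite{BodirskyGrohe}. The necessary leverage is provided by the extreme cases of the family $\{\csp(\hrushovski{\sT_I})\}_I$ studied in the previous subsection: Corollary~\ref{C:EW-Facile} places the problem in $\PolComplexity$ whenever $I$ is finite; Theorem~\ref{T:CSPdifficile} renders it $\coNP$-complete whenever $\{0\}^{\geq 2}\setminus I$ is finite; and Lemma~\ref{L:EW-Reduction} ensures that the polynomial-time complexity of the problem is invariant under finite modifications of $I$. A further key feature, used to make the construction effective, is the locality principle implicit in Lemma~\ref{L:DescriptionCSP}(3): whether an instance $\sX$ of size $|X|$ lies in $\csp(\hrushovski{\sT_I})$ is determined entirely by the intersection of $I$ with $\{0^m : m<|X|\}$.

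The first step is to fix enumerations $(A_k)_{k\geq 1}$ and $(R_k)_{k\geq 1}$ of all polynomial-time clocked Turing machines, viewed respectively as candidate deterministic algorithms for $\csp(\hrushovski{\sT_I})$ and as candidate polynomial-time many-one reductions from $L$. I then define a non-decreasing function $g\colon\N\to\N$ by a slow recursion: at step $n$, within a time budget polynomial in $n$, search by brute force for a ``witness'' of size bounded by a suitable function of $n$ (chosen so that the exponential-time brute-force CSP decision from Lemma~\ref{L:DescriptionCSP} stays polynomial in $n$) against the currently targeted machine---the algorithm $A_k$ if $g(n-1)+1=2k-1$, or the reduction $R_k$ if $g(n-1)+1=2k$---using the portion of $I$ already determined to evaluate the CSP on the small instances arising. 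If a witness exists, set $g(n):=g(n-1)+1$; otherwise $g(n):=g(n-1)$. Finally put
$$
I \;:=\; \bigl\{\,0^n\in\{0\}^{\geq 2}\;\bigm|\;g(n)\text{ is odd}\,\bigr\}.
$$
Since $g$ is polynomial-time computable, so is $I$, and Theorem~\ref{T:CSPcodeLangage} places $\csp(\hrushovski{\sT_I})$ in $\coNP$.

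The success of the diagonalisation follows once $g\to\infty$, which I verify by ruling out stabilisation at each parity. Suppose $g$ stabilised at an odd value $2k-1$. Then $I$ would coincide with a cofinite subset of $\{0\}^{\geq 2}$ on a tail, and by Lemma~\ref{L:EW-Reduction} together with Theorem~\ref{T:CSPdifficile} the CSP would be $\coNP$-complete, hence not in $\PolComplexity$ (as $L\in\coNP\setminus\PolComplexity$ implies $\coNP\not\subseteq\PolComplexity$). Hence $A_k$ would be an incorrect polynomial-time algorithm, and by the locality from Lemma~\ref{L:DescriptionCSP}(3) its failure would be witnessed on some instance of a specific finite size—such a witness would be detected at some finite step, contradicting stabilisation. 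Symmetrically, if $g$ stabilised at an even value $2k$, then $I$ would be finite and the CSP would lie in $\PolComplexity$ by Corollary~\ref{C:EW-Facile}; any correct polynomial-time reduction $R_k$ would then place $L\in\PolComplexity$, contrary to hypothesis, so $R_k$ would fail and a witness word of finite length would eventually be found, contradicting stabilisation. Thus $g\to\infty$, every $A_k$ and every $R_k$ is successfully diagonalised against, and the three conclusions follow: membership of $\csp(\hrushovski{\sT_I})$ in $\coNP$, its non-membership in $\PolComplexity$, and the non-reducibility of $L$ to it.

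The principal obstacle is the standard calibration step for Ladner-style arguments: choosing the search bound so that the step-$n$ computation (brute-force CSP decision plus exhaustive search for a witness) fits in time polynomial in $n$, while also being large enough that every genuine failure of a targeted machine is eventually witnessed on an input of bounded size. The first requirement is handled using the explicit characterisation in Lemma~\ref{L:DescriptionCSP}(3) (which gives an exponential-time CSP decision procedure whose exponent is controlled by the instance size), and the second uses the same locality to guarantee that witnesses discovered against the already fixed portion of $I$ are genuine witnesses against the final $I$. Once this bookkeeping is arranged in the standard way, the theorem follows, and an iteration of the construction yields an infinite hierarchy of $\coNP$-intermediate CSPs as in Ladner's original result.
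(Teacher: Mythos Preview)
Your proposal is correct and follows essentially the same Ladner-style diagonalisation as the paper: a slowly growing, polynomial-time computable function whose parity determines membership in $I$, alternately killing candidate polynomial-time algorithms and candidate reductions, with the stabilisation cases ruled out via Corollary~\ref{C:EW-Facile}, Theorem~\ref{T:CSPdifficile}, and Lemma~\ref{L:EW-Reduction}, and locality supplied by Lemma~\ref{L:DescriptionCSP}(3). The only cosmetic differences are the parity convention (the paper takes $I=\{n:f(n)\text{ even}\}$ and kills algorithms at even values, reductions at odd values) and the time-bounding mechanism (the paper simply allots exactly $n$ Turing steps to stage $n$ rather than calibrating a separate search bound).
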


\begin{proof}
  In this proof, we are going to identify any number  $n \in \N$ with
  the unique word of length $n+2$ in the unary language $\{0\}^{\geq 2}$. Furthermore,
  fix a polynomial-time computable representation of $\theta$-structures as
  binary words $\set{0,1}^{\geq 2}$. For simplicity, assume that each word in
  $\set{0,1}^{\geq 2}$ corresponds to a $\theta$-structure. 

  As in Ladner's proof, we fix an enumeration $M_0,M_1,M_2,\dots$ of all
  deterministic polynomial time Turing machine with input $\set{0,1}^{\geq 2}$ and
  Yes/No output. Moreover, we fix an enumeration of all polynomial time
  reductions, that is, Turing machines $T_0,T_1,T_2,\dots$ with both input and
  output in $\set{0,1}^{\geq 2}$ halting after polynomially bounded time. We can
  assume that both enumerations are computable.

  We are going to construct the set $I = \setm{n \in \N}{ f(n) \text{ is even }
  } \subseteq \N = \{0\}^{\geq 2}$, where $f \colon \N \to \N$ is a function given by a Turing
  machine $F$, which we will define below. The function $f$ is going to be
  non-decreasing and surjective, however, $f$ will grow very slowly. Roughly
  speaking, it will have the property  that for every even $k$
  there is an incremental step from $f(n) = k$ to $f(n+1) = k+1$ if and only if
  we can find a witness $\sX$ such that the Turing machine $M_{k/2}$ does not
  solve $\csp(\hrushovski{\sT_{I}})$ within $n+1$ computational steps. On the
  other hand, for odd $k$, the value will increases to $f(n+1) = k+1$ if and
  only if we find a witness that $T_{\floor{k/2}}$ is not a reduction from $L$ to
  $\csp(\hrushovski{\sT_{I}})$ within $n+1$ computational steps. The two
  properties together with $f$ being surjective imply that there is no
  polynomial-time Turing machine solving $\csp(\hrushovski{\sT_{I}})$, nor a
  polynomial time reduction from $L$ to $\csp(\hrushovski{\sT_{I}})$.

  We define the Turing machine $F$ with input and output both from $\N =
 \{0\}^{\geq 2}$ in the following manner.
  \begin{enumerate}
    \item If $n = 0$, then $F$ outputs $0$.

    \item If $n > 0$, compute the values of $F(i)$ for as many values of
            $i=0,1,2,\dots$ as possible in $n$ Turing steps. Then set $k$ to be
            the last computed value $F(j)$.

    \item If $k$ is even, run the `for all' loop (a) for $n$ many Turing steps.
            If $k$ is odd, run the `for all' loop (b) for $n$ many Turing
            steps. In both cases, if no output is computed within those $n$
            steps, output $F(n) = k$.

    \begin{enumerate}
       \item For every $\theta$-structure $\sX$, simulate $M_{k/2}$ on $\sX$,
               compute whether $\sX$ is separated, and compute whether there is
               a $i \le \min( \card{X}, j)$ such that $F(i)$ is even and
               $\Completion{\sF_i} \to \sX$. Return $k + 1$ if the following
               equivalence holds
               \begin{equation} \label{E:EquivMEchec} 
                  M_{k/2}\text{ accepts $\sX$}\Leftrightarrow \text{$\sX$ is
                  not separated or $\exists i\le \card{X}$ such that
                  $F(i)$ even and $\Completion{\sF_i} \to \sX$}. \tag{$\star$} 
               \end{equation}

       \item For every word $\wu$ in $\set{0,1}^{\geq 2}$, simulate
                $T_{\floor{k/2}}$ on $\wu$ and consider the result
                $T_{\floor{k/2}}(\wu)$ as a $\theta$-structure $\sX$. Compute
                whether $\wu\in L$, compute whether $\sX$ is
                separated, and check whether there is an $i\le \min(\card X, j)$
                such that $F(i)$ is even and $\Completion{\sF_i} \to \sX$. Return $k + 1$ 
                if the following equivalence holds
            \begin{equation} \label{E:EquivTEchec} 
              \wu\in L\Leftrightarrow \text{$\sX$ is not separated or $\exists
              i\le\card{X}$ such that $F(i)$ is even and $\Completion{\sF_i} \to
              \sX$}. \tag{$\dagger$}
            \end{equation}
     \end{enumerate}
\end{enumerate}

  Let $f$ be the function computed by $F$. Note that $f$ is well-defined, since
  in the recursive step (2) at most $n$ Turing steps are executed, thus only values of $F(i)$
  for $i < n$ are needed  for the computation. Analogously, in the loops (a) and (b) only values
  $F(i)$ for $i \leq j$ are used. Clearly $F$ has polynomial runtime,
  since in total at most $2n$ Turing steps are executed to compute $F(n)$.
  Therefore we can decide in polynomial time, whether a given $n \in \N$ is an element of $I := \setm{n \in \N}{f(n) \text{ is even}}$. Hence it follows from
  Theorem~\ref{T:CSPcodeLangage} that $\csp(\hrushovski{\sT_I})$ is in \coNP.
  Note that for every $n$ the value of $f(n+1)$ is either $f(n)$ or $f(n)+1$.
  We claim that, in addition, $f$ is unbounded. For contradiction, assume that
  $f$ has a maximal value $m$.

  First, assume that $m$ is even. Then $I$ is cofinite, as only finitely many
  natural numbers are not mapped to $m$ under $f$. 
  Moreover,  the equivalence \eqref{E:EquivMEchec} does not hold for any $\theta$-structure $\sX$ (otherwise there would be an $n \in \N$ with
  $f(n) = m+1$). Thus $M_{m/2}$ accepts a structure $\sX$ if and only if $\sX$
  is separated and there is no $i \in I$ such that $\Completion{\sF_i} \to
  \sX$. By Lemma~\ref{L:DescriptionCSP} this implies that the polynomial time
  Turing machine $M_{k/2}$ solves $\csp(\hrushovski{\sT_{I}})$. On the other
  hand, by the cofiniteness of $I$ and Theorem~\ref{T:CSPdifficile} we have
  that $\csp(\hrushovski{\sT_{I}})$ is \coNP-complete; a contradiction to our
  assumption that $\PolComplexity\not=\coNP$.

  Next, assume that $m$ is odd. This implies that $I$ is finite, as only
  finitely many elements are not mapped to $m$. Similarly to before, there are
  no word $\wu \in \{0,1\}^{\geq 2}$ such that the equivalence \eqref{E:EquivTEchec}
  holds, that is,  $\wu \in L$ if and only if $T_{\floor{m/2}}(\wu)$ is
  separated and there is no $i \in I$ such that $\Completion{\sF_i} \to
  T_{\floor{k/2}}(\wu)$. Thus $T_{\floor{m/2}}$ is a polynomial-time reduction
  of $L$ to $\csp(\hrushovski{\sT_I})$. Since $I$ is finite, it follows from
  Lemma~\ref{C:EW-Facile} that $\csp(\hrushovski{\sT_I})$ is solvable in
  polynomial time, and hence $L$ is too. This contradicts our assumption $L
  \in \coNP \setminus \PolComplexity$. We conclude that $f$ is
  non-decreasing and surjective. 

  Finally, we show that $\csp(\hrushovski{\sT_I})$ is neither in
  $\PolComplexity$ nor in $\coNP$-complete. To that end, assume that
  $\csp(\hrushovski{\sT_I})$ is solvable in polynomial time. Then there is an
  even integer $k$ such that $M_{k/2}$ solves $\csp(\hrushovski{\sT_I})$. As
  $f$ is surjective, there is an integer $n$ such that $f(n) = k$ and $f(n+1) =
  k+1$. By definition of $f$, there is a $\theta$-structure $\sX$ satisfying
  \eqref{E:EquivMEchec}, that is, witnessing that $M_{k/2}$ does not solve
  $\csp(\hrushovski{\sT_I})$, which is a contradiction. Next, assume that there
  is a polynomial-time reduction from $L$ to $\csp(\hrushovski{\sT_I})$. Then
  there is an odd integer $k$ such that $T_{\floor{k/2}}$ is this reduction. As
  $f$ is surjective, there is an integer $n$ such that $f(n) = k$ and $f(n+1) =
  k+1$. By definition of $f$, there is a word satisfying the equivalence
  \eqref{E:EquivTEchec}. Thus $T_{\floor{k/2}}$ is not a reduction of $L$ to
  $\csp(\hrushovski{\sT_I})$, which is a contradiction.
\end{proof}

 %
%
%

We are now able to summarize the proof of Theorem~\ref{thm:complexity-main}.

\begin{proof}[Proof of Theorem~\ref{thm:complexity-main}]
By Lemma~\ref{L:trivialprop}, the Hrushovski-encoding of any trivial structure is $\omega$-categorical, has slow orbit growth, and satisfies a set of non-trivial h1 identities. For every class $\mathcal C$ that admits $\coNP^{\mathcal C}$-complete problems, we know by Corollary~\ref{cor:CSPcompletness} that there are trivial structures, whose encodings have $\coNP^{\mathcal C}$-complete $\csp$s. By Theorem~\ref{T:ComplexiteIntermediaire} there exists a trivial structure whose encoding has a $\coNP$-intermediate $\csp$ (assuming $\PolComplexity \neq \coNP$).
\end{proof}

We remark that the complexity results in
Theorem~\ref{thm:complexity-main} can be partially replicated for the class of
counterexamples $\sS_{\delta,\alpha}$ from Section~\ref{section:local}.

\begin{theorem}
  Let $\alpha$ and $\delta$ be as in Definition~\ref{definition:S}, and assume
  that $\delta(n)$ can be computed in time polynomial in $n$. Let $L \subseteq
  \{0\}^{\geq 2}$ be an image of $\alpha$ regarded as a unary language. Then the
  following are true.
  \begin{itemize}
  \item $L$ reduces to $\csp(\hrushovski{\sS_{\delta,\alpha}})$ in log-space
          and $\csp(\hrushovski{\sS_{\delta,\alpha}}) \in \coNP^L$.
          
  \item If $L \in \PolComplexity$ then $\csp(\hrushovski{\sS_{\delta,\alpha}})$
          is \coNP-complete.
  \end{itemize}
\end{theorem}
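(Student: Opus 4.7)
The plan follows the templates of Theorems~\ref{T:CSPcodeLangage} and~\ref{T:CSPdifficile}, adapted to the richer forbidden-structure description of $\sS_{\delta,\alpha}$. By Proposition~\ref{proposition:SandH} together with Lemma~\ref{lemma:CSSconstruction}, $\sS_{\delta,\alpha}$ is homomorphically bounded by the loop-like structures $L_i^{2\alpha(m)}$ and the blown-up graphs $\delta(m)[\alpha(m)]$ for $m\geq 1$. Hence by Corollary~\ref{cor:homobounded}, a $\theta$-structure $\sX$ belongs to $\csp(\hrushovski{\sS_{\delta,\alpha}})$ iff it is separated and no canonical code $\Completion{\sG}$ of such a forbidden $\sG$ admits a homomorphism into $\sX$.

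For the log-space reduction, I would map each $0^n\in\{0\}^{\geq 2}$ to $\sX_n:=\Completion{L_1^{2n}}$, the canonical code of a $2n$-ary loop-like structure. When $n=\alpha(m)\in L$, the structure $L_1^{2n}$ is itself a forbidden structure, so $\sX_n\notin\csp(\hrushovski{\sS_{\delta,\alpha}})$. When $n\notin L$, a case analysis rules out any homomorphism from a canonical code of a forbidden structure into $\sX_n$: a loop-like $L_i^{2\alpha(m')}$ forces the directed $H_0$-cycle of length $2\alpha(m')$ to fold into the cycle of length $2n$ compatibly with the unique $\iota$- and $\tau$-labelled elements, producing a divisibility condition, after which the larger source $P$-part forces, by pigeonhole, a collision of distinct source $P$-elements in the $2n-1$ target $P$-elements and thereby a violation of the $S$-relation in a separated structure; the blown-up graph codes are ruled out analogously, their $P$-parts being too large to fit. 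Hence $n\in L\Leftrightarrow \sX_n\notin\csp(\hrushovski{\sS_{\delta,\alpha}})$, which, read as a log-space Turing reduction, yields the stated reducibility of $L$ to the CSP.

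For the upper bound $\csp(\hrushovski{\sS_{\delta,\alpha}})\in\coNP^L$, I would present a nondeterministic polynomial-time $L$-oracle algorithm for the complement. On input a $\theta$-structure $\sX$, check separatedness in polynomial time; otherwise nondeterministically guess an even arity $k\leq 2|X|$, query the oracle whether $0^{k/2}\in L$, and if affirmative recover the unique $m$ with $\alpha(m)=k/2$ by counting the positive answers to the queries $0^1,\dots,0^{k/2}$ (polynomial-time since $\alpha$ strictly monotone gives $m\leq k/2$), compute $\delta(m)$ in polynomial time by the hypothesis on $\delta$, nondeterministically select a forbidden $\sG$ of arity $k$ (one of the $L_i^k$ or $\delta(m)[k/2]$) together with a candidate homomorphism $\Completion{\sG}\to\sX$, and verify the homomorphism in polynomial time. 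This places $\overline{\csp}$ in $\NP^L$, and hence $\csp(\hrushovski{\sS_{\delta,\alpha}})\in\coNP^L$.

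For the second bullet, $L\in\PolComplexity$ yields $\coNP^L=\coNP$, placing the CSP in $\coNP$. For matching $\coNP$-hardness I would adapt the clique reduction of Theorem~\ref{T:CSPdifficile}: given a finite loopless graph $\sG$ together with a positive integer $k\in L$ (equivalently, $k=\alpha(m)$ for a known $m$ computable from $L$ by polynomial-time enumeration), form a $\theta$-structure $\sX_{\sG,k}$ whose valid $0^{2k}$-codes correspond exactly to the embeddings of the $\delta(m)$-blown-up pattern into the blown-up graph $\sG[k]$, so that $\sX_{\sG,k}\notin\csp(\hrushovski{\sS_{\delta,\alpha}})$ iff $\sG$ contains such a subpattern. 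The main obstacle, as opposed to Theorem~\ref{T:CSPdifficile} where all arities are available, is the restriction of usable arities to $2\im(\alpha)$; but because $L\in\PolComplexity$ allows identifying such arities efficiently and because $\im(\delta)$ contains all non-3-colorable graphs (in particular arbitrarily large complete graphs, encoding clique instances of unbounded size), this reduction can be tuned to capture an $\NP$-complete problem and thereby deliver $\coNP$-hardness.
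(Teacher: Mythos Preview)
The paper does not actually prove this theorem: immediately after the statement it says ``We refrain from giving a proof here; as for trivial structures, the argument is purely based on the characterization of $\csp(\hrushovski{\sS_{\delta,\alpha}})$ by Proposition~\ref{prop:emb-forb}.'' Your plan is precisely along those lines---you invoke Proposition~\ref{prop:emb-forb}/Corollary~\ref{cor:homobounded} and then mimic Theorems~\ref{T:CSPcodeLangage} and~\ref{T:CSPdifficile}---so at the level of strategy you match the paper's intended route exactly.

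A few remarks on the details. Your log-space reduction is correct in spirit; the key fact you use implicitly is that in any separated target the $S$-relation forces the code elements $c_1,\dots,c_{2\alpha(m')}$ of a canonical code to map \emph{injectively}, which is what rules out wrapping a longer cycle onto a shorter one and gives the bound $2\alpha(m')\le |X|$ you later need. Note also that your reduction yields $n\in L \Leftrightarrow \sX_n\notin\csp(\hrushovski{\sS_{\delta,\alpha}})$, i.e., a many-one reduction to the \emph{complement}; you rightly call this a log-space Turing reduction to the CSP, which is all the statement demands. Your $\coNP^L$ upper bound is fine once one observes that $\alpha(m)\le |X|$ (hence $m\le |X|$ by strict monotonicity) and that the polynomial-time computability of $\delta(m)$ in $m$ bounds $|\delta(m)|$, so all guessed objects are of polynomial size.

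The one place where your sketch is thinner than the rest is the $\coNP$-hardness. Your idea is to use that $\im(\delta)$ contains all $K_n$, pick $m_n$ with $\delta(m_n)=K_n$, and build an instance of size governed by $\alpha(m_n)$. As stated, nothing guarantees that $m_n$---or $\alpha(m_n)$---is bounded polynomially in the clique instance, so the reduction could blow up. Since the paper gives no details here either, there is nothing to compare against; but if you want to make this airtight you will need either a different hardness gadget (e.g., one that exploits the loop-like forbidden structures $L_i^{2\alpha(m)}$ for a single fixed $m$, in closer analogy with Theorem~\ref{T:CSPdifficile}) or an explicit extra assumption controlling the growth of $\alpha$ on the preimages $\delta^{-1}(K_n)$.
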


We refrain from giving a proof here; as
for trivial structures, the argument is purely based on the characterization of
$\csp(\hrushovski{\sS_{\delta,\alpha}})$ by Proposition~\ref{prop:emb-forb}.

\subsection{The limited expressive power of identities} \label{section:pseudo}
We can finally prove Theorem~\ref{thm:pseudo}. 
In the following, 
let $\mathcal L$ be the extension of existential second-order logic allowing countably many second-order quantifiers, followed by a countable conjunction of first-order formulas.
It can be seen that the upward direction of \L{}o\'{s}'s theorem and the downward L\"owenheim-Skolem theorem hold for this logic.

\begin{proof}[Proof of Theorem~\ref{thm:pseudo}]
We prove the following: there is no countable set $\Sigma$ of $\theta$-formulas in $\mathcal L$ such that the equivalence $\sA\models\Sigma\Leftrightarrow\csp(\sA)\in\mathcal C$ holds for all homogeneous $\theta$-structures $\sA$.
This proves the theorem, as the satisfaction of a countable set of identities by polymorphisms  can be encoded in $\mathcal L$.
Assume that such a $\Sigma$ exists.
Let $L$ be a language over an alphabet $\Delta$ whose Turing-degree is not intersected by $\mathcal C$, and let $W=\Delta^{\geq2}\setminus L$.
For every $n\in\mathbb N$, let $W\cap \Delta^{\leq n}$ be the set of words of length at most $n$ in $W$. 
Corollary~\ref{cor:homobounded} implies that $\csp(\Hrushovski{\sT_{W\cap \Delta^{\leq n}}})$ can be solved by checking for finitely many forbidden substructures in a given instance, 
therefore $\csp(\Hrushovski{\sT_{W\cap \Delta^{\leq n}}})$ is in AC$^0$.
Since $\Hrushovski{\sT_{W\cap \Delta^{\leq n}}}$ is homogeneous, we get $\Hrushovski{\sT_{W\cap \Delta^{\leq n}}}\models\Sigma$.
Let $\mathcal U$ be a non-principal ultrafilter on $\mathbb N$, and let $\sA$ be the ultraproduct $\left(\prod_{n\in\mathbb N} \Hrushovski{\sT_{W\cap \Delta^{\leq n}}}\right)\kern-3pt/\mathcal U$.
Then $\sA\models\Sigma$ by \L{}o\'{s}'s theorem and $\sA$ is homogeneous, as all the factors in the ultraproduct are homogeneous.
By the L\"owenheim-Skolem theorem, $\sA$ has a countable elementary substructure $\sB$ that also satisfies $\Sigma$.
Note that $\sB$ is homogeneous and has the same age as $\sA$, as it is an elementary substructure of $\sA$.

Finally, we claim that $\sA$ and $\Hrushovski{\sT_W}$ have the same age.
Every finite substructure of $\Hrushovski{\sT_W}$ embeds into $\Hrushovski{\sT_{W\cap \Delta^{\leq n}}}$ for all $n$, by Corollary~\ref{cor:homobounded}, and therefore into their ultraproduct, which is $\sA$.
Conversely, assume that $\sX$ embeds into $\sA$. This precisely means that $I:=\{n\in\mathbb N \mid \sX \text{ embeds into }\Hrushovski{\sT_{W\cap \Delta^{\leq n}}}\}$ is in $\mathcal U$.
Moreover, since $\mathcal U$ is not principal, $I$ is infinite.
Therefore, there is an $n\geq|w|$ such that $\sX$ embeds into $\Hrushovski{\sT_{W\cap \Delta^{\leq n}}}$.
Since $w\in W\cap \Delta^{\leq n}$, Corollary~\ref{cor:homobounded} gives that $\Completion{\sF_w}$ does not homomorphically map to $\sX$, and that $\sX$ is separated.
Since this holds for all $w\in W$, it follows that $\sX$ embeds into $\Hrushovski{\sT_W}$.

By Theorem~\ref{thm:fraisse}, the two structures $\sB$ and $\Hrushovski{\sT_W}$ are isomorphic.
By Theorem~\ref{T:CSPcodeLangage}, $L$ and $\csp(\Hrushovski{\sT_W})$ have the same Turing-degree, therefore $\csp(\Hrushovski{\sT_W})$ is not in $\mathcal C$, a contradiction.
\end{proof}


\section{Discontinuous clone homomorphism to projections}
\label{section:discon}

It was shown in~\cite{BPP-projective-homomorphisms} that there is an $\omega$-categorical structure $\sC$ such that $\Pol(\sC)$ has a discontinuous  clone homomorphism to the projections. This $\sC$ however has an infinite signature, and it can be shown that $\sC$ is not first-order interdefinable with any finite language structure; hence, its polymorphism clone is not finitely related. In this section we use the Hrushovski-encoding to find an $\omega$-categorical finite language structure whose polymorphism clone has a discontinuous clone homomorphism to the projections, proving Theorem~\ref{thm:main:localnoglobal}.

We first recall the construction of $\sC$ in Proposition~4.3 of~\cite{BPP-projective-homomorphisms}. Let
$K$ be the class of all finite structures in the signature $\sigma=(R_n)_{n\geq 1}$, where each $R_n$ names an equivalence relation on injective $n$-tuples with
at most two equivalence classes (seen as a $2n$-ary relation). It is then routine to show that $K$ has the HP and
the SAP, and hence it is a Fra\"{i}ss\'{e} class. Let $\sC'$ be the Fra\"{i}ss\'{e} limit
of $K$. The structure $\sC'$ is $\omega$-categorical since it is homogeneous and since on every finite tuple of elements of its domain, only finitely many of the relations can hold. Now, let $S_n$ be a $3n$-ary relation symbol for every $n\geq 1$.
Let $\sC$ be the expansion of $\sC'$ by relations for these symbols, defined by
\[
  S_n^\sC := \{ (\mathbf{x}, \mathbf{y}, \mathbf{z}) \in (B^{n})^3 \mid \neg(
  R_n^{\sC'}(\mathbf{x}, \mathbf{y}) \land R_n^{\sC'}(\mathbf{y}, \mathbf{z}) )\} 
\]
for all $n \geq 1$.

Since all relations $S_n^\sC$ are definable by quantifier-free first-order formulas over $\sC'$, 
$\sC$ is also $\omega$-categorical, and its age has the SAP. As every polymorphism of $\sC$ preserves $R_n^\sC$, it naturally acts on the two equivalence classes 
of $R_n^\sC$, for every $n\geq 1$. Let $\xi_n$ be the map sending every element of $\Pol(\sC)$ to
its natural action on the equivalence classes of $R_n^\sC$, which we will denoted by
$0$ and $1$ (independently of $n$). Since $f \in \Pol(\sC)$ preserves
$S_n^\sC$, it follows that $\xi_n(f)$ preserves $\{0, 1\}^3 \setminus \{(0,0,0),
(1, 1, 1)\}$. It is a well-known fact~\cite{Post} that such maps are \emph{essentially unary}, i.e., depend on one argument
only. In other words $\xi_n$ is a clone homomorphism from $\Pol(\sC)$ to the function clone of essentially unary functions on $\{0,1\}$, for every $n \geq 1$.

Finally, let $\mathcal{U}$ be a non-principal ultrafilter on the positive integers. We define a map $\xi\colon \Pol(\sC)\to\proj$ by setting, for every $n\geq 1$ and every $n$-ary $f\in\Pol(\sC)$, the value 
$\xi(f)$ to equal the projection $\pi^n_i$ if the set
\[
  D_i(f) := \{n \geq 1 \mid \xi_n(f) \ \text{depends only on the i-th argument}\}
\]
is an element of $\mathcal{U}$. Since $\mathcal U$ is an ultrafilter, this happens for exactly one $1\leq i\leq n$, and thus $\xi$ is well-defined. Then the following results follow from the proof of Proposition~4.3 in~\cite{BPP-projective-homomorphisms}.

\begin{proposition}\label{prop:discontrecap}
  Let $\sC$ and $\xi$ be as defined above. Then
  \begin{enumerate}[label = \textrm{(\alph*)}]
    \item For every set $J$ of positive integers there exists an injective binary function $f_J \in
    \Pol(\sC)$ such that $D_1(f_J) = J$;
		\item $\xi$ is a discontinuous clone homomorphism to $\Projs$.
  \end{enumerate}
\end{proposition}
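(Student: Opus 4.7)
The plan is to establish (a) via a Fraïssé-style amalgamation construction of an injective binary polymorphism $f_J$ of $\sC$ with prescribed action on the equivalence classes of every $R_n^\sC$, and then to derive (b) quickly from (a) combined with the previously mentioned fact that each $\xi_n$ is a clone homomorphism and that $\mathcal{U}$ is a non-principal ultrafilter.

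For (a), fix $J \subseteq \mathbb{N}_{>0}$ and fix arbitrarily, for every $n \ge 1$, a function $h_n^J \colon \{0,1\}^2 \to \{0,1\}$ that depends only on the first coordinate when $n \in J$ and only on the second coordinate when $n \notin J$. Enumerate $C^2$ as $(a_1,b_1),(a_2,b_2),\dots$ and build $f_J$ inductively. At stage $k$ the current partial function has finite domain; I would choose the value of $f_J$ at $(a_k,b_k)$ in $C$ so as to preserve injectivity, to be a partial homomorphism with respect to all $R_m^\sC$ and $S_m^\sC$, and to respect the rule that any $n$-tuple of pairs lying in the equivalence class $(i,j)$ of $R_n^{\sC^2}$ (viewed on the points already placed) is mapped into the equivalence class $h_n^J(i,j)$ of $R_n^\sC$. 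The key point is that on any finite subset of $C^2$ only finitely many of the relations $R_n^\sC$ hold nontrivially, so only finitely many equivalence-class constraints are active; the SAP of the age of the expansion $\sC$ (inherited from $\sC'$ since the $S_n$ are quantifier-free definable) guarantees that a suitable extension always exists. A standard back-and-forth between domain and image produces a total $f_J \in \Pol(\sC)$ that is injective; by construction $D_1(f_J)$ is exactly $J$.

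For (b), the clone homomorphism property of $\xi$ is a direct computation: for any composition $f\circ (g_1,\dots,g_n)$ in $\Pol(\sC)$, the identity $\xi_k(f\circ(g_1,\dots,g_n)) = \xi_k(f)\circ(\xi_k(g_1),\dots,\xi_k(g_n))$ holds for every $k$, and since each $\xi_k(g_i)$ is essentially unary, the coordinate on which the composite essentially depends can be determined from finitely many ultrafilter memberships, which gives a well-defined projection $\xi(f\circ(g_1,\dots,g_n))$ agreeing with $\xi(f)\circ(\xi(g_1),\dots,\xi(g_n))$. For discontinuity, pick any $J \in \mathcal{U}$ and any $J' \notin \mathcal{U}$; by (a) there are injective binary polymorphisms $f_J$ and $f_{J'}$ with $\xi(f_J)=\pi^2_1$ and $\xi(f_{J'})=\pi^2_2$. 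Moreover, given any finite $F \subseteq C$, only finitely many indices $n$ impose visible constraints on $F^2$, and since $\mathcal{U}$ is non-principal one may modify both $J$ and $J'$ on this finite set of indices without changing their membership in $\mathcal{U}$; running the Fraïssé construction above, one can therefore arrange $f_J$ and $f_{J'}$ to coincide on $F^2$. Hence no finite $F$ determines the value of $\xi$, so $\xi$ fails to be uniformly continuous, and in particular fails to be continuous.

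The main obstacle is the coordination of constraints in the Fraïssé construction of (a): at each extension step one must simultaneously satisfy, for all relevant $n$, the equivalence-class rule $h_n^J$, preserve injectivity, and respect $S_n^\sC$. I expect the essential use of SAP here to be the amalgamation of the partial image computed so far with the finite substructure of $\sC$ whose equivalence-class profile is dictated by $h_n^J$; since $R_n^\sC$ has only two classes and only finitely many $n$ are active at any stage, the resulting amalgamation diagram lies in the age of $\sC$, and the extension goes through.
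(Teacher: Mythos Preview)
Your construction for~(a) is essentially correct and matches the approach from~\cite{BPP-projective-homomorphisms}: an inductive back-and-forth using SAP, with the key observation that on any finite subset only finitely many $R_n$ impose constraints, and that prescribing an essentially unary action $h_n^J$ on the $R_n$-classes automatically ensures preservation of $S_n$. The verification that $\xi$ is a clone homomorphism is also fine.

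The genuine gap is in your discontinuity argument for~(b). You establish that for every finite $F\subseteq C$ there exist binary polymorphisms agreeing on $F^2$ with different $\xi$-values; this shows $\xi$ is not \emph{uniformly} continuous. You then write ``and in particular fails to be continuous'' --- but the implication runs the other way: continuity is weaker than uniform continuity, so failure of uniform continuity does \emph{not} imply discontinuity. What you must exhibit is a single $f$ at which $\xi$ is discontinuous, i.e.\ a sequence $g_k\to f$ with $\xi(g_k)\neq\xi(f)$.

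The paper itself does not reprove Proposition~\ref{prop:discontrecap} (it is cited from~\cite{BPP-projective-homomorphisms}), but its proof of Theorem~\ref{thm:main:localnoglobal} shows exactly how to close this gap, and the same mechanism works here: take an increasing chain $J_1\subseteq J_2\subseteq\cdots$ of finite sets exhausting $\mathbb N_{>0}$, so that $\xi(f_{J_i})=\pi^2_2$ for all $i$; then use compactness of $\Pol(\sC)^{(2)}/{\sim}$ (where $f\sim g$ iff $f=\alpha\circ g$ for some $\alpha\in\Aut(\sC)$; this quotient is compact by $\omega$-categoricity, see~\cite{Topo-Birk}) to extract automorphisms $u_i$ such that $u_i\circ f_{J_i}$ has an accumulation point $g$. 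Since $\xi$ is invariant under left composition by automorphisms, still $\xi(u_i\circ f_{J_i})=\pi^2_2$. But for each fixed $n$, eventually $n\in J_i$ and the action of $u_i\circ f_{J_i}$ on $R_n$-classes depends on the first coordinate; since this is visible on any finite set containing representatives of both classes, the limit $g$ also has this property for every $n$, whence $D_1(g)=\mathbb N_{>0}$ and $\xi(g)=\pi^2_1$. Thus $\xi$ is discontinuous at $g$. Your construction in~(a) is flexible enough to run this argument; what was missing was the passage to a convergent (sub)sequence via compactness rather than the erroneous implication from non-uniform-continuity.
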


Roughly speaking, the discontinuity of $\xi$ follows from the fact that we can find a sequence $(J_n)_{n\geq 1}$ of sets of positive integers outside $\mathcal U$ and a converging sequence of binary functions $(f_n)_{n\geq 1}$ with limit $f$ such that $D_1(f_n)=J_n$ for all $n\geq 1$ and such that $D_1(f)\in\mathcal U$. We make this argument more precise in the following proof of Theorem~\ref{thm:main:localnoglobal}, in which we show that the Hrushovski-encoding of $\sC$ also has a discontinuous clone homomorphism to $\Projs$.


\begin{proof}[Proof of Theorem~\ref{thm:main:localnoglobal}]
  First, observe that every relation of $\sC$ is of arity at least $2$, and that
  there are at most $2$ relations of each arity. Hence, if $\Sigma$ is of size
  $2$, there are no more than $|\Sigma|^n$ relations of arity $n$ for all $n
  \geq 2$. Recall that $\sC$  is homogeneous, $\omega$-categorical, and without
  algebraicity.  Our structure with the properties claimed in
  Theorem~\ref{thm:main:localnoglobal} will be $\hrushovski \sC$, the finite
  language encoding of $\sC$ given by Definition~\ref{definition:reducts}. By
  Proposition~\ref{prop:omegacat}, $\EC$ is an $\omega$-categorical structure.
	
  Let $\xi':= \xi \circ \gamma$, where $\gamma$ is the restriction of
  polymorphisms of $\hrushovski \sC$ to $P^{\hrushovski \sC}$ and $\xi$ is as
  in Proposition~\ref{prop:discontrecap}. Recall that by our identification
  convention in Section~\ref{section:finitelang},  $P^{\hrushovski{\sC}} = C$,
  and hence the composition is well-defined. We then claim that $\xi'$ is a
  discontinuous clone homomorphism from $\Pol(\hrushovski \sC)$ to the clone of
  projections. By Proposition~\ref{prop:ucclonehomo}, $\gamma$ is a clone
  homomorphism, thus $\xi'$ is also a clone homomorphism. It only remains to
  show that $\xi'$ is not continuous.

  For every set $J$ of positive natural numbers, let $f_J \in \Pol(\sC)$ be as
  in Proposition~\ref{prop:discontrecap}, and let $g_J \in
  \Pol(\hrushovski{\sC})$ be obtained from it by
  Lemma~\ref{lemma:encodingextensions}~(3) (applied with $\sA = \sB:=\sC$),
  that is, there exists an embedding $u : \coding{\sC}{\sC} \to
  \coding{\sC}{\sC}$ such that $g_J$ extends $u \circ f_J$.  Then, for all
  $n\geq 1$,  $\xi_n(\gamma(g_J))$ only depends on the first argument if
  and only if $\xi_n(f_J)$ also depends on the first argument only. Thus,
  $D_1(\gamma(g_J)) = D_1(f_J)=J$.  Let $J_1 \subseteq J_2 \subseteq \ldots$ be
  a chain of finite subsets of positive natural numbers whose union exhausts
  all such numbers. Then $\xi'(g_{J_i}) = \pi^2_2$ for all $i \geq 1$, as the
  sets $J_i$ are finite and thus not elements of the non-principal ultrafilter
  $\mathcal{U}$.

  Let $\sim$ be the equivalence relation on $\Pol(\hrushovski{\sC})$ given by
  $f \sim g$ if there exists an automorphism $u$ of $\hrushovski{\sC}$ such
  that $f = u \circ g$. By~\cite[Proposition~6]{Topo-Birk} and the fact that
  $\hrushovski{\sC}$ is $\omega$-categorical, we know that
  $\Pol(\hrushovski{\sC})^{(2)}/ \sim$ is a compact space.  Hence, the sequence
  $([g_{J_i}]_\sim)_{i\geq 1}$ has an accumulation point. This means that there
  exist automorphisms $(u_i)_{i\geq 1}$ of $\hrushovski{\sC}$ such that the
  sequence  $(u_i\circ g_{J_i})_{i\geq 1}$ has an accumulation point in
  $\Pol(\EC)$, which we denote by $g$.  Since $\xi'$ is a clone homomorphism,
  we have $\xi'(u_i\circ g_{J_i}) = \xi'(g_{J_{i}}) = \pi^2_2$ for all $i\geq
  1$. 
  
  We now prove that  $\xi_n(\gamma(g))$ depends on its first argument for all
  $n\geq 1$.  Let $n\geq 1$ be arbitrary, and let $k\geq 1$ be such that $n\in
  J_k$ and such that $u_k\circ g_{J_k}$ and $g$ agree on a set containing
  tuples from both equivalence classes of $R_n$; this is possible since $g$ is
  an accumulation point of $(u_i\circ g_{J_i})_{i\geq 1}$.  Since
  $D_1(\gamma(u_k\circ g_{J_k}))=J_k$ and $n\in J_k$, we get that
  $\xi_n(\gamma(u_k\circ g_{J_k}))$ depends on its first argument.  Moreover,
  since $u_k\circ g_{J_k}$ and $g$ agree on a set containing tuples from both
  equivalence classes of $R_n$, it follows that $\xi_n(\gamma(g))$ also depends
  on its first argument, which is what we wanted to show.

  Therefore, we obtain by the definition of $\xi'$ that $\xi'(g)=\xi(\gamma(g))
  = \pi_1^2$.  Thus, $\xi'(g)$ is not an accumulation point of $(\xi'(u_i\circ
  g_{J_i}))_{i\geq 1}$, proving that $\xi'$ is not continuous.
\end{proof}

\bibliographystyle{alpha}
\bibliography{encoding,global}

\newcommand{\etalchar}[1]{$^{#1}$}
\def\cprime{$'$}
\begin{thebibliography}{BMO{\etalchar{+}}19}

\bibitem[Bar13]{BartoFinitelyRelated}
Libor Barto.
\newblock Finitely related algebras in congruence distributive varieties have
  near unanimity terms.
\newblock {\em Canadian Journal of Mathematics}, 65(1):3--21, 2013.

\bibitem[Bar18]{barto-cm}
Libor Barto.
\newblock Finitely related algebras in congruence modular varieties have few
  subpowers.
\newblock {\em Journal of the European Mathematical Society}, 20(6):1439--1471,
  2018.

\bibitem[BCO93]{BCO}
Richard Beigel, Richard Chang, and Mitsunori Ogiwara.
\newblock A relationship between difference hierarchies and relativized
  polynomial hierarchies.
\newblock {\em Mathematical Systems Theory}, 26(3):293--310, 1993.

\bibitem[BG08]{BodirskyGrohe}
Manuel Bodirsky and Martin Grohe.
\newblock Non-dichotomies in constraint satisfaction complexity.
\newblock In {\em Proceedings of the International Colloquium on Automata,
  Languages and Programming (ICALP)}, Lecture Notes in Computer Science, pages
  184 --196. Springer Verlag, 2008.

\bibitem[BK09]{tcsps-journal}
Manuel Bodirsky and Jan K\'ara.
\newblock The complexity of temporal constraint satisfaction problems.
\newblock {\em Journal of the ACM}, 57(2):1--41, 2009.
\newblock A conference version appeared in the Proceedings of the Symposium on
  Theory of Computing (STOC) 2008.

\bibitem[BKJ05]{JBK}
Andrei~A. Bulatov, Andrei~A. Krokhin, and Peter~G. Jeavons.
\newblock Classifying the complexity of constraints using finite algebras.
\newblock {\em SIAM Journal on Computing}, 34:720--742, 2005.

\bibitem[BKO{\etalchar{+}}17]{Barto:2017aa}
Libor Barto, Michael Kompatscher, Miroslav Ol\v{s}\'{a}k, Trung~Van Pham, and
  Michael Pinsker.
\newblock The equivalence of two dichotomy conjectures for infinite domain
  constraint satisfaction problems.
\newblock In {\em 2017 32nd Annual ACM/IEEE Symposium on Logic in Computer
  Science (LICS)}, pages 1--12, 2017.

\bibitem[BKO{\etalchar{+}}19]{Eq-oligo-CSP}
Libor Barto, Michael Kompatscher, Miroslav Ol\v{s}\'{a}k, Trung~Van Pham, and
  Michael Pinsker.
\newblock Equations in oligomorphic clones and the constraint satisfaction
  problem for $\omega$-categorical structures.
\newblock {\em Journal of Mathematical Logic}, 19(2):\#1950010, 2019.

\bibitem[BMM18]{MMSNP}
Manuel Bodirsky, Florent Madelaine, and Antoine Mottet.
\newblock A universal-algebraic proof of the complexity dichotomy for monotone
  monadic snp.
\newblock In {\em Proceedings of the 33rd Annual ACM/IEEE Symposium on Logic in
  Computer Science}, LICS '18, pages 105--114, 2018.

\bibitem[BMO{\etalchar{+}}]{BMOOPW}
Manuel Bodirsky, Antoine Mottet, Miroslav Ol\v{s}\'ak, Jakub Opr\v{s}al,
  Michael Pinsker, and Ross Willard.
\newblock $\omega$-categorical structures avoiding height~1 identities.
\newblock {\em Transactions of the American Mathematical Society}.
\newblock To appear. DOI https://doi.org/10.1090/tran/8179.

\bibitem[BMO{\etalchar{+}}19]{Bodirsky:2019aa}
Manuel Bodirsky, Antoine Mottet, Miroslav Ol\v{s}\'ak, Jakub Opr\v{s}al,
  Michael Pinsker, and Ross Willard.
\newblock Topology is relevant (in the infinite-domain dichotomy conjecture for
  constraint satisfaction problems).
\newblock In {\em Proceedings of the Symposium on Logic in Computer Science --
  LICS'19}, 2019.
\newblock Preprint available at arXiv:1901.04237.

\bibitem[BMPP19]{BMPP16}
Manuel Bodirsky, Barnaby Martin, Michael Pinsker, and Andr{\'{a}}s
  Pongr{\'{a}}cz.
\newblock Constraint satisfaction problems for reducts of homogeneous graphs.
\newblock {\em SIAM Journal on Computing}, 48(4):1224--1264, 2019.
\newblock A conference version appeared in the Proceedings of the 43rd
  International Colloquium on Automata, Languages, and Programming, {ICALP}
  2016, pages 119:1--119:14.

\bibitem[BOP18]{wonderland}
Libor Barto, Jakub Opr\v{s}al, and Michael Pinsker.
\newblock The wonderland of reflections.
\newblock {\em Israel Journal of Mathematics}, 223(1):363--398, 2018.

\bibitem[BP15a]{BodPin-Schaefer-both}
Manuel Bodirsky and Michael Pinsker.
\newblock Schaefer's theorem for graphs.
\newblock {\em Journal of the ACM}, 62(3):52 pages (article number 19), 2015.
\newblock A conference version appeared in the Proceedings of STOC 2011, pages
  655--664.

\bibitem[BP15b]{Topo-Birk}
Manuel Bodirsky and Michael Pinsker.
\newblock Topological {B}irkhoff.
\newblock {\em Transactions of the American Mathematical Society},
  367:2527--2549, 2015.

\bibitem[BP16a]{BartoPinskerDichotomy}
Libor Barto and Michael Pinsker.
\newblock The algebraic dichotomy conjecture for infinite domain constraint
  satisfaction problems.
\newblock In {\em Proceedings of LICS'16}, pages 615--622, 2016.
\newblock Preprint arXiv:1602.04353.

\bibitem[BP16b]{canonical}
Manuel Bodirsky and Michael Pinsker.
\newblock Canonical functions: a new proof via topological dynamics.
\newblock Preprint arXiv:1610.09660, 2016.

\bibitem[BP20]{BP19}
Libor Barto and Michael Pinsker.
\newblock Topology is irrelevant (in a dichotomy conjecture for infinite domain
  constraint satisfaction problems).
\newblock {\em SIAM Journal on Computing}, 49(2):365--393, 2020.

\bibitem[BPP]{BPP-projective-homomorphisms}
Manuel Bodirsky, Michael Pinsker, and Andr\'{a}s Pongr\'acz.
\newblock Projective clone homomorphisms.
\newblock {\em Journal of Symbolic Logic}.
\newblock To appear. DOI 10.1017/jsl.2019.23. Preprint arXiv:1409.4601.

\bibitem[Bul17]{BulatovFVConjecture}
Andrei~A. Bulatov.
\newblock A dichotomy theorem for nonuniform {CSP}s.
\newblock In {\em 58th {IEEE} Annual Symposium on Foundations of Computer
  Science, {FOCS} 2017}, pages 319--330, 2017.

\bibitem[CSS99]{CherlinShelahShi}
Gregory Cherlin, Saharon Shelah, and Niangdong Shi.
\newblock Universal graphs with forbidden subgraphs and algebraic closure.
\newblock {\em Advances in Applied Mathematics}, 22:454--491, 1999.

\bibitem[FV93]{FederVardiSTOC}
Tom\'as Feder and Moshe~Y. Vardi.
\newblock Monotone monadic {SNP} and constraint satisfaction.
\newblock In {\em Proceedings of the Symposium on Theory of Computing (STOC)},
  pages 612 -- 622, 1993.

\bibitem[FV99]{FederVardi}
Tom\'as Feder and Moshe~Y. Vardi.
\newblock The computational structure of monotone monadic {SNP} and constraint
  satisfaction: {a} study through {D}atalog and group theory.
\newblock {\em {SIAM} Journal on Computing}, 28:57--104, 1999.

\bibitem[GJK{\etalchar{+}}20]{GJKMP-conf}
Pierre Gillibert, Julius Jonu\v{s}as, Michael Kompatscher, Antoine Mottet, and
  Michael Pinsker.
\newblock Hrushovski's encoding and $\omega$-categorical {CSP} monsters.
\newblock In {\em 47th International Colloquium on Automata, Languages, and
  Programming, {ICALP} 2020, July 8-11, 2020, Saarbr{\"{u}}cken, Germany
  (Virtual Conference)}, volume 168 of {\em LIPIcs}, pages 131:1--131:17.
  Schloss Dagstuhl - Leibniz-Zentrum f{\"{u}}r Informatik, 2020.

\bibitem[GP18]{uniformbirkhoff}
Mai Gehrke and Michael Pinsker.
\newblock Uniform {B}irkhoff.
\newblock {\em Journal of Pure and Applied Algebra}, 222(5):1242--1250, 2018.

\bibitem[HN15]{Hubicka2009UniversalSW}
Jan Hubi\v{c}ka and Jaroslav Ne\v{s}et\v{r}il.
\newblock Universal structures with forbidden homomorphisms.
\newblock In {\em Logic Without Borders: Essays on Set Theory, Model Theory,
  Philosophical Logic and Philosophy of Mathematics}, pages 241--264, 2015.

\bibitem[Hod93]{HodgesLong}
Wilfrid Hodges.
\newblock {\em Model theory}.
\newblock Cambridge University Press, 1993.

\bibitem[KVP18]{CSP-poset}
Michael Kompatscher and Trung Van~Pham.
\newblock A complexity dichotomy for poset constraint satisfaction.
\newblock {\em Journal of Applied Logics}, 5(8):1663--1695, 2018.

\bibitem[Lad75]{Ladner}
Richard~E. Ladner.
\newblock On the structure of polynomial time reducibility.
\newblock {\em Journal of the ACM}, 22(1):155--171, 1975.

\bibitem[MM08]{wnuf}
Mikl\'os Mar\'oti and Ralph McKenzie.
\newblock Existence theorems for weakly symmetric operations.
\newblock {\em Algebra Universalis}, 59(3-4):463--489, 2008.

\bibitem[Pos41]{Post}
Emil~L. Post.
\newblock The two-valued iterative systems of mathematical logic.
\newblock {\em Annals of Mathematics Studies}, 5, 1941.

\bibitem[Sch16]{DBLP:journals/toct/Schmitz16}
Sylvain Schmitz.
\newblock Complexity hierarchies beyond elementary.
\newblock {\em {ACM Transactions on Computation Theory (TOCT)}},
  8(1):3:1--3:36, 2016.

\bibitem[Sig10]{Siggers}
Mark~H. Siggers.
\newblock A strong {M}al'cev condition for varieties omitting the unary type.
\newblock {\em Algebra Universalis}, 64(1):15--20, 2010.

\bibitem[Zhu17]{ZhukFVConjecture}
Dmitriy Zhuk.
\newblock A proof of {CSP} dichotomy conjecture.
\newblock In {\em 58th {IEEE} Annual Symposium on Foundations of Computer
  Science, {FOCS} 2017}, pages 331--342, 2017.

\end{thebibliography}
\end{document}